\documentclass[11pt]{article}
\usepackage{amsmath,amsfonts,euscript,amssymb,amsthm,graphicx,verbatim, esint}
\usepackage{graphicx}
\usepackage{color}
\usepackage{hyperref}
\usepackage{enumerate}

\usepackage{mathtools}
\setlength{\parindent}{0pt}
\setlength{\parskip}{5pt}
\usepackage{caption}
\usepackage{subcaption}
\usepackage{authblk}

\newtheorem{theorem}{Theorem}
\newtheorem{corollary}{Corollary}
\newtheorem{lemma}{Lemma}
\newtheorem{proposition}{Proposition}
\newtheorem{remark}{Remark}

\newcommand{\R}{\mathbb{R}}
\newcommand{\N}{\mathbb{N}}
\newcommand{\Z}{\mathbb{Z}}

\newcommand{\C}{\mathbb{C}}

\newcommand*\lap{\mathop{}\!\mathbin\bigtriangleup}

\newcommand{\ein}{\boldsymbol{\hat{e}}}
\newcommand{\bs}[1]{ \boldsymbol{#1}}
\newcommand{\m}{\boldsymbol{m}}
\newcommand{\x}{\boldsymbol{x}}
\newcommand{\dif}{\ensuremath{\,\mathrm{d}}}
\newcommand{\grad}{\nabla}

\newcommand{\del}{\partial}
\newcommand{\tr}{\mathop{\mathrm{tr}}}

\renewcommand{\Re}{\mathrm{Re}\,}
\renewcommand{\Im}{\mathrm{Im}\,}

\DeclarePairedDelimiter{\abs}{\lvert}{\rvert} 
\DeclarePairedDelimiter{\norm}{\lVert}{\rVert} 
\DeclarePairedDelimiter{\bra}{(}{)} 
\DeclarePairedDelimiter{\pra}{[}{]} 
\DeclarePairedDelimiter{\set}{\{}{\}} 
\DeclarePairedDelimiter{\scp}{\langle}{\rangle} 

\AtEndDocument{%
 \par
 \medskip
 \begin{tabular}{@{}l@{}}%
 Department of Mathematics, RWTH Aachen University, Aachen Germany\\
 \textit{Email:} \texttt{xli@math1.rwth-aachen.de}\\
 \\
 Department of Mathematics and JARA Fundamentals of Future Information Technology,\\
 RWTH Aachen University, Aachen Germany \\
 \textit{Email:} \texttt{melcher@rwth-aachen.de}\\
 \end{tabular}}
 
\title{Lattice solutions in a Ginzburg-Landau model \\
for a chiral magnet}

\author{Xinye Li \qquad Christof Melcher}
\begin{document}
\maketitle

\begin{abstract}
We examine micromagnetic pattern formation in chiral magnets, driven by the competition of Heisenberg exchange, Dzyaloshinskii-Moriya interaction, easy-plane anisotropy and thermodynamic Landau potentials. Based on equivariant bifurcation theory we prove existence of lattice solutions branching off the zero magnetization state and investigate their stability. We observe in particular the stabilization of quadratic vortex-antivortex lattice configurations and instability of hexagonal skyrmion lattice configurations, and we illustrate our findings by numerical studies. 
\end{abstract}

MSC 2010: 37G40, 35Q82, 82D40

Keywords: Micromagnetics, Dzyaloshinskii-Moriya interaction, skyrmions, vortices, lattice solutions, 
equivariant bifurcation, spectral stability.

\section{Introduction and main results}
Dzyaloshinskii-Moriya interaction (DMI) is the antisymmetric counterpart of Heisenberg exchange. It arises from the lack of inversion symmetry in certain magnetic system,
induced by the underlying crystal structures or by the system geometry in the presence of interfaces. In mircromagnetic models DMI arises in form of linear combinations of the so-called Lifshitz invariants, i.e. the components of the chirality tensor $\nabla \m \times \m$, and is therefore sensitive with respect to reflections and independent rotation in the domain and the target space, respectively. It is well-known that DMI gives rise to modulated phases. The basic phenomenon is that the energy of the homogeneous magnetization state $\m=const.$ can be lowered by means of spiralization in form of periodic domain wall arrays, the helical phase.
A prominent form of doubly periodic lattice state arises from the stabilization of topological structures, so-called chiral skyrmions, in two-dimensional chiral ferromagnets. 
Chiral skyrmions are localized structures which are topologically characterized by a unit $\mathbb{S}^2$ degree 
and a well-defined helicity depending on the specific form of DMI. In the presence of sufficiently strong perpendicular anisotropy and/or Zeeman field interaction, 
chiral skyrmions occur as local energy minimiziers in form of isolated topological solitons. Zeeman field interaction enables the possibility of an intermediate regime
where chiral skyrmion embedded into a hexagonal lattice are expected to be globally energy minimizing. Micromagnetic theories including DMI have been proposed in \cite{Bogdanov_Yablonskii:1989} with the idea that skyrmion lattice configurations represent a magnetic analogue of the mixed state in type-II superconductors. Corresponding phase diagrams and stability questions have been examined analytically and numerically in the seminal work \cite{Bogdanov_Hubert:1994, Bogdanov_Hubert:1999}, see \cite{Leonov:2016} for a recent review. A fully rigorous functional analytic theory on the existence, stability, asymptotics, internal structure and exact solvability of isolated chiral skyrmions has recently started to emerge \cite{Melcher:2014, Doering_Melcher:2017, Li_Melcher:2018, KMV_profile:2019, KMV_profile_large:2019, Barton_Ross_Schroers2018, Bernand_Mantel}. \\
Lattice states in two-dimensional Ginzburg-Landau models for chiral magnets including thermodynamic effects have been proposed theoretically in \cite{rossler2006spontaneous}, see also \cite{Muhlbauer_et_al:2009, yu2018transformation}.
Mathematically one may expect a close analogy with Abrikosov's vortex lattice solutions in Ginzburg-Landau models for superconductors \cite{Abrikosov} with a well-established theory in mathematical analysis. The occurrence of Abrikosov lattices in the framework of gauge-periodic solutions of Ginzburg-Landau equations and the optimality of hexagonal lattices have been thoroughly investigated by means of variational methods and bifurcation theory \cite{barany1992bifurcations, odeh1967existence, tzaneteas2009abrikosov, aydi_sandier, aftalion_blanc_nier, sandier_serfaty_CMP}. 

Ginzburg-Landau models in micromagnetics are, in contrast to superconductivity, directly formulated in terms of the physically observable magnetization field. A class of such models has been proposed and examined computationally in \cite{rossler2006spontaneous, Muhlbauer_et_al:2009}. Compared to the purely ferromagnetic case $|\m|= const.$, Ginzburg-Landau models offer a larger variety of patterns, including vortex and half-skyrmion arrays of opposite in-plane winding on square lattices, and skyrmions on hexagonal lattices, see Figure \ref{fig:lattice}. 

\begin{figure}[htbp]
 \centering
 \begin{subfigure}{0.49\textwidth}
 \centering
 \includegraphics[width=\linewidth]{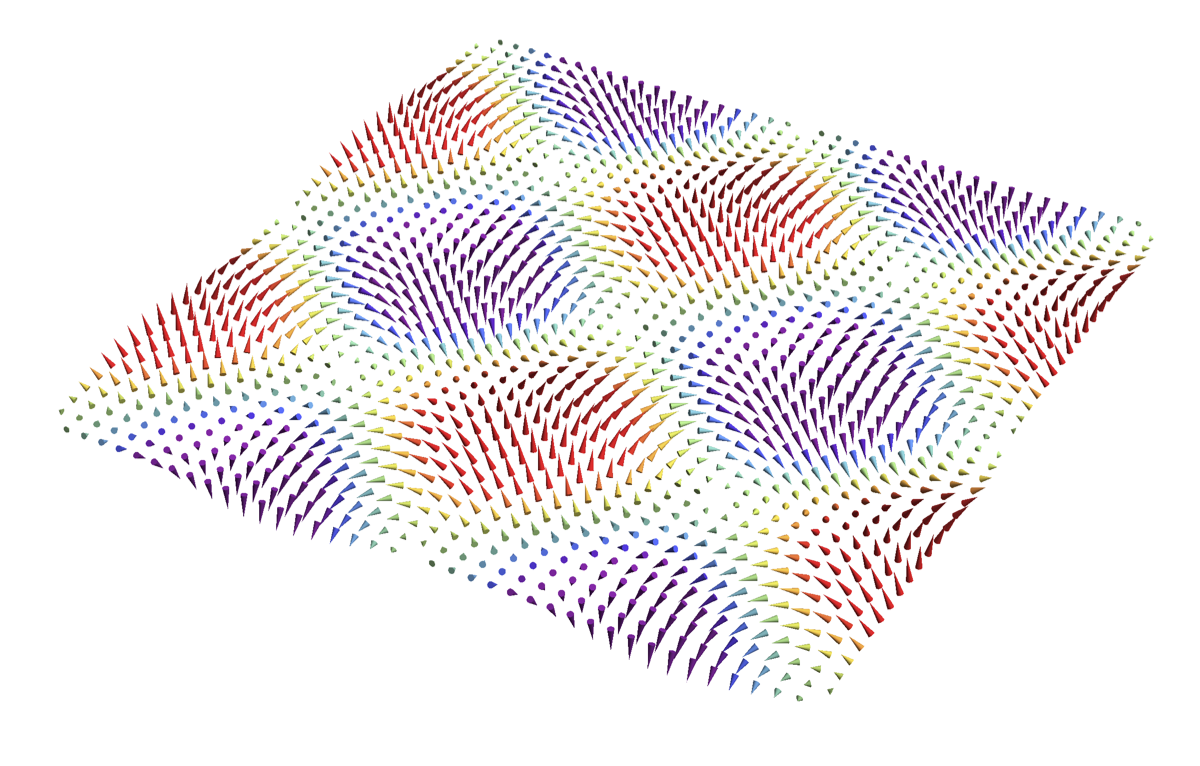}
 \caption{}
\end{subfigure}%
\hfill
\begin{subfigure}{0.49\textwidth}
 \centering
 \includegraphics[width=\linewidth]{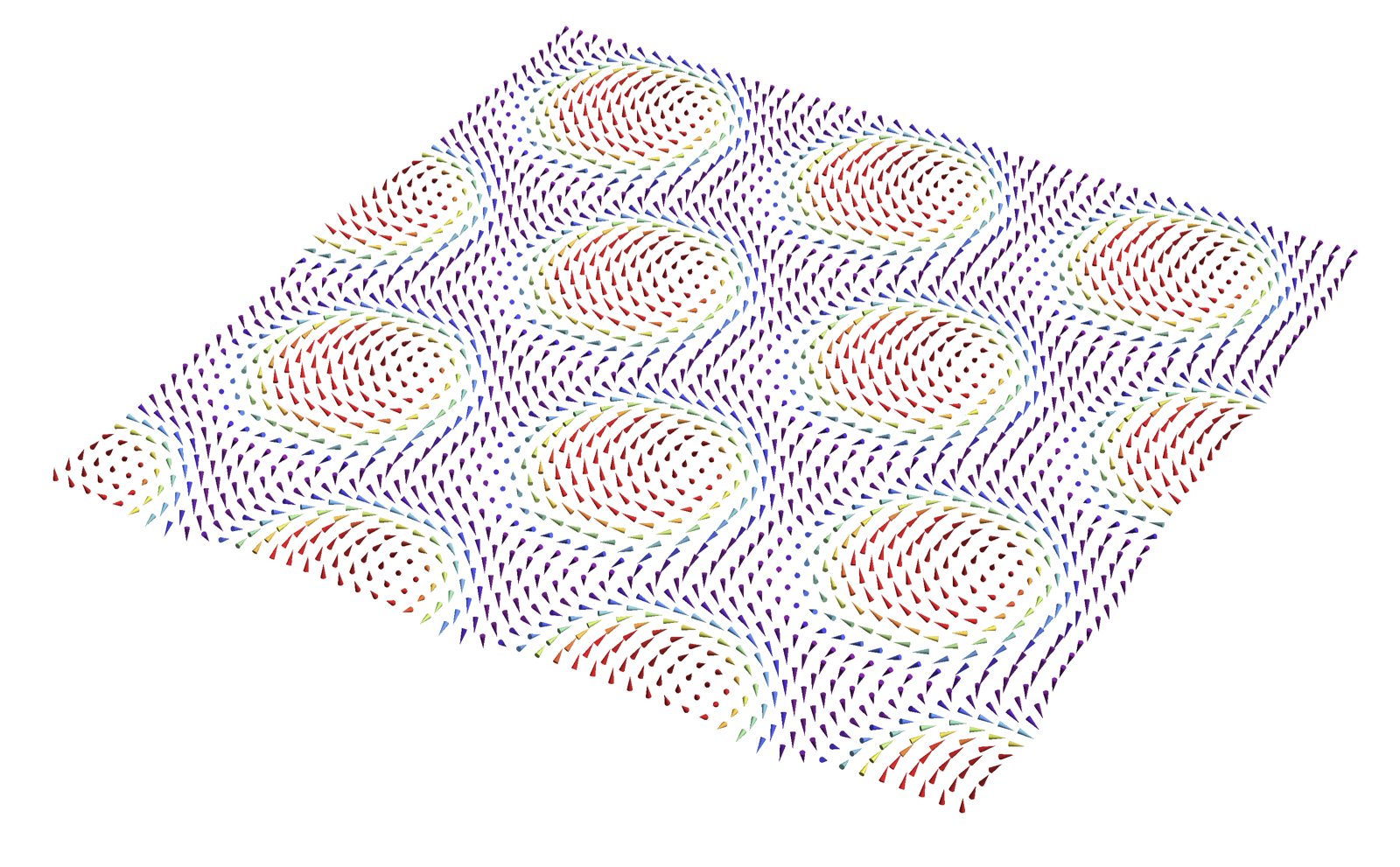}
 \caption{}
\end{subfigure}
\caption{The two-dimensional modulations on (a) a square lattice and (b) a hexagonal lattice.}
\label{fig:lattice}
\end{figure}
We shall examine the occurrence of periodic solutions near the paramagnetic state of zero magnetization, i.e., in a high temperature regime.
Our starting point are magnetization fields $\m:\R^2 \to \R^3$ governed by energy densities of the form
\[
A \abs{\grad \m}^2 + D \, \m \cdot (\grad \times \m) + f(\m) + K(\m \cdot \ein_3)^2.
\]
The Dirichlet term with $A>0$ is referred to as Heisenberg exchange interaction. The helicity term with $D \not=0$ is a prototypical form of DMI arising in the context of noncentrosymmetric cubic crystals. 
Note that for $\m=(m,m_3)$ defined on a two-dimensional domain 
\[
\nabla \times \m = \begin{pmatrix}
-\nabla^\perp m_3 \\ \nabla^\perp \cdot m
\end{pmatrix} \quad \text{where}\quad \nabla^\perp = (-\partial_2, \partial_1).
\]
A key analytical feature induced by DMI is the loss of independent rotational symmetry in magnetization space $\R^3$ and the domain $\R^2$. Finally, the Landau term $f$ is an even polynomial in the modulus $|\m|$
\[
a(T-T_C)\abs{\m}^2 + b\abs{\m}^4 + c \abs{\m}^6+d \abs{\m}^8 + \cdots
\]
where $T-T_C$ is the deviation from the Curie temperature. 
We shall focus on a minimal model with 
\[
a(T-T_C)\abs{\m}^2 + b \abs{\m}^4 \quad \text{where} \quad a,b>0. 
\]
For stability reasons we also include the easy-plane anisotropy $K(\m \cdot \ein_3)^2$ with $K>0$, which typically emerges as a reduced form of magnetostatic stray-field interaction in thin-film geometries, see e.g. \cite{Gioia_James}.

We are interested in magnetization fields $\m$ which are periodic with respect to a two-dimensional lattice 
$r\Lambda$ ($r>0$) with
\[
\Lambda=\frac{2\pi}{\Im \tau}(\Z \oplus \tau\Z)
\]

where 
$\tau$ is a complex number in the fundamental domain of the modular group, referred to as the lattice shape parameter, see Section \ref{sec:lattice}. 
Rescaling space we may assume $r=1$.
The  rescaled energy density reads
\begin{equation}\label{eq:density}
e(\m)= \frac{1}{2}\abs{\grad \m}^2 + \kappa \, \m \cdot (\grad \times \m) + \frac{\lambda}{2}\abs{\m}^2 + \frac{\alpha}{4} \abs{\m}^4+\frac{\beta}{2}(\m \cdot \ein_3)^2
\end{equation}
with dimensionless constants 
\begin{equation} \label{eq:scaling}
\kappa =\frac{Dr}{2A}, \quad \lambda=\frac{a(T-T_C)r^2}{A}, \quad \alpha=\frac{2br^2}{A} \quad \text{and} \quad \beta=\frac{K r^2}{A}.
\end{equation}
Since a sign reversal of the DM density can be achieved by reflections such as $m_3 \mapsto -m_3$, we may assume w.l.o.g that $\kappa>0$.

\paragraph{Euclidean Symmetry.}
The planar model to be examined arises from dimensional reduction.  
It is instructive to return to the original setting and consider the energy density on fields $\m$ from $\R^3$. In the case $\beta=\kappa=0$ we have
invariance with respect to the following action of the euclidean group in $\R^3$
\[
\m(\x) \mapsto {\bs R}\m(\bs{R}^{T}(\x-{\bs t}))
\quad \text{for} \quad \x \in \R^3 \quad \text{and} \quad (\bs{R},{\bs t})  \in O(3) \ltimes \R^3.
\]
In the case $\beta=0$ but $\kappa \not=0$ the reflection symmetry is broken and invariance is restricting to the special euclidean 
group with $\bs{R} \in SO(3)$, see Lemma \ref{lemma:curl}.
Including anisotropy $\beta \not=0$ and $\kappa \not= 0$ amounts to a further restriction of the rotation group to
elements of the form
\begin{equation} \label{eq:group_embedding}
\bs{R} = \begin{pmatrix} R & 0 \\ 0 & \det  R \end{pmatrix} \quad \text{where} \quad R \in O(2)
\end{equation}
defining an embedding $O(2) \hookrightarrow SO(3)$. Restricting to only horizontal translations $\bs t=(t,0)$ with $t \in \R^2$ amounts to
invariance of the two-dimensional model with respect to the action of the euclidean group in $\R^2$
\begin{equation} \label{eq:euclidean}
\m(x) \mapsto \bs{R} \m( R^{T}(x-t))
\quad \text{for} \quad x \in \R^2 \quad \text{and} \quad (R,t)  \in O(2) \ltimes \R^2
\end{equation}
where $\bs{R} \in SO(3)$ is given by \eqref{eq:group_embedding}.

We shall investigate the occurrence and stability of non-trivial $\Lambda$-periodic critical points $\m$ of the 
average energy over a primitive cell $\Omega_\Lambda$
\begin{equation}\label{eq:averageEnergy}
E_\Lambda (\m):= \fint_{\Omega_\Lambda} e(\m) \dif x,
\end{equation}
i.e. of non-trivial $\Lambda$-periodic solutions $\m$ to the Euler-Lagrange equation 
\begin{equation}\label{eq:main}
F(\m,\lambda)=-\lap \m + 2 \kappa \grad \times \m + \lambda\m + \alpha\abs{\m}^2 \m + \beta m_3 \ein_3= 0.
\end{equation}
We first discuss energy minimizing solutions.
\begin{theorem}\label{thm:minimizing}
Suppose $\alpha>0$, $\beta\ge 0$, and $\kappa>0$. 
\begin{enumerate}[(i)]
\item If $\lambda > \kappa^2$, then $\m\equiv 0$ is the unique energy minimizer on every lattice.
\item If $\lambda < \kappa^2$ and $\beta=0$, then the helix
\begin{equation} \label{eq:helix}
\m(x) = M \left(0, \cos(\kappa x_1), \sin(\kappa x_1)\right)
\quad \text{where} \quad M=\sqrt{\frac{\kappa^2-\lambda}{\alpha}}
\end{equation}
 (see Figure \ref{fig:helix}) is up to a joint rotation the unique energy minimizer on suitable lattices. 
\end{enumerate}
\end{theorem}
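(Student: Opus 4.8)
My plan is to obtain a sharp lower bound for $E_\Lambda$ by diagonalising the quadratic part mode by mode and then to exhibit the helix as the configuration that saturates it. Expanding a $\Lambda$-periodic field as $\m(x)=\sum_{k\in\Lambda^*}\widehat{\m}_k\,e^{ik\cdot x}$ over the dual lattice $\Lambda^*$, Parseval's identity turns the Dirichlet, mass, anisotropy and DM contributions into $\sum_k q_k(\widehat{\m}_k)$ with the Hermitian form
\[
q_k(\xi)=\tfrac12\big(\abs{k}^2+\lambda\big)\abs{\xi}^2+\kappa\,\xi^{*}\!\cdot(i\,k\times\xi)+\tfrac{\beta}{2}\abs{\xi_3}^2,
\]
where I regard $k=(k_1,k_2,0)$ and use $\widehat{\grad\times\m}_k=i\,k\times\widehat{\m}_k$. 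The operator $\xi\mapsto i\,k\times\xi$ is Hermitian with spectrum $\{-\abs{k},0,\abs{k}\}$, so discarding the nonnegative anisotropy term gives $q_k(\xi)\ge\tfrac12\big(\abs{k}^2-2\kappa\abs{k}+\lambda\big)\abs{\xi}^2\ge\tfrac12(\lambda-\kappa^2)\abs{\xi}^2$, the second bound being attained exactly when $\abs{k}=\kappa$.

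For part (i), since $\lambda>\kappa^2$ every $q_k$ is strictly positive definite; together with the nonnegative quartic term $\tfrac{\alpha}{4}\abs{\m}^4$ this yields $E_\Lambda(\m)\ge\tfrac12(\lambda-\kappa^2)\fint\abs{\m}^2>0$ unless $\m\equiv0$, independently of $\tau$, which is the assertion. For part (ii), with $\beta=0$ and $\delta:=\kappa^2-\lambda>0$, I would combine the spectral bound with the quartic term through Jensen's inequality $\fint\abs{\m}^4\ge\big(\fint\abs{\m}^2\big)^2$: setting $s=\fint\abs{\m}^2$ this produces
\[
E_\Lambda(\m)\ \ge\ -\tfrac{\delta}{2}\,s+\tfrac{\alpha}{4}\,s^2\ \ge\ -\tfrac{\delta^2}{4\alpha},
\]
a bound valid on every lattice.

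It then remains to show the helix attains this value and is the unique minimiser up to symmetry. A direct computation shows that \eqref{eq:helix} is a Beltrami field, $\grad\times\m=-\kappa\m$, with constant modulus $\abs{\m}\equiv M$; substituting $\grad\times\m=-\kappa\m$, $\abs{\grad\m}^2=\kappa^2M^2$ and $s=M^2=\delta/\alpha$ into \eqref{eq:density} gives the constant density $-\delta^2/(4\alpha)$, so both inequalities above are simultaneously saturated. Conversely, equality forces (a) $\abs{\m}^2$ constant, from the Jensen step; (b) every active Fourier mode to satisfy $\abs{k}=\kappa$ and to be polarised along the negative-helicity eigenvector $\widehat{\m}_k\parallel(\hat n^\perp-i\ein_3)$ with $\hat n=k/\kappa$, from the spectral step; and (c) $s=\delta/\alpha$. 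On a suitable lattice — one whose dual contains a vector of length $\kappa$, so that \eqref{eq:helix} is admissible — I would turn (a)--(c) into a rigidity statement.

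The step I expect to be the genuine obstacle is the uniqueness, namely deducing from (a)--(c) that only a single pair $\pm k$ of wavevectors is active. For two helical modes in directions $\theta_1,\theta_2$ one computes $\hat e(\theta_1)\cdot\overline{\hat e(\theta_2)}=\cos(\theta_1-\theta_2)+1$, which vanishes only for antipodal directions; hence a genuine superposition of distinct directions generates a nonconstant $\abs{\m}^2$ through the $e^{i(k_1-k_2)\cdot x}$ cross term, contradicting (a) — the delicate point being to rule out cancellations against other cross terms on lattices carrying several dual vectors of length $\kappa$. Once a single pair survives, (b) fixes the helical polarisation and (c) fixes the amplitude $M$, so that $\m$ is the helix \eqref{eq:helix} composed with an element of the joint symmetry group \eqref{eq:euclidean} — a domain rotation turning $k$ together with the associated target rotation, plus a translation absorbing the phase. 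This identifies the minimiser uniquely up to joint rotation and completes (ii).
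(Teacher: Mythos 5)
Your Fourier route is genuinely different from the paper's proof, which never expands in modes for this theorem: the paper rewrites the exchange--DMI density pointwise via Frank's formula (Lemma \ref{lemma:Frank}) as a sum of squares plus a null Lagrangian, obtains $E_\Lambda(\m)\ge\fint_{\Omega_\Lambda} \frac{\abs{\m}^2}{2}\bigl(\lambda-\kappa^2+\frac{\alpha}{2}\abs{\m}^2\bigr)\dif x$ with equality iff $\nabla\times\m+\kappa\m=0$ and $\beta=0$, minimizes the integrand pointwise in $\abs{\m}^2$, and then identifies constant-modulus Beltrami fields as helices via Ericksen's rigidity theorem (Lemma \ref{lemma:Beltrami}). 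Your part (i), your lower bound $-(\kappa^2-\lambda)^2/(4\alpha)$ via the spectral estimate plus Jensen, the verification that the helix attains it, and the equality conditions (a)--(c) are all correct; the trade-off is that the paper's rigidity argument is pointwise and works on any connected domain (at the cost of elliptic regularity and a rank-one analysis of $\nabla\m$), whereas yours is elementary and purely algebraic but tied to the periodic setting.

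The genuine gap is the one you flag yourself: you reduce uniqueness in (ii) to the assertion that cross terms cannot cancel when two non-antipodal wavevectors are active, and you leave that assertion unproven --- so, as written, the proof is incomplete exactly at the step the paper outsources to Lemma \ref{lemma:Beltrami}. The claim is true and can be closed in a few lines. Let $S\subset\Lambda^*\cap\set{\abs{k}=\kappa}$ be the active set (closed under $k\mapsto-k$), with polarizations $\bs e_k=\tfrac{1}{\sqrt{2}}(\hat k^\perp-i\ein_3)$. First, diagonal frequencies $2k$ never occur in $\abs{\m}^2$, because each polarization is an isotropic vector: $\bs e_k\cdot\bs e_k=0$ (unconjugated product). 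Second, for fixed $\nu\neq0$ there is at most one unordered pair $\set{p,q}$ with $\abs{p}=\abs{q}=\kappa$ and $p+q=\nu$: the circles $\abs{u}=\kappa$ and $\abs{\nu-u}=\kappa$ meet in at most two points, and those two points form exactly one such pair (equivalently, a chord of a circle is determined by its midpoint $\nu/2$ unless the midpoint is the center, i.e.\ unless $\nu=0$, which is the antipodal case). Hence no cancellation between distinct pairs can occur, and if $\pm k_1,\pm k_2\in S$ with $k_2\neq\pm k_1$, the coefficient of $e^{i(k_1+k_2)\cdot x}$ in $\abs{\m}^2$ equals $2c_{k_1}c_{k_2}\,\bs e_{k_1}\cdot\bs e_{k_2}=c_{k_1}c_{k_2}\bigl(\cos(\theta_1-\theta_2)-1\bigr)\neq0$, where $\theta_i$ is the polar angle of $k_i$; this contradicts (a). Note that the sum frequency with the unconjugated product is the right object here: your conjugated product $\cos(\theta_1-\theta_2)+1$ governs the difference frequency and degenerates precisely for antipodal pairs, which you must allow. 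With this lemma inserted, (a)--(c) force a single pair $\pm k$, and your concluding identification of $\m$ with the helix \eqref{eq:helix} up to a joint rotation and a translation absorbing the phase goes through.
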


In the isotropic case $\beta=0$, Theorem \ref{thm:minimizing} indicates the existence of only two phases, paramagnetic or helical, while the picture in the 
anisotropic case $\beta>0$ is incomplete. The occurrence of helical phases is common to other mathematically related theories for condensed matter such as the Oseen-Frank 
model for chiral liquid crystals (see e.g. \cite{Virga_book}) or the Gross-Pitaevskii model for spin-orbit coupled Bose-Einstein condensates (see e.g. \cite{aftalion_rodiac}).
Helical structures in chiral ferromagnets are also discussed in \cite{Davoli_DiFratta2020, Muratov_Slastikov2016}.

Here we are interested in doubly periodic solutions. Given a lattice $\Lambda$, we aim to find $\lambda$ and a non-trivial $\Lambda$-periodic solution $\m$ 
of \eqref{eq:main} at $\lambda$. We call such pairs $(\m, \lambda)$ \emph{$\Lambda$-lattice solutions} and will prove the following:

\begin{theorem}\label{thm:main} 
Suppose $\alpha >0$, $\beta \ge 0$, $\kappa > 0$, and $\Lambda = \frac{2\pi}{\Im \tau}(\Z \oplus \tau\Z)$.
Then \eqref{eq:main} has a branch of $\Lambda$-lattice solution $(\m_s,\lambda_s)$, analytically parameterized by a real parameter $s$ near $0$,  in a neighbourhood of $\m_0 \equiv 0$ 
and 
\begin{equation}\label{eq:lambda0}
\lambda_0 = - 1 - \frac{\beta}{2}\pm \sqrt{4\kappa^2+\frac{\beta^2}{4}},
\end{equation}
provided $\lambda_0$ satisfies the non-resonances condition that 
\begin{equation}\label{eq:resonance}
\lambda_0 \neq - \abs{\omega}^2 - \frac{\beta}{2}\pm \sqrt{4\kappa^2\abs{\omega}^2+\frac{\beta^2}{4}}
\end{equation}
for any $\omega \in \Lambda^* \setminus \mathbb{S}^1$ where $\Lambda^*$ denotes the dual lattice, see Sec. \ref{sec:dual}.

The branch $(\m_s, \lambda_s)$ has the form
\begin{equation}\label{eq:BifurcationSolution}
\m_s = s\bs \varphi_1 + O(s^3), \quad
\lambda_s = \lambda_0 + s^2 \nu_2 + O(s^4)
\end{equation}
as $s \to 0$ with $\nu_2<0$ and $\bs \varphi_1$ explicitly determined. Furthermore, we have
\begin{equation}\label{eq:lambda2}
\nu_2= -\alpha \frac{\scp{\abs{\bs \varphi_1}^4}}{\scp{\abs{\bs \varphi_1}^2}^2} < 0
\end{equation}
and
\begin{equation}\label{eq:energy}
E_{\Lambda}(\m_s, \lambda_s)= \frac{s^4}{4}\bra*{-\alpha\frac{\scp{\abs{\bs \varphi_1}^4}}{\scp{\abs{\bs \varphi_1}^2}^2}}+O(s^6)
\end{equation}
where $\scp{\cdot}$ denote the average over a primitive cell $\Omega_\Lambda$.
\end{theorem}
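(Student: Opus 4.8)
The plan is to treat \eqref{eq:main} as a bifurcation problem $F(\m,\lambda)=0$ with the trivial branch $F(0,\lambda)\equiv 0$, and to produce the nontrivial branch by the analytic version of the Crandall--Rabinowitz theorem after reducing to a simple eigenvalue by symmetry. First I would set $F(\cdot,\lambda)\colon H^2\to L^2$ on $\Lambda$-periodic $\R^3$-fields; the nonlinearity is polynomial, so $F$ is real-analytic. Since \eqref{eq:main} is the Euler--Lagrange equation of \eqref{eq:averageEnergy}, the linearization
\[
L_\lambda:=D_\m F(0,\lambda)=-\lap+2\kappa\,\grad\times{}+\lambda+\beta(\,\cdot\,\ein_3)\ein_3
\]
is self-adjoint, and as a constant-coefficient elliptic operator on the torus it is Fredholm of index $0$ with $\operatorname{ran}L_\lambda=(\ker L_\lambda)^\perp$. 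Passing to Fourier series over $\Lambda^*$, a mode $\bs v\,e^{i\omega\cdot x}$ is mapped to $\hat L_\lambda(\omega)\bs v$ with an explicit Hermitian symbol $\hat L_\lambda(\omega)$ built from the representation of $\grad\times$ recorded in the introduction. A direct computation gives $\det\hat L_\lambda(\omega)=\mu\,\bra*{\mu(\mu+\beta)-4\kappa^2\abs\omega^2}$ with $\mu=\abs\omega^2+\lambda$, so the symbol degenerates exactly along $\lambda=-\abs\omega^2-\tfrac\beta2\pm\sqrt{4\kappa^2\abs\omega^2+\tfrac{\beta^2}4}$; evaluating on the shell $\abs\omega=1$ (which always contains $(0,\pm1)\in\Lambda^*$) yields the bifurcation value \eqref{eq:lambda0}.

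Next I would identify $\ker L_{\lambda_0}$. A short calculation shows that every nonzero $\omega\in\Lambda^*$ has $\abs\omega\ge 1$, with equality attained; together with the non-resonance condition \eqref{eq:resonance} this forces $\hat L_{\lambda_0}(\omega)$ to be invertible for all $\omega\in\Lambda^*$ with $\abs\omega\neq1$, so that $\ker L_{\lambda_0}$ is spanned by the modes $\bs v_\omega e^{i\omega\cdot x}$, $\omega\in\Lambda^*\cap\mathbb{S}^1$, where $\bs v_\omega\propto(-2i\kappa\omega_2,\,2i\kappa\omega_1,\,\mu_0)$ with $\mu_0=\lambda_0+1$ and $\overline{\bs v_\omega}=\bs v_{-\omega}$. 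This finite-dimensional kernel carries the action \eqref{eq:euclidean} of the lattice symmetry group. I would then choose an isotropy subgroup $H$ (a reflection from \eqref{eq:euclidean} together with lattice translations, enlarged by the lattice point group in the square or hexagonal cases) whose fixed-point subspace meets $\ker L_{\lambda_0}$ in a single real line, spanned by the explicit $H$-symmetric superposition $\bs\varphi_1$ of these modes. Because $F$ is $H$-equivariant it preserves $\operatorname{Fix}(H)$, and restricted there the bifurcation is simple: $\dim\ker=\operatorname{codim}\operatorname{ran}=1$, while the transversality $\del_\lambda L_{\lambda_0}\bs\varphi_1=\bs\varphi_1\notin\operatorname{ran}L_{\lambda_0}=(\ker L_{\lambda_0})^\perp$ holds since $\scp{\abs{\bs\varphi_1}^2}>0$. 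The analytic Crandall--Rabinowitz theorem then delivers the branch $(\m_s,\lambda_s)$, analytic in $s$ near $0$. I expect this step---pinning down the correct isotropy subgroup so that the equivariant kernel becomes one-dimensional---to be the main obstacle; the symbol computation and the expansion below are routine.

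To obtain \eqref{eq:BifurcationSolution} and \eqref{eq:lambda2} I would insert $\m_s=s\bs\varphi_1+s^2\bs\varphi_2+\cdots$ and $\lambda_s=\lambda_0+s\nu_1+s^2\nu_2+\cdots$ into $L_{\lambda_s}\m_s+\alpha\abs{\m_s}^2\m_s=0$ and compare powers of $s$. The oddness $F(-\m,\lambda)=-F(\m,\lambda)$ gives the reflection $\m_{-s}=-\m_s$, $\lambda_{-s}=\lambda_s$ of the branch, so $\m_s$ has only odd and $\lambda_s$ only even powers of $s$; in particular $\nu_1=0$ and $\bs\varphi_2=0$. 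Order $s$ reproduces $L_{\lambda_0}\bs\varphi_1=0$, and order $s^3$ reads $L_{\lambda_0}\bs\varphi_3+\nu_2\bs\varphi_1+\alpha\abs{\bs\varphi_1}^2\bs\varphi_1=0$. Pairing with $\bs\varphi_1$ and using self-adjointness ($\scp{L_{\lambda_0}\bs\varphi_3,\bs\varphi_1}=0$) gives the solvability relation $\nu_2\scp{\abs{\bs\varphi_1}^2}+\alpha\scp{\abs{\bs\varphi_1}^4}=0$, that is \eqref{eq:lambda2} after the normalization $\scp{\abs{\bs\varphi_1}^2}=1$; since $\alpha>0$ and both averages are positive, $\nu_2<0$.

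Finally, for the energy \eqref{eq:energy} I would bypass the full expansion and use the variational structure. Writing $E_\Lambda(\m)=\tfrac12\scp{L_\lambda\m,\m}+\tfrac\alpha4\scp{\abs\m^4}$ and using that on the branch $L_{\lambda_s}\m_s=-\alpha\abs{\m_s}^2\m_s$, the pairing with $\m_s$ yields $\scp{L_{\lambda_s}\m_s,\m_s}=-\alpha\scp{\abs{\m_s}^4}$, hence the exact identity $E_\Lambda(\m_s,\lambda_s)=-\tfrac\alpha4\scp{\abs{\m_s}^4}$. Substituting $\m_s=s\bs\varphi_1+O(s^3)$ gives $\scp{\abs{\m_s}^4}=s^4\scp{\abs{\bs\varphi_1}^4}+O(s^6)$ and therefore \eqref{eq:energy}; in particular $E_\Lambda(\m_s,\lambda_s)<0=E_\Lambda(0)$ to leading order, so the bifurcating lattice states lie below the paramagnetic state in energy.
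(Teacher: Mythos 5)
Your proposal follows essentially the paper's strategy: Fourier identification of the bifurcation values (the paper's Lemma \ref{la:lambda}), symmetry reduction of the degenerate kernel to a one-dimensional fixed subspace, a simple-eigenvalue bifurcation theorem on that subspace (the paper invokes the equivariant branching lemma, which is precisely your Crandall--Rabinowitz argument on $\mathrm{Fix}(H)$), and an order-by-order expansion for $\nu_2$ matching Appendix \ref{sec:higherorder}. Two of your shortcuts are genuine improvements in economy: the parity argument giving $\nu_1=0$, $\bs \varphi_2=0$ at once, and the exact identity $E_\Lambda(\m_s,\lambda_s)=-\frac{\alpha}{4}\scp{\abs{\m_s}^4}$ obtained by pairing the Euler--Lagrange equation with $\m_s$, which replaces the paper's term-by-term energy expansion. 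However, there is a genuine gap at exactly the step you flag as "the main obstacle" and then skip: exhibiting, for each lattice type, an axial isotropy subgroup and verifying that its fixed subspace in $\ker L_{\lambda_0}$ is a single line, with $\bs \varphi_1$ computed. This is the entire content of the paper's Proposition \ref{prop:fredholm} (the longest proof in Section \ref{sec:bifurcation}) and is what makes the theorem's clause "$\bs \varphi_1$ explicitly determined" true. Moreover, your parenthetical guess at $H$ would not work as stated: lattice translations act trivially on $\Lambda$-periodic fields, so they contribute nothing, and "enlarging by the lattice point group" overshoots --- on the square lattice the fixed subspace of the full holohedry $D_4$ in the kernel is $\set{0}$. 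What is actually needed is lattice-dependent and must be verified case by case: the in-plane rotation by $\pi$ (acting on the target by $\mathrm{diag}(-1,-1,1)$) together with the \emph{continuous} $x_1$-translations for the helical branch \eqref{eq:Sigma1}; the $D_2$ generated by the diagonal reflections, acting on the target as improper rotations flipping $m_3$, for the vortex--antivortex branch \eqref{eq:Sigma2}; and the cyclic group $\Z_6$ for the skyrmion branch \eqref{eq:Sigma3}. Distinct admissible subgroups yield genuinely distinct branches, so this choice cannot remain abstract.

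A second, smaller defect: your own symbol determinant $\mu\bra*{\mu(\mu+\beta)-4\kappa^2\abs{\omega}^2}$ with $\mu=\abs{\omega}^2+\lambda$ has the additional root $\mu=0$, corresponding to longitudinal (curl-free) modes proportional to $\omega\, e^{i\omega\cdot x}$, and this root is \emph{not} excluded by the non-resonance condition \eqref{eq:resonance}; hence your claim that \eqref{eq:resonance} forces invertibility of $\hat L_{\lambda_0}(\omega)$ for all $\abs{\omega}\neq 1$ does not follow from your formula. For the "$+$" root this never bites, since $\lambda_0>-1\ge-\abs{\omega}^2$ for all nonzero $\omega\in\Lambda^*$, but for the "$-$" root $\lambda_0=-\abs{\omega}^2$ can occur, enlarging the kernel by gradient modes, some of which are even $\Sigma_1$-invariant. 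The paper inherits the same blind spot (the proof of Lemma \ref{la:lambda} divides by $\abs{v}^2+\lambda$), so this is not a deviation from the paper's argument; but since you wrote the determinant down, the inconsistency is visible in your text and should either be repaired by adding $\lambda_0\neq-\abs{\omega}^2$ to the non-resonance hypothesis or restricted to the "$+$" root where it is vacuous.
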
 
The morphology of bifurcation solutions is related to symmetry properties of the underlying lattice. 
Depending on this, the first-order bifurcation solution $\bs \varphi_1$, arising from the first critical wave number,
indicates a threefold pattern formation:
\begin{enumerate}
\item \textit{helical pattern}: 
exists on all lattices, the first-order bifurcation solution \eqref{eq:phi1} is given by a single helical mode (see Figure \ref{fig:helix});
\item \textit{vortex-antivortex pattern}: exists on equilateral lattices,
the first-order bifurcation solution \eqref{eq:phi2} is a superposition of two helices propagating in different directions, see Figure \ref{fig:lattice}(a);
\item \textit{skyrmionic pattern}: exists only on hexagonal lattices,
the first-order bifurcation solution \eqref{eq:phi3} is a superposition of three helices propagating in distinct directions, see Figure \ref{fig:lattice}(b).
\end{enumerate}

\begin{figure}[htbp]
 \centering
 \includegraphics[width=0.6\textwidth]{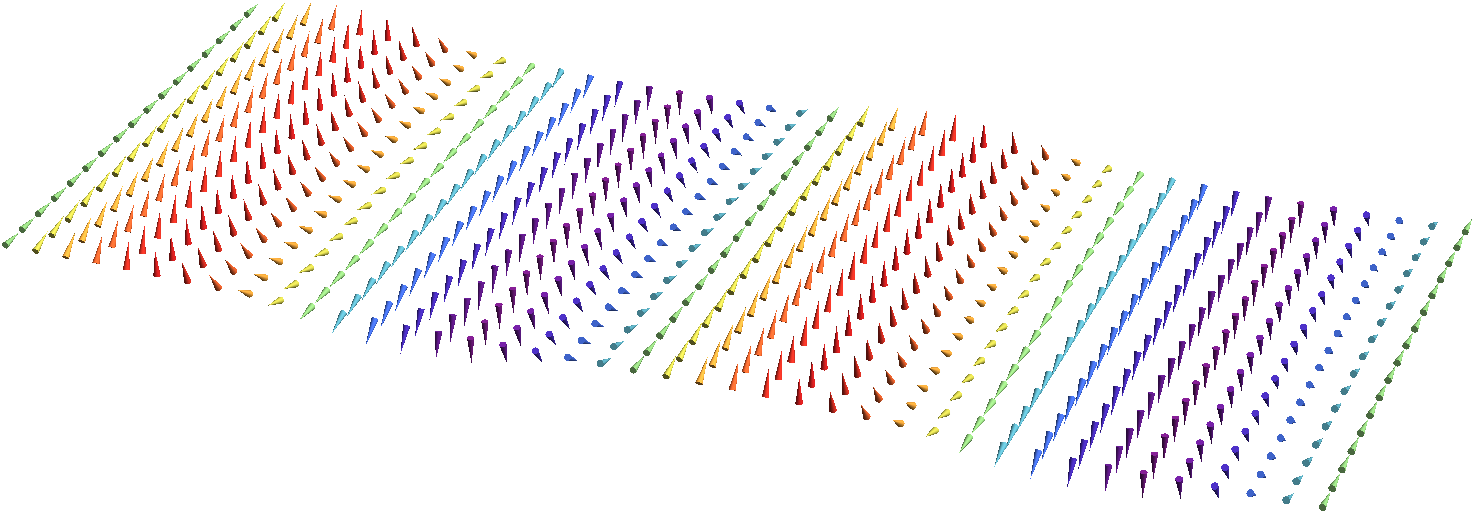} 
\caption{The one-dimensional modulations on a non-equilateral lattice.}
\label{fig:helix}
\end{figure}
We say that a bifurcation solution $(\m_s,\lambda_s)$ is \emph{linearly stable} under $\Lambda$-periodic perturbations if the linearized operator $L_s=D_{\m} F(\m_s,\lambda_s)$ is non-negative with a kernel that is only induced by translations, i.e.,
\[
\ker L_s = \mathrm{span}\, \{ \del_1 \m_s, \del_2 \m_s\}.
\]

Reducing the domain of the bifurcation parameter $s$ if necessary we shall prove the following stability properties
of quadratic vortex-antivortex and hexagonal skyrmion bifurcation solutions obtained in Theorem \ref{thm:main}:

\begin{theorem}\label{thm:stability} 
 Let $\alpha>0$, $\beta \ge 0$ and $\kappa>0$ satisfying 
\begin{equation}\label{eq:lambda0+}
\lambda_0 = - 1 - \frac{\beta}{2} + \sqrt{4\kappa^2+\frac{\beta^2}{4}}>0
\end{equation}
and
\begin{equation}\label{eq:stability_kappa}
4\kappa^2 \le \sqrt{4\kappa^2+\frac{\beta^2}{4}}+\sqrt{4\kappa^2 \gamma^2 +\frac{\beta^2}{4}}
\end{equation}
where $\gamma$ is the second critical wave number, i.e., 
\[
\gamma= \left\{ 
\begin{array}{ll}
 \abs{\tau} & \text{for } \abs{\tau}>1,\\
\sqrt{2-2\cos \theta}, & \text{for } \abs{\tau}=1, \, \frac{\pi}{3}< \theta \le \frac{\pi}{2},\\
\sqrt{3} & \text{for } \abs{\tau}=1, \, \theta = \frac{\pi}{3},
\end{array}
 \right.
\]
depending on the lattice shape $\tau=\abs{\tau}e^{i\theta}$.
\begin{enumerate}[(i)]
\item On square lattices, the vortex-antivortex lattice solution is linearly stable under $\Lambda$-periodic perturbations if $\beta>\frac{4}{\sqrt{3}}\kappa$ and unstable if $\beta<\frac{4}{\sqrt{3}}\kappa$.
\item On hexagonal lattices, the skyrmion lattice solutions are unstable under $\Lambda$-periodic perturbations for any $\beta \ge 0$.
\end{enumerate}
If $\lambda_0 \le 0$ bifurcation solutions are unstable, independently of $\beta \ge 0$, \eqref{eq:stability_kappa}, 
and the lattice shape.
\end{theorem}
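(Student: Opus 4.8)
The plan is to locate the spectrum of $L_s=D_{\m}F(\m_s,\lambda_s)$ by treating it as an analytic perturbation of the constant--coefficient operator $L_0=D_{\m}F(0,\lambda_0)$. Substituting the branch \eqref{eq:BifurcationSolution} and differentiating, I would first record, on $\Lambda$-periodic fields,
\begin{equation*}
L_s\bs h=L_0\bs h+s^2\bigl(\nu_2\bs h+\alpha\abs{\bs\varphi_1}^2\bs h+2\alpha(\bs\varphi_1\cdot\bs h)\bs\varphi_1\bigr)+O(s^4)=:L_0\bs h+s^2M_2\bs h+O(s^4).
\end{equation*}
Expanding in Fourier series over $\Lambda^*$, the symbol of $L_0$ at a wave vector $\omega$ splits off the longitudinal direction $\omega$ with eigenvalue $\abs{\omega}^2+\lambda_0$, while on the plane spanned by $\omega^\perp$ and $\ein_3$ it reduces to the Hermitian block $\left(\begin{smallmatrix}\abs{\omega}^2+\lambda_0 & -2i\kappa\abs{\omega}\\ 2i\kappa\abs{\omega} & \abs{\omega}^2+\lambda_0+\beta\end{smallmatrix}\right)$ with eigenvalues $\abs{\omega}^2+\lambda_0+\tfrac{\beta}{2}\mp\sqrt{4\kappa^2\abs{\omega}^2+\beta^2/4}$. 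By \eqref{eq:lambda0} the lower branch vanishes precisely at $\abs{\omega}=1$, so with $p:=1+\lambda_0=-\tfrac{\beta}{2}+\sqrt{4\kappa^2+\beta^2/4}$ the kernel $N:=\ker L_0$ is spanned by the critical polarizations $\bs v(\omega)=2i\kappa\,\omega^\perp+p\,\ein_3$ attached to the shortest dual vectors, $\omega^\perp$ being $\omega$ rotated by $\tfrac{\pi}{2}$.

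Second, I would establish a spectral gap around $N$. Since $\Lambda^*$ contains no vectors of length in $(0,1)$, only the constant mode and the second shortest length $\gamma$ need control. A direct manipulation shows that \eqref{eq:stability_kappa} is exactly the positivity of the lower branch at $\abs{\omega}=\gamma$, while the constant mode contributes the eigenvalues $\lambda_0$ and $\lambda_0+\beta$; hence, under $\lambda_0>0$ and the non-resonance \eqref{eq:resonance}, $L_0$ is strictly positive on $N^\perp$, and so is $L_s$ for $s$ small. Stability of $\m_s$ is therefore decided by the $O(s^2)$ eigenvalues emanating from $N$, which by first--order degenerate perturbation theory equal $s^2$ times the eigenvalues of $P_NM_2|_N$, with $P_N$ the orthogonal projection onto $N$. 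As $F$ is the $L^2$-gradient of $E_\Lambda$, this reduced operator is the Hessian on the critical amplitudes $z_j$ of the reduced functional $\Phi(z)=\tfrac12(\lambda-\lambda_0)\sum_j\abs{z_j}^2+\tfrac{\alpha}{4}\scp{\abs{\m}^4}$, evaluated at the equal--amplitude configuration defining $\bs\varphi_1$.

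Third, and this is where the real work lies, I would evaluate $\scp{\abs{\m}^4}$ on the critical modes. Writing $\m=\tfrac12\sum_j(z_je^{i\omega_j\cdot x}\bs v_j+\text{c.c.})$ and decomposing $\abs{\m}^2$ into lattice harmonics, only products with vanishing total wave vector survive the average; since the nonlinearity is purely quartic there are no resonant phase--dependent terms on either the square or the hexagonal lattice, so that $\scp{\abs{\m}^4}=c_1\sum_j\abs{z_j}^4+c_2\sum_{i<j}\abs{z_i}^2\abs{z_j}^2$. The delicate point is to collect, with correct multiplicities, the mean, the second harmonics $\pm2\omega_j$, and the cross harmonics $\pm(\omega_i\pm\omega_j)$, each of which arises from two ordered index pairs and must be counted accordingly. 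Using $\bs v_i\cdot\bs v_j=p^2-4\kappa^2\cos\phi_{ij}$ and $\bs v_i\cdot\overline{\bs v_j}=p^2+4\kappa^2\cos\phi_{ij}$, this yields $c_1=\tfrac14(p^2+4\kappa^2)^2+\tfrac18(p^2-4\kappa^2)^2$ and, after simplification,
\begin{equation*}
2c_1-c_2=\tfrac14(p^2-4\kappa^2)^2-p^4-16\kappa^4\cos^2\phi ,
\end{equation*}
with $\phi$ the angle between interacting modes. The equal--amplitude critical point has amplitude Hessian with neutral translational kernel and nontrivial eigenvalues proportional to $2c_1\pm c_2$ (square, two modes, $\phi=\tfrac{\pi}{2}$), respectively $2c_1+2c_2$ and $2c_1-c_2$ with multiplicity two (hexagonal, three modes, $\phi=\tfrac{2\pi}{3}$).

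Finally I would read off the signs. On the square lattice $\cos\phi=0$ gives $2c_1-c_2=\tfrac14(p^2-4\kappa^2)^2-p^4>0$ exactly when $p^2<\tfrac43\kappa^2$; since $p$ decreases strictly in $\beta$ and equals $\tfrac{2}{\sqrt3}\kappa$ at $\beta=\tfrac{4}{\sqrt3}\kappa$, this is precisely the stated threshold, positivity of the full reduced Hessian giving linear stability and its failure instability. On the hexagonal lattice $\cos^2\phi=\tfrac14$ forces $2c_1-c_2=-\tfrac34p^4-2\kappa^2p^2<0$ for every $\beta\ge0$, so the skyrmion Hessian carries a doubly degenerate negative direction and the solution is unstable. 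The case $\lambda_0\le0$ is separate and easier: a spatially constant in--plane $\bs h$ satisfies $L_0\bs h=\lambda_0\bs h$, whence $\scp{L_s\bs h,\bs h}=\lambda_0\abs{\bs h}^2+O(s^2)<0$ for small $s$ once $\lambda_0<0$, yielding instability irrespective of $\beta$, \eqref{eq:stability_kappa} and the lattice shape, with the borderline $\lambda_0=0$ recovered by continuity. The principal obstacle throughout is the harmonic bookkeeping in $\scp{\abs{\m}^4}$, together with the justification that this finite--dimensional reduced Hessian faithfully captures the sign of the near--zero spectrum of the full vector operator $L_s$.
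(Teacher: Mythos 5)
Your proposal is correct in substance and reaches exactly the conclusions of the theorem; I checked your key formulas, and they are right: $\bs v_i\cdot\bs v_j=p^2-4\kappa^2\cos\phi_{ij}$, $\bs v_i\cdot\overline{\bs v_j}=p^2+4\kappa^2\cos\phi_{ij}$, $2c_1-c_2=\tfrac14(p^2-4\kappa^2)^2-p^4-16\kappa^4\cos^2\phi$, positivity on the square lattice equivalent to $p^2<\tfrac43\kappa^2$, i.e. $A^2=4\kappa^2/p^2>3$, which is precisely the paper's condition $\tilde C=\frac{A^2-3}{A^2+1}>0$, i.e. $\beta>\tfrac{4}{\sqrt3}\kappa$; and on the hexagonal lattice $2c_1-c_2=-\tfrac34p^4-2\kappa^2p^2<0$ for all $\beta\ge0$. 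The overall skeleton is the same as the paper's: the Fourier spectral gap for $L_0$ under \eqref{eq:lambda0+} and \eqref{eq:stability_kappa} (your step two is exactly Lemma \ref{la:L0}, including the factorization of $\mu_{0,\omega-}$), followed by the sign of the $O(s^2)$ motion of the kernel eigenvalues. The execution differs, though: the paper never diagonalizes the full reduced operator, but tracks one symmetry-adapted direction at a time by an implicit-function-theorem argument (Lemma \ref{la:Ls} for $\bs\varphi_1$, repeated for the $\Z_4$-invariant combination $\tilde{\bs\varphi}_1$ on the square lattice), and on the hexagonal lattice it simply exhibits the test field $\bs\phi_{1,v^{(4)}}-\bs\phi_{1,v^{(5)}}$ with $\scp{L_s\bs\phi,\bs\phi}<0$. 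Your route --- degenerate perturbation theory on all of $N=\ker L_0$ at once, identified with the Hessian of a reduced amplitude functional with universal angle-dependent coefficients $c_1,c_2$ --- is more systematic: it yields all critical eigenvalues at once, explains structurally why the dangerous direction is the amplitude-difference mode, and handles square and hexagonal lattices with one formula. What it costs is the justification that the finite-dimensional reduction faithfully gives the near-zero spectrum (Kato/Schur-complement reduction for the analytic self-adjoint family $L_s=L_0+s^2M_2+O(s^4)$), which the paper's one-dimensional reductions make explicit; both arguments ultimately lean on the same standard perturbation theory. (Up to normalization conventions the two computations agree; only signs and thresholds matter, and these coincide.)

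Two caveats. First, your treatment of the borderline case $\lambda_0=0$ ``by continuity'' is not an argument: at $\lambda_0=0$ the constant modes join $\ker L_0$, and one would have to compute the sign of their $O(s^2)$ shift. But the case is in fact vacuous: $\omega=0\in\Lambda^*\setminus\mathbb{S}^1$ produces the resonant values $\lambda=0$ and $\lambda=-\beta$ in \eqref{eq:resonance}, so $\lambda_0=0$ violates the non-resonance hypothesis of Theorem \ref{thm:main} and no bifurcation branch exists there; you should invoke this rather than continuity. Second, on the hexagonal lattice the phrase ``neutral translational kernel'' conceals a mismatch: there are three phase directions but only two translations, because $\omega_1+\omega_2=\omega_3$; the extra phase $\theta_1+\theta_2-\theta_3$ is not generated by translations and is flat only through quartic order (a sixth-order invariant $\mathrm{Re}(z_1^2z_2^2\bar z_3^2)$ exists). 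This does not affect your conclusion, since instability there is already decided at order $s^2$ by $2c_1-c_2<0$, but the blanket identification of phase modes with translations would become a genuine gap in any situation where stability on the hexagonal lattice had to be proved.
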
 

\begin{remark}
Helical bifurcation solutions on the square lattice have the same transition point: 
stability for $\beta<\frac{4}{\sqrt{3}} \kappa$ and instability for $\beta>\frac{4}{\sqrt{3}} \kappa$, see \cite{Li2020}.
\end{remark}

\begin{corollary}The quadratic vortex-antivortex lattice configuration exists and is stable if
\[
\beta>\frac{4}{\sqrt{3}}\kappa, \quad \beta \ge \sqrt{16\kappa^4-24\kappa^2+1} \quad\text{and}\quad \beta<4\kappa^2-1.
\]
The admissible set of $(\kappa,\beta)$ is not empty, see Figure \ref{fig:para_range}.
\end{corollary}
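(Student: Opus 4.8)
The plan is to recognize the corollary as the conjunction of the three hypotheses of Theorem~\ref{thm:stability}(i), specialized to the square lattice $\tau = i$, and to rewrite each as an explicit polynomial inequality in $(\kappa,\beta)$ before checking that the resulting region is nonempty. The three ingredients are the positivity requirement \eqref{eq:lambda0+} that guarantees a bifurcation at $\lambda_0 > 0$, the spectral requirement \eqref{eq:stability_kappa} evaluated at the square-lattice second critical wave number, and the stability sign $\beta > \frac{4}{\sqrt 3}\kappa$ of Theorem~\ref{thm:stability}(i). The third is already in the desired form, so the substance is to reduce the first two.

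For the positivity I would rewrite \eqref{eq:lambda0+} as $\sqrt{4\kappa^2 + \beta^2/4} > 1 + \beta/2$; both sides are nonnegative for $\beta \ge 0$, so squaring is an equivalence and the $\beta^2/4$ terms cancel, leaving $4\kappa^2 > 1 + \beta$, i.e. the displayed bound $\beta < 4\kappa^2 - 1$. For the spectral condition I would first record that the square lattice has $\theta = \pi/2$, hence $\gamma = \sqrt{2 - 2\cos(\pi/2)} = \sqrt 2$, so \eqref{eq:stability_kappa} reads $4\kappa^2 \le P + Q$ with $P = \sqrt{4\kappa^2 + \beta^2/4}$ and $Q = \sqrt{8\kappa^2 + \beta^2/4}$. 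The key structural fact is $Q^2 - P^2 = 4\kappa^2$, which makes the algebra collapse: isolating $Q$ and squaring turns $4\kappa^2 \le P + Q$ into the single-radical condition $P \ge 2\kappa^2 - \tfrac{1}{2}$, and squaring once more yields $\beta^2 \ge 16\kappa^4 - 24\kappa^2 + 1$, the second displayed bound.

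The delicate point throughout is the legitimacy of squaring, and I would track it carefully. At the first stage, because $Q = \sqrt{P^2 + 4\kappa^2}$, the inequality $4\kappa^2 - P \le Q$ is equivalent to $P \ge 2\kappa^2 - \tfrac{1}{2}$ in all cases: when $4\kappa^2 - P \le 0$ the original inequality is automatic and $P \ge 4\kappa^2 \ge 2\kappa^2 - \tfrac12$ holds too, while for $4\kappa^2 - P > 0$ squaring is reversible and the cross terms cancel by $Q^2 - P^2 = 4\kappa^2$. At the second stage, $P \ge 2\kappa^2 - \tfrac12$ is automatic when $\kappa \le \tfrac12$ and otherwise squares to the quartic inequality; in particular whenever $16\kappa^4 - 24\kappa^2 + 1 < 0$ the second bound imposes no constraint at all.

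Finally, to establish nonemptiness I would note that the first and third bounds confine $\beta$ to the open window $\frac{4}{\sqrt 3}\kappa < \beta < 4\kappa^2 - 1$, which is nonempty exactly when $4\kappa^2 - \frac{4}{\sqrt 3}\kappa - 1 > 0$, i.e. for $\kappa > \frac{\sqrt 3}{2}$. For $\kappa$ just above $\frac{\sqrt3}{2}$ the radicand $16\kappa^4 - 24\kappa^2 + 1$ is negative, so the second bound is vacuous and the whole window is admissible. Exhibiting the explicit point $(\kappa,\beta) = (1,\tfrac52)$ --- for which $\lambda_0 = \tfrac{\sqrt{89}-9}{4} > 0$, $\tfrac52 > \tfrac{4}{\sqrt3} \approx 2.31$, and $16\kappa^4 - 24\kappa^2 + 1 = -7 < 0$ --- pins down a concrete admissible pair and matches Figure~\ref{fig:para_range}.
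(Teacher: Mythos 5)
Your proposal is correct and follows exactly the route the paper intends: the corollary is stated without proof as a direct consequence of Theorem \ref{thm:stability}(i), and your algebra correctly translates the hypotheses \eqref{eq:lambda0+} and \eqref{eq:stability_kappa} (with $\gamma=\sqrt{2}$ on the square lattice) into the displayed inequalities $\beta<4\kappa^2-1$ and $\beta^2\ge 16\kappa^4-24\kappa^2+1$, with careful tracking of the squaring steps that the paper leaves implicit. Your nonemptiness argument via the window $\frac{4}{\sqrt 3}\kappa<\beta<4\kappa^2-1$ for $\kappa>\frac{\sqrt 3}{2}$, where the quartic radicand is negative and the second condition is vacuous, together with the explicit admissible point $(\kappa,\beta)=(1,\tfrac52)$, is a valid and complete verification of the final claim.
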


\begin{figure}[htbp]
 \centering
 \includegraphics[width=0.4\textwidth]{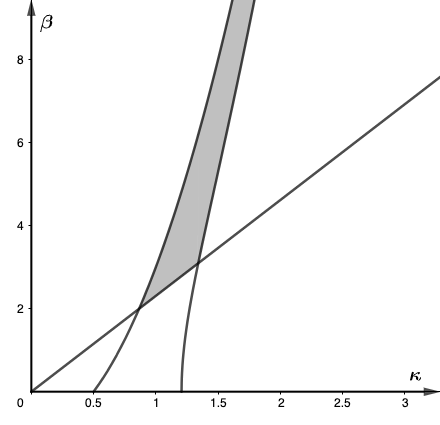} 
\caption{The admissible set of $(\kappa,\beta)$ for a stable quadratic vortex-antivortex lattice (indicated by the grey shaded area).}\label{fig:para_range}
\end{figure}

The existence and stability results are only an initial step towards understanding the stabilization of two-dimensional lattice solutions in chiral magnets.
In particular, stability of lattice solutions is only examined under the simplest perturbations which preserve lattice periodicity.
A more general stability result in the style of \cite{sigal_tzaneteas_2018} is beyond the scope of this work and requires a different approach.

The mathematical framework for our construction of lattice solutions is the equivariant branching lemma
\cite{Chossat_Lauterbach, golubitsky2012singularities}, a concept of symmetry-breaking bifurcation based on a particular type
of (axial) symmetry group. More precisely, letting
\begin{equation} \label{eq:Gamma_Lambda}
\Gamma_\Lambda=P_\Lambda \ltimes T_\Lambda,
\end{equation}
where $P_\Lambda \subset O(2)$ is the point group (or holohedry) of $\Lambda$ 
and $T_{\Lambda}=\R^2 / \Lambda$ is the torus of translations modulo $\Lambda$, the euclidean symmetry \eqref{eq:euclidean} induces an action of $\Gamma_\Lambda$ on spaces of $\Lambda$-periodic fields $\m$.  For each lattice we identify  all isotropy subgroups $\Sigma \subset \Gamma_\Lambda$ (up to conjugacy) so that the fixed subspace of $\Sigma$ in the kernel of linearized operator $D_{\m}F(\m_0,\lambda_0)$ is one-dimensional. By means of an equivariant Lyapunov-Schmidt procedure, \eqref{eq:main} reduces to a one-dimensional bifurcation equation. 
The implicit function theorem provides a solutions to the bifurcation equation in the one-dimensional fixed subspace of $\Sigma$, from which a solution to \eqref{eq:main} can be reconstructed. This solution is the bifurcation solution and inherits the symmetries featured by $\Sigma$.

In Section \ref{sec:preliminaries} we shall briefly recall the representation of lattices in the plane with an emphasis on symmetry
and Fourier series which are key to our bifurcation argument. In Section \ref{sec:ground_state} we shall derive energy bounds proving Theorem \ref{thm:minimizing}.
Solving the linearized version of Eq. \eqref{eq:main} explicitly by Fourier methods is the key ingredient to the proof of Theorem \eqref{thm:main} in Section \ref{sec:bifurcation} 
and provides insight about the morphology and topology of bifurcation solutions. In Section \ref{sec:stability} we investigate the stability of bifurcation solutions under $\Lambda$-periodic perturbations proving Theorem \ref{thm:stability} . 
Finally in Section \ref{sec:num} we validate our analytical results by a series of numerical simulations of gradient flows using a modified Crank-Nicolson scheme.
\section{Preliminaries}\label{sec:preliminaries}

\subsection{Representation of lattices}\label{sec:lattice}
Recall that a planar lattice $\Lambda$ is the integer span of two linearly independent vectors $t_1, t_2 \in \R^2$, i.e., 
\[
\Lambda= \{ m_1 t_1 + m_2 t_2: m \in \Z^2\}.
\]
Given $x \in \R^2$, a primitive cell of $\Lambda$ is a set of the form
\[
\Omega_{\Lambda} = \set*{x + a_1 t_1 + a_2 t_2, \, a_1,a_2 \in [0,1]}.
\]
The lattice basis $\{t_1, t_2\}$ is clearly non-unique. Identifying $\R^2 \cong \C$, however, the complex ratio $\tau \in \C$ of two basis vectors of $\Lambda$ 
contained in the fundamental domain
\begin{equation}\label{eq:tau}
\begin{aligned}
 T=\left\{ \abs{\tau} \ge 1: \Im \tau >0, -\frac{1}{2} < \Re \tau \le \frac{1}{2} 
\text{ and } \Re \tau \ge 0 \text{ if } \abs{\tau}=1\right \}
\end{aligned}
\end{equation}
parametrizes the lattice shape uniquely, see e.g. \cite{ahlfors1953complex}. Writing $\tau=\abs{\tau} e^{i \theta}$,
the range of $\abs{\tau}$ and $\theta$ corresponding to fundamental domain \eqref{eq:tau} is
\begin{equation}\label{eq:tau1}
\frac{\pi}{3} \le \theta < \frac{2\pi}{3} \text{ if } \abs{\tau} \ge 1
\quad\text{and}\quad
\frac{\pi}{3} \le \theta \le \frac{\pi}{2} \text{ if } \abs{\tau}=1.
\end{equation}
A lattice is called equilateral if $|\tau|=1$, where the borderline cases $\theta=\pi/2$ and $\theta = \pi/3$ are referred to as square and hexagonal lattice, respectively; 
other equilateral lattices are called rhombic.

There are two distinct types of symmetries preserving the lattice: the lattice translations and the holohedry group $P_{\Lambda}$, which is a finite subgroup of $O(2)$.  
Non-equilateral lattices have holohedry $\mathbb{Z}_2$ (oblique) or $D_2$ (rectangular); rhombic lattices have holohedry $D_2$; square lattices have holohedry $D_4$; hexagonal lattices have holohedry $D_6$, where $D_k$ is the dihedral group generated by rotation through $2\pi/k$ and a reflection, see e.g. \cite{Chossat_Lauterbach}.

\begin{figure}[htbp]
 \centering
 \includegraphics[width=0.6\textwidth]{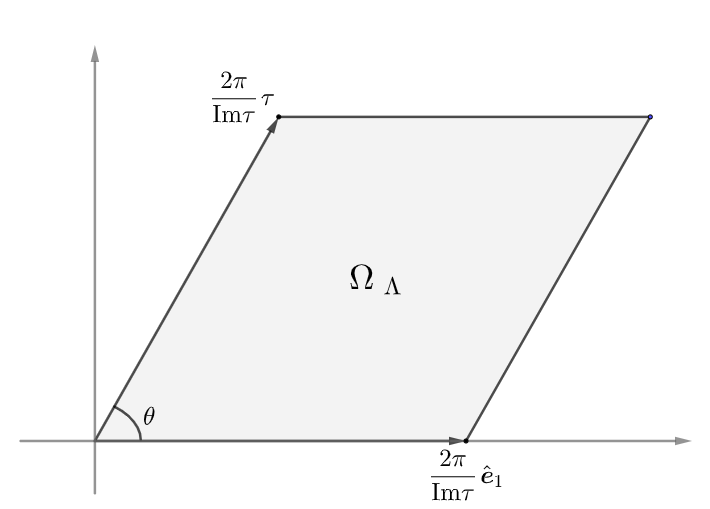}
\caption{The lattice cell $\Omega_{\Lambda}$ determined by lattice basis $\set{\frac{2\pi}{\Im \tau} \ein_1, \frac{2\pi}{\Im \tau} \tau}$}
\end{figure}

\subsection{Function spaces on lattices}
As the governing energy densities and the Euler-Lagrange equations are invariant under translation and joint rotation 
we can fix one basis vector of the lattice $\Lambda$ as $2\pi r \ein_1$ so that $\Lambda=2\pi r (\Z \oplus \tau \Z)$, is uniquely characterized by $r$ and $\tau$.
Upon rescaling we can arrange $\Lambda=\frac{2\pi}{\Im \tau} (\Z \oplus \tau \Z)$ spanned by $\set{\frac{2\pi}{\Im \tau} \ein_1, \frac{2\pi}{\Im \tau} \tau}$
and consider the rescaled density \eqref{eq:density} containing only dimensionless parameters.

For $\Lambda$-periodic functions or fields $f,g$ on $\R^2$ we denote the average by
\[
\scp{f} := \fint_{\Omega_\Lambda} f(x) \dif x = \frac{1}{\abs{\Omega_\Lambda}} \int_{\Omega_\Lambda} f(x) \dif x,
\]
the $L^2$ scalar product by
\[
\scp{f,g} := \fint_{\Omega_\Lambda} f(x) \cdot g(x) \dif x,
\]
and the $L^2$ norm by $\|f\|:= \sqrt{\scp{f, f}}$ once existent. Accordingly we define
\[
L^2_\Lambda :=\set{ f: \R^2 \to \R^3 \text{ $\Lambda$-periodic with } \|f\|< \infty}
\]
and for $k \in \N$ the Sobolev spaces
\[
H^k_\Lambda :=\set{f \in L^2_\Lambda: \del^{v}f \in L^2_\Lambda \text{ for all $|v|=k$}}
\]
which are subspaces of $L^2_{\rm loc}:=L^2_{\rm loc}(\R^2;\R^3)$ and $H^k_{\rm loc}:=H^k_{\rm loc}(\R^2;\R^3)$, respectively.
Thanks to Sobolev embedding the average energy \eqref{eq:averageEnergy}
defines an analytic functional on $H^1_{\Lambda}$. 
Critical points $\m \in H^2_{\Lambda}$ of $E_{\Lambda}$ satisfy 
\[
F(\m,\lambda)=0
\]
where $F: H^2_{\rm loc} \times \R \to L^2_{\rm loc}$ is the nonlinear operator given by
\begin{equation}\label{eq:main1}
F(\m,\lambda)=-\lap \m + 2 \kappa \grad \times \m + \lambda\m + \alpha\abs{\m}^2 \m +\beta m_3 \ein_3.
\end{equation}

\subsection{Dual lattice and Fourier series}\label{sec:dual}

Fourier expansion on $\Lambda$ requires the notion of dual lattice given by 
\[
\Lambda^* =\{v \in \R^2 : u \cdot v \in 2\pi \Z \text{ for all }u \in \Lambda\}.
\] 
In particular $\Lambda^*=\mathcal{A}^{-T}\Z^2$ for $\Lambda=2\pi \mathcal{A} \Z^2$ where in our setting
\begin{equation} \label{eq:lattice_matrix}
\mathcal{A}= \frac{1}{\abs{\tau}\sin \theta}\begin{pmatrix}1 & \abs{\tau}\cos \theta \\ 0 & \abs{\tau}\sin \theta \end{pmatrix}
\quad \text{and} \quad 
\mathcal{A}^{-T} = \begin{pmatrix} \abs{\tau}\sin \theta & 0 \\ -\abs{\tau}\cos \theta & 1 \end{pmatrix}.
\end{equation}

In the equilateral case $\abs{\tau}=1$, dual lattices remain square if $\theta=\pi/2$ and hexagonal if $\theta = \pi/3$. \\
For $f \in L^2_{\Lambda}$ and $v \in \Lambda^*$ Fourier coefficients 
are defined as
\[
\tilde{f}(v) = \fint_{\Omega_\Lambda} f(x)e^{- i \, v\cdot x} \dif x,
\]
and the following Fourier expansion
\[
f(x)=\sum_{v \in \Lambda^*} \tilde{f}(v) e^{ i \, v \cdot x}
\]
holds true in the $L^2$ sense along with Parseval's identity
\[
 \fint_{\Omega_\Lambda} \abs{f(x)}^2 \dif x = \sum_{v \in \Lambda^*} \abs{\tilde{f}(v)}^2.
\]

\subsection{Equivariance and lattice symmetry}\label{sec:symmetry}

The action of an element $\gamma =(R,t)$ of the group $\Gamma_\Lambda = P_\Lambda \ltimes T_\Lambda$, the semi-direct product of the holohedry of 
$\Lambda$ and translations modulo $\Lambda$, on a field $\m:\R^2 \to \R^3$ given by
\[
(\gamma \bullet \m)(x)= \bs{R} \m(R^{-1} (x-t)) \quad \text{for}\quad x \in \R^2
\]
where the corresponding $\bs{R} \in SO(3)$ is determined by \eqref{eq:group_embedding},
is an isometry on $H^k_\Lambda$ for every $k \in \N_0$, and the operator \eqref{eq:main1} is $\Gamma_\Lambda$-\emph{equivariant} in the sense that 
\[
F(\gamma \bullet \m, \lambda)=\gamma \bullet F(\m,\lambda)
\]
for all $\gamma \in \Gamma_\Lambda$, $\m \in H^2_\Lambda$ and $\lambda \in \R$, see Lemma \ref{lemma:curl}

The symmetry of a field 
$\bs \phi \in H^2_{\Lambda}$ is given in terms of the \emph{isotropy subgroup} 
\[
\Sigma_{\bs \phi} = \set{\sigma \in \Gamma_\Lambda: \sigma \bullet \bs \phi=\bs \phi},
\]
i.e., the largest subgroup of $\Gamma_\Lambda$ which fixes $\bs \phi$. Given a subspace $X \subset H^2_\Lambda$, the \emph{fixed subspace} associated to a subgroup $\Sigma \subseteq \Gamma_\Lambda$ in $X$ is 
\[
\mathrm{Fix}_X(\Sigma)=\set*{\bs \phi \in X: \; \sigma \bullet \bs \phi = \bs \phi}.
\]
$L^2_\Lambda$ orthogonal projections on such invariant subspaces of $H^2_\Lambda$ are equivariant.
Equivariance therefore propagates to the Lyapunov-Schmidt decomposition enabling a reduction of the bifurcation equation to $\mathrm{Fix}_{X_0}(\Sigma)$ where $X_0$ is the kernel of the linearization of $F$ at a bifurcation point $(0, \lambda_0)$, see e.g. \cite{Chossat_Lauterbach}. Bifurcation solutions arising from the equivariant branching lemma turn out to have full $\Sigma$ symmetry and are unique in this class. 

Anticipating the results in Sec. \ref{sec:bifurcation}, we introduce a set of isotropy subgroups $\Sigma_i \subseteq \Gamma_\Lambda$ on different lattice types, which play a central role in our bifurcation argument. In Proposition \ref{prop:fredholm}, we shall prove that the $\Sigma_i$ are indeed axial, i.e. have one-dimensional fixed-point subspace in the kernel $X_0$.

On non-equilateral lattices ($\abs{\tau} > 1$) we consider the symmetry group
\begin{equation}\label{eq:Sigma1}
\Sigma_1= \mathbb{Z}_2 \ltimes T_1
\end{equation}
where $T_1$ are the translations in $x_1$-direction and $\mathbb{Z}_2= \set{I, R}$ with associated $SO(3)$ elements
\[
\bs{I}=\begin{pmatrix}
1 & 0 & 0 \\ 0 & 1 & 0 \\ 0 & 0 & 1
\end{pmatrix}
\quad\text{and}\quad
\bs{R}=\begin{pmatrix}
-1 & 0 & 0 \\ 0 & -1 & 0 \\ 0 & 0 & 1
\end{pmatrix}.
\]
The corresponding bifurcation solutions feature a $\Sigma_1$-invariant pattern of helices propagating in the $x_2$ 
 direction (Figure \ref{fig:helix}).

Equilateral lattices ($\abs{\tau}=1$ and $\frac{\pi}{3} \le \theta \le \frac{\pi}{2}$) have an additional symmetry given by reflections across the diagonals of the lattice cell.
Therefore, in this case, we consider, in addition to $\Sigma_1$,  the symmetry group
\begin{equation}\label{eq:Sigma2}
\Sigma_2= \set{I, R, R_\theta^+, R_\theta^-} = D_2
\end{equation}
where the associated $SO(3)$ elements are
\[
\bs{R}_\theta^+=\begin{pmatrix}
\cos \theta & \sin \theta & 0 \\ \sin \theta & -\cos \theta & 0 \\ 0 & 0 & -1
\end{pmatrix}
\quad\text{and}\quad
\bs{R}_\theta^-=\begin{pmatrix}
-\cos \theta & -\sin \theta & 0 \\ -\sin \theta & \cos \theta & 0 \\ 0 & 0 & -1
\end{pmatrix}.
\]
Bifurcation solutions corresponding to $\Sigma_2$ are doubly periodic array of vortices and antivortices (Figure \ref{fig:lattice}(a)).
 
For the hexagonal lattice ($\abs{\tau}=1$ and $\theta=\frac{\pi}{3}$), in addition to $\Sigma_1$ and $\Sigma_2$, the symmetry group considered is the cyclic subgroup $\mathbb{Z}_6$ of $D_6$ generated by rotation through $\frac{\pi}{3}$
\begin{equation}\label{eq:Sigma3}
\Sigma_3 = \set{R_k, \, k=0,1,\cdots,5}
\end{equation}
where the corresponding $SO(3)$ elements are
\[
\bs{R}_k = \begin{pmatrix}
\cos \frac{k\pi}{3} & - \sin \frac{k\pi}{3} & 0 \\
\sin \frac{k\pi}{3} & \cos \frac{k\pi}{3} & 0 \\
0 & 0 & 1
\end{pmatrix}.
\]
Bifurcation solutions corresponding to $\Sigma_3$ are hexagonal skyrmion lattices (Figure \ref{fig:lattice}(b)).

\section{Energy bounds on lattices}\label{sec:ground_state}
For a lattice $\Lambda$ and $e(\m)$ given by \eqref{eq:density}, we examine ansatz-free lower bounds 
\begin{equation} \label{eq:energy_functional}
E_\Lambda(\m) = \fint_{\Omega_\Lambda} e(\m) \dif x.
\end{equation}
We start by expressing the total exchange density
\[
e_0(\m) = \frac 1 2 |\nabla \m|^2 + \kappa \, \m \cdot (\nabla \times \m)
\]
as a sum of sign definite terms and a null Lagrangian also know as Frank's formula in the theory
of liquid crystals, see e.g. \cite{Virga_book} Chapter 3.

\begin{lemma} \label{lemma:Frank}
For $\m \in H^1_{\rm loc}(\R^3; \R^3)$ the following holds
\[
e_0(\m) = \frac 1 2 
\bra*{ \bra*{\nabla \cdot \m}^2 + |\nabla \times \m + \kappa \m|^2 - \kappa^2 |\m|^2
+ \nabla \cdot \pra*{ (\m \cdot \nabla) \m - \m (\nabla \cdot \m) } }
\]
 in the sense of distributions. In particular for $\m \in H^1_{\Lambda}$
\[
E_\Lambda(\m) \ge \fint_{\Omega_\Lambda}
 \frac{\abs{\m}^2}{2} \bra*{\lambda -\kappa^2 +\frac{\alpha}{2} \abs{\m}^2} \, \dif x
\]
with equality if and only if $\nabla \times \m + \kappa \m=0$ and $\beta=0$.
\end{lemma}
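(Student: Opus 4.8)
The plan is to prove the pointwise identity first as a purely algebraic statement and then integrate it over a primitive cell, where the total divergence drops out by periodicity and the remaining gradient terms are manifestly sign-definite.

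First I would establish the classical decomposition of the Dirichlet density,
\[
\abs{\grad \m}^2 = (\grad \cdot \m)^2 + \abs{\grad \times \m}^2 + \grad \cdot \pra*{ (\m \cdot \grad)\m - \m (\grad \cdot \m)}.
\]
For smooth $\m$ this is a direct computation in coordinates: the identity $\sum_k \eps_{kij}\eps_{klm}=\delta_{il}\delta_{jm}-\delta_{im}\delta_{jl}$ gives $\abs{\grad \times \m}^2 = \abs{\grad \m}^2 - \sum_{i,j}\del_i m_j\,\del_j m_i$, while expanding $\grad\cdot[(\m\cdot\grad)\m - \m(\grad\cdot\m)]$ produces $\sum_{i,j}\del_i m_j\,\del_j m_i - (\grad\cdot\m)^2$ after the two second-order terms cancel by symmetry of mixed partials. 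Summing reproduces $\abs{\grad\m}^2$. For $\m\in H^1_{\rm loc}$ the identity holds in the distributional sense: the quadratic gradient terms lie in $L^1_{\rm loc}$ and the vector field inside the divergence has components that are products of $\m$ (in every $L^p_{\rm loc}$, $p<\infty$, by Sobolev embedding) with $\grad\m\in L^2$, hence in $L^1_{\rm loc}$, so one argues by mollification and passage to the limit.

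Next I would complete the square in the helicity term, $\abs{\grad\times\m+\kappa\m}^2 = \abs{\grad\times\m}^2 + 2\kappa\,\m\cdot(\grad\times\m) + \kappa^2\abs{\m}^2$, solve for $\abs{\grad\times\m}^2$, and substitute into the decomposition. The cross term $-\kappa\,\m\cdot(\grad\times\m)$ thereby produced cancels exactly the DMI contribution $+\kappa\,\m\cdot(\grad\times\m)$ in $e_0(\m)=\tfrac12\abs{\grad\m}^2+\kappa\,\m\cdot(\grad\times\m)$, yielding Frank's formula as stated. For the integral bound I view a $\Lambda$-periodic $\m$ as a field on $\R^3$ independent of $x_3$, so the null Lagrangian reduces to a two-dimensional divergence which integrates to zero over $\Omega_\Lambda$ by matching fluxes on opposite faces. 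Adding the Landau and anisotropy terms gives
\[
E_\Lambda(\m) = \fint_{\Omega_\Lambda}\frac12\bra*{(\grad\cdot\m)^2 + \abs{\grad\times\m+\kappa\m}^2} + \frac{\abs{\m}^2}{2}\bra*{\lambda-\kappa^2+\frac{\alpha}{2}\abs{\m}^2} + \frac{\beta}{2}(\m\cdot\ein_3)^2 \dif x,
\]
and dropping the three nonnegative terms $(\grad\cdot\m)^2$, $\abs{\grad\times\m+\kappa\m}^2$ and $\beta(\m\cdot\ein_3)^2$ proves the inequality.

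The point requiring care is the equality characterization. Equality forces each dropped term to vanish a.e.; since $\grad\cdot(\grad\times\m)=0$ identically, the relation $\grad\times\m+\kappa\m=0$ already implies $\grad\cdot\m=0$ (using $\kappa>0$), so the independent conditions are $\grad\times\m+\kappa\m=0$ together with $\beta(\m\cdot\ein_3)^2=0$. The main step is to show a nontrivial solution of $\grad\times\m+\kappa\m=0$ must have $m_3\not\equiv0$: indeed if $m_3\equiv0$ then, with $\del_3\equiv0$, one has $\grad\times\m=(0,0,\del_1 m_2-\del_2 m_1)$, so $\grad\times\m+\kappa\m=0$ yields $\kappa m_1=\kappa m_2=0$ and hence $\m\equiv0$. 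Consequently $\int_{\Omega_\Lambda}(\m\cdot\ein_3)^2>0$ for any nontrivial $\m$, and $\beta(\m\cdot\ein_3)^2=0$ can then hold only if $\beta=0$. The converse is immediate, since $\grad\times\m+\kappa\m=0$ with $\beta=0$ makes all three terms vanish, which gives precisely the stated equality condition.
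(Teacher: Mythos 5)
Your proof is correct and follows the standard derivation of Frank's formula — the div/curl/null-Lagrangian decomposition of $\abs{\grad \m}^2$ plus completing the square in the helicity term, then discarding the sign-definite terms after the divergence integrates to zero by periodicity — which is exactly the argument the paper invokes by citation (Virga, Ch.~3) rather than proving itself, and your handling of the equality case (a nontrivial Beltrami field must have $m_3 \not\equiv 0$, forcing $\beta = 0$) is if anything more careful than the paper's statement. One cosmetic slip: in $\R^3$, $H^1_{\rm loc}$ embeds into $L^6_{\rm loc}$, not into every $L^p_{\rm loc}$ with $p<\infty$, but since $\m \in L^6_{\rm loc}$ and $\grad \m \in L^2_{\rm loc}$ give $(\m \cdot \grad)\m - \m(\grad \cdot \m) \in L^{3/2}_{\rm loc} \subset L^1_{\rm loc}$, your mollification argument goes through unchanged.
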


From the lemma we obtain immediately claim (i) in Theorem \ref{thm:minimizing}, and moreover:

\begin{proposition} 
If $\lambda < \kappa^2$ the energy admits a lower bound
\[
E_\Lambda(\m) \ge 
- \frac{ (\lambda -\kappa^2)^2}{4 \alpha}
\]
which for $\beta=0$ is precisely attained for $\m$ of constant modulus 
\[
|\m|= \sqrt{ \frac{\kappa^2-\lambda}{\alpha}}
\]
and such that $\nabla \times \m + \kappa \m=0$.
\end{proposition}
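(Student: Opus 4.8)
The plan is to reduce everything to the pointwise lower bound already isolated in Lemma \ref{lemma:Frank} and then optimize a single real variable. Writing $t=\abs{\m}^2\ge 0$, that lemma yields
\[
E_\Lambda(\m) \ge \fint_{\Omega_\Lambda} g\bra*{\abs{\m}^2} \dif x, \qquad g(t)= \frac{t}{2}\bra*{\lambda-\kappa^2+\frac{\alpha}{2}t}.
\]
Since $\alpha>0$ and, by hypothesis, $\lambda-\kappa^2<0$, the scalar function $g$ is a convex parabola with strictly negative linear coefficient, so its minimum over the admissible range $t\ge 0$ is attained at the interior critical point $t^\ast=(\kappa^2-\lambda)/\alpha>0$. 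A direct computation gives $g(t^\ast)= -(\lambda-\kappa^2)^2/(4\alpha)$, so that $g(t)\ge g(t^\ast)$ for all $t\ge 0$. Integrating this pointwise inequality over a primitive cell and using that the average of a constant equals that constant yields the claimed bound.

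For the equality statement I would track the two places where the estimate can be tight. First, equality in Lemma \ref{lemma:Frank} holds precisely when $\grad \times \m + \kappa \m =0$ and $\beta=0$. Second, equality in the scalar optimization $g\bra*{\abs{\m}^2}\ge g(t^\ast)$ holds at a point exactly when $\abs{\m}^2=t^\ast$ there, i.e. when the modulus is pinned to the constant value $\sqrt{(\kappa^2-\lambda)/\alpha}$ almost everywhere. Combining the two, the lower bound is saturated if and only if $\beta=0$, the field $\m$ has this constant modulus, and $\m$ is a Beltrami-type field satisfying $\grad \times \m + \kappa \m =0$ — exactly the two conditions in the statement.

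It then remains to exhibit one field meeting both conditions, which confirms that the bound is genuinely attained and not merely a lower estimate. The helix \eqref{eq:helix} does the job: one checks by direct substitution that it has constant modulus $M=\sqrt{(\kappa^2-\lambda)/\alpha}$ and solves $\grad \times \m + \kappa \m=0$. I expect no serious obstacle in the core argument, which is essentially a one-variable minimization; the only point requiring a little care is the compatibility of the constant-modulus Beltrami field with $\Lambda$-periodicity, i.e. that such a field actually fits on the lattice at hand. This is precisely the role of the phrase "suitable lattices" in Theorem \ref{thm:minimizing}(ii), and for the present proposition it suffices to observe that the helix is periodic with respect to lattices compatible with the wave number $\kappa$.
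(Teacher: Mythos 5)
Your proposal is correct and follows essentially the same route as the paper, which states this proposition as an immediate consequence of Lemma \ref{lemma:Frank}: one applies the pointwise quadratic minimization in $t=\abs{\m}^2$ to the integrand of the lemma's lower bound, and tracks equality through both the lemma (requiring $\beta=0$ and $\grad\times\m+\kappa\m=0$) and the scalar optimization (pinning $\abs{\m}$ to the constant value). Your closing remark on lattice compatibility of the helix matches the paper's own caveat about ``suitable lattices,'' so nothing is missing.
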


The unimodular Beltrami fields being parallel to their curl have been classified by Ericksen within the variational theory of liquid crystals,
see e.g. \cite{Virga_book} and references therein. For the present case of constant $\kappa$ those are helices of pitch $2\pi/|\kappa|$, i.e., \eqref{eq:helix}.
For the convenience of the reader we present this fundamental result in the Appendix Lemma \ref{lemma:Beltrami}, which yields claim (ii) in Theorem \ref{thm:minimizing}. 
Thus in order to realize the lower energy bound, the underlying lattice $\Lambda$ is required to accommodate such a helix. In this case the zero state loses its linear stability 
at $\lambda=\kappa^2$. In fact, the Hessian 
\[ 
H_\Lambda(\m) \langle \bs \phi, \bs \phi \rangle = \frac{d^2}{ds^2}\Bigg|_{s=0} E_{\Lambda}(\m+s \bs \phi)
\]
for $\m, \bs \phi \in H^1_\Lambda$ at $\m \equiv 0$ reads
\[
H_\Lambda(0) \langle \bs \phi, \bs \phi \rangle = \fint_{\Omega_\Lambda} |\nabla \bs \phi|^2 + 2 \kappa \, \bs \phi \cdot (\nabla \times \bs \phi) + \lambda |\bs \phi|^2 \, \dif x.
 \]
By the preceding arguments it satisfies 
\[
 H_{\Lambda}(0) \langle \bs \phi, \bs \phi \rangle \ge (\lambda - \kappa^2) 
\scp{|\bs \phi|^2}
\]
and has a helical instability at $\lambda=\kappa^2$.

\section{Bifurcation on lattices}\label{sec:bifurcation}

In this section we prove Theorem \ref{thm:main} based on the equivariant branching lemma.
The requisite assumptions are summarized in Proposition \ref{prop:fredholm}. Bifurcation points $\lambda_0$ 
arising from the first critical wave number are identified by means of a Fourier expansion in Lemma \ref{la:lambda} 
in combination with Lemma \ref{la:nu}.

We examine the linearization
\[
L^{(\lambda)}=D_{\m} F(0,\lambda):H^2_\Lambda \to L^2_\Lambda
\] 
of $F$ at $\m=0$ for arbitrary $\lambda$ given by
\begin{align*}
L^{(\lambda)} \bs \phi 
&= -\lap \bs \phi + 2\kappa \grad \times \bs \phi +\lambda \bs \phi + \beta \phi_3 \ein_3.
\end{align*}
We need to find non-trivial $\Lambda$-periodic solutions $\bs \phi$ of the equation
\begin{equation}\label{eq:linear}
-\lap \bs \phi + 2\kappa \grad \times \bs \phi +\lambda \bs \phi + \beta \phi_3 \ein_3= 0
\end{equation}
for $\lambda=\lambda_0$ depending on $\kappa$ and $\beta$.
\begin{lemma}\label{la:lambda}
Eq. \eqref{eq:linear} admits non-constant solutions $\bs \phi \in H_\Lambda^2$ if and only if
\[ 
\lambda=-\abs{v}^2 - \frac{\beta}{2}\pm \sqrt{4\kappa^2\abs{v}^2+\frac{\beta^2}{4}}
\]
for some $v \in \Lambda^*\setminus \set{0}$. 
\end{lemma}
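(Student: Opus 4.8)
The plan is to diagonalize the constant-coefficient operator $L^{(\lambda)}$ by Fourier series over the dual lattice, reducing \eqref{eq:linear} to a family of $3\times3$ linear systems indexed by $v\in\Lambda^*$. Writing $\bs\phi(x)=\sum_{v\in\Lambda^*}\tilde{\bs\phi}(v)\,e^{i\,v\cdot x}$ with $\tilde{\bs\phi}(v)\in\C^3$, a single mode $\bs a\,e^{i\,v\cdot x}$ transforms under the three pieces of $L^{(\lambda)}$ by the symbols $-\lap\mapsto\abs{v}^2$, $\grad\times\mapsto i\,(v,0)\times(\cdot)$ (the standard curl symbol, $(v,0)\in\R^3$ being the horizontal embedding of $v$), and $\beta\phi_3\ein_3\mapsto\beta a_3\ein_3$. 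Thus \eqref{eq:linear} holds in $H^2_\Lambda$ if and only if $\hat L(v)\,\tilde{\bs\phi}(v)=0$ for every $v\in\Lambda^*$, where
\[
\hat L(v)\,\bs a=(\abs{v}^2+\lambda)\,\bs a+2\kappa\,i\,(v,0)\times\bs a+\beta\,a_3\,\ein_3 ,
\]
and a non-constant solution exists exactly when $\hat L(v)$ is singular for some $v\neq0$, the mode $v=0$ contributing only constants.

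The structural observation that makes the computation transparent is that $\hat L(v)$ respects the orthogonal splitting $\C^3=\mathrm{span}\{(v,0)\}\oplus\mathrm{span}\{(v^\perp,0),\ein_3\}$, where $v^\perp=(-v_2,v_1)$. Since $(v,0)\times(v,0)=0$ and $(v,0)$ has vanishing third entry, the longitudinal line is an eigenline with eigenvalue $\abs{v}^2+\lambda$. On the transverse plane the cross product by $(v,0)$ is a right-angle rotation of magnitude $\abs{v}$ while $\beta a_3\ein_3$ projects onto the second basis vector, so in the normalized basis $\{(v^\perp,0)/\abs{v},\,\ein_3\}$ the restriction of $\hat L(v)$ is
\[
\begin{pmatrix}\abs{v}^2+\lambda & -2\kappa i\abs{v}\\ 2\kappa i\abs{v} & \abs{v}^2+\lambda+\beta\end{pmatrix}.
\]
Vanishing of its determinant reads $(\abs{v}^2+\lambda)(\abs{v}^2+\lambda+\beta)-4\kappa^2\abs{v}^2=0$, and solving this quadratic in $\abs{v}^2+\lambda$ produces precisely
\[
\lambda=-\abs{v}^2-\frac{\beta}{2}\pm\sqrt{4\kappa^2\abs{v}^2+\frac{\beta^2}{4}} .
\]

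For the equivalence I would argue as follows. If \eqref{eq:linear} has a non-constant $H^2_\Lambda$ solution, some Fourier coefficient $\tilde{\bs\phi}(v)$ with $v\neq0$ is nonzero and lies in $\ker\hat L(v)$, forcing $\det\hat L(v)=0$; conversely, choosing $v$ that realizes one of the displayed values and a transverse $\bs a\in\ker\hat L(v)$, the field $\bs\phi=\Re\bigl(\bs a\,e^{i\,v\cdot x}\bigr)$ (with the conjugate coefficient placed at $-v$ to keep $\bs\phi$ real-valued) is an explicit non-constant solution. The step that requires genuine care, and which I expect to be the main obstacle, is the full factorization
\[
\det\hat L(v)=(\abs{v}^2+\lambda)\,\bigl[(\abs{v}^2+\lambda)^2+\beta(\abs{v}^2+\lambda)-4\kappa^2\abs{v}^2\bigr].
\]
Beyond the transverse quadratic it also carries the longitudinal factor $\abs{v}^2+\lambda$, whose vanishing yields the curl-free gradient solutions $\bs\phi\propto(v,0)\,e^{i\,v\cdot x}$ at $\lambda=-\abs{v}^2$. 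Because $-\abs{v}^2$ lies strictly below the transverse ``$+$'' root for every $\kappa>0$, this branch never supplies the first critical wave number and plays no role in the onset analysis behind Theorem~\ref{thm:main}; the statement is therefore best read as characterizing the transverse bifurcating modes, and I would make that restriction explicit to keep the equivalence exact.
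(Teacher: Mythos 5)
Your proof is correct and reaches the stated dispersion relation by the same basic route as the paper---Fourier expansion over $\Lambda^*$ reducing \eqref{eq:linear} to a $3\times 3$ system per mode---but your linear algebra is organized differently, and the difference is substantive. The paper eliminates the in-plane coefficients via \eqref{eq:F_solution}, which implicitly divides by $\abs{v}^2+\lambda$; your longitudinal/transverse block-diagonalization keeps the factor $\abs{v}^2+\lambda$ in the full determinant and thereby exposes the curl-free gradient modes $\bs \phi \propto (v_1,v_2,0)\,e^{i v\cdot x}$, which solve \eqref{eq:linear} at $\lambda=-\abs{v}^2$ (both the DMI term $2\kappa\grad\times\bs\phi$ and the anisotropy term vanish on them; in the paper's notation they correspond to the solution $(a_v,b_v,c_v)=(v_1,v_2,0)$ of the three-component system when $\abs{v}^2+\lambda=0$). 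You are right that these values are genuinely missing from the statement: on the square lattice with $\beta=0$ and, say, $\kappa=0.3$, the field $(0,1,0)\cos x_2$ is a non-constant solution at $\lambda=-1$, while $-\abs{w}^2\pm 2\kappa\abs{w}$ never equals $-1$ for $w\in\Lambda^*\setminus\set{0}$; so the ``only if'' direction fails as stated, and the paper's own proof has exactly the gap you flagged. Your remedy---read the lemma as characterizing the transverse modes, or equivalently append the values $\lambda=-\abs{v}^2$, $v\in\Lambda^*\setminus\set{0}$, to the list---is the right one. Your closing observation is also accurate where it matters most: since $\lambda_0^{+}=-1-\beta/2+\sqrt{4\kappa^2+\beta^2/4}>-1\ge-\abs{v}^2$ for every $v\in\Lambda^*\setminus\set{0}$, the longitudinal branch never contributes to the kernel at the upper bifurcation point, so Proposition \ref{prop:fredholm} and the subsequent stability analysis are unaffected there; note, however, that for the lower root $\lambda_0^{-}$ a coincidence $\lambda_0^{-}=-\abs{\omega}^2$ can occur (e.g. $\kappa=1/2$, $\beta=0$, $\omega=(1,1)$ on the square lattice) and is not excluded by the non-resonance condition \eqref{eq:resonance}, so there your caveat is needed for the kernel count as well.
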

\begin{proof}
We expand $\bs \phi \in H_\Lambda^2$ in Fourier series
\[
\bs \phi(x)=\sum_{v \in \Lambda^*} \bs \phi_{v} e^{i v \cdot x} =\sum_{v \in \Lambda^*} \begin{pmatrix} a_{v} \\ b_{v} \\ c_{v}\end{pmatrix}
e^{iv \cdot x} 
\]
for $x \in \R^2$, where, recalling \eqref{eq:lattice_matrix},
\begin{equation}\label{eq:k_nu}
v = (v_1, v_2)=\mathcal{A}^{-T} k = 
\begin{pmatrix} \Im \tau k_1 \\ -\Re \tau k_1 + k_2\end{pmatrix}.
\end{equation}
Since
\[
-\lap \bs \phi (x) = \sum_{v \in \Lambda^*} \abs{v}^2\begin{pmatrix} a_{v} \\ b_{v} \\ c_{v}\end{pmatrix}
e^{i v \cdot x}
\quad\text{and}\quad
\grad \times \bs \phi = \sum_{v \in \Lambda^*}\begin{pmatrix} 
i v_2 c_{v} \\ - i v_1 c_{v} \\ i (v_1 b_{v} - v_2 a_{v})
\end{pmatrix} e^{ i v \cdot x},
\]
the linearized equation \eqref{eq:linear} is equivalent to the system
\[
\begin{aligned}
&\abs{v}^2 a_v + 2\kappa i v_2 c_v + \lambda a_v = 0\\
&\abs{v}^2 b_v - 2\kappa i v_1 c_v + \lambda b_v = 0 \\
&\abs{v}^2 c_v + 2\kappa i (v_1 b_v - v_2 a_v) + \lambda c_v + \beta c_v = 0,
\end{aligned}
\]
for all $v \in \Lambda^*$. Constant solutions with $v=0$ exist only if $\lambda=0$ or $\lambda+\beta=0$. For $v \neq 0$, we have
\begin{equation} \label{eq:F_solution}
a_{v} = -\frac{2\kappa i v_2}{\abs{v}^2 +\lambda} c_{v},\quad 
b_{v} = \frac{2\kappa i v_1}{\abs{v}^2 +\lambda} c_{v}
\end{equation}
and 
\[
 \bra*{\abs{v}^2 +\lambda+\beta - \frac{4\kappa^2 \abs{v}^2}{\abs{v}^2 +\lambda}} c_{v} = 0
\]
which is possible for $c_{v} \not=0$ only if $\displaystyle{\lambda=-\abs{v}^2 - \frac{\beta}{2}\pm \sqrt{4\kappa^2\abs{v}^2+\frac{\beta^2}{4}}}$.
\end{proof}

We focus on the wave vectors $v \in \Lambda^*\setminus \set{0}$ of shortest length which are characterized by 
 minimizing problem 
\[
\abs{v}^2 = \min_{k \in \Z^2 \setminus \{0\}} \bra*{(\Im \tau\, k_1)^2 + (\Re \tau\, k_1 - k_2)^2}.
\]
A straightforward analysis yields (see Figure \ref{fig:crit_nu}):
\begin{lemma}\label{la:nu}
Let $v \in \Lambda^*\setminus \set{0}$. Then $\abs{v}$ is minimized by
\begin{enumerate}[(i)]
\item $k=(0,\pm 1)$ if $\abs{\tau}>1$,
\item $k=(\pm 1,0), (0, \pm 1)$ if $\abs{\tau}=1$, $\frac{\pi}{3}<\theta \le \frac{\pi}{2}$, and
\item $k=(\pm 1,0), (0, \pm 1), \pm(1,1)$ if $\abs{\tau}=1$, $\theta=\frac{\pi}{3}$.
\end{enumerate}
\end{lemma}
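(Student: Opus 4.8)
The plan is to regard $\abs{v}^2$ as a positive definite binary quadratic form in $k=(k_1,k_2)\in\Z^2$ and to minimize it, using the placement of $\tau$ in the fundamental domain \eqref{eq:tau} as a normalization. First I would rewrite the expression \eqref{eq:k_nu} by completing the square,
\[
\abs{v}^2 = (\Im\tau)^2\, k_1^2 + \bra*{k_2 - \Re\tau\, k_1}^2,
\]
which is the quadratic form attached to the dual basis $\set{(\Im\tau,-\Re\tau),\,(0,1)}$ of $\Lambda^*$. The inequalities defining \eqref{eq:tau}, read off via \eqref{eq:tau1}, give the three estimates that drive everything: $\abs{\tau}\ge 1$, $\abs{\Re\tau}=\abs{\tau}\abs{\cos\theta}\le\tfrac12$, and $\Im\tau=\abs{\tau}\sin\theta\ge\tfrac{\sqrt3}{2}$, the last because $\sin\theta\ge\sin\tfrac{\pi}{3}$ throughout the admissible range of $\theta$. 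Equivalently, these say that $\set{(0,1),(\Im\tau,-\Re\tau)}$ is a reduced basis, so that short vectors are confined to small combinations of basis vectors; I will make this quantitative by a direct estimate rather than invoking reduction theory.

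Next I would cut the search down to finitely many candidates. For $k_1=0$ one has $\abs{v}^2=k_2^2\ge 1$, with equality exactly at $k_2=\pm1$, so $(0,\pm1)$ always attains the value $1$; it then suffices to show that no $k$ with $k_1\neq 0$ does better. If $\abs{k_1}\ge 2$ then $\abs{v}^2\ge(\Im\tau)^2 k_1^2\ge 4\cdot\tfrac34=3$, and if $\abs{k_1}=1$ while $\abs{k_2}\ge 2$ then $\abs{k_2-\Re\tau\,k_1}\ge\abs{k_2}-\abs{\Re\tau}\ge\tfrac32$ already forces $\abs{v}^2>1$. By the symmetry $k\mapsto-k$, the only remaining competitors are $(\pm1,0)$, $\pm(1,1)$ and $\pm(1,-1)$.

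The last step is to evaluate the form on these candidates and track the equality cases, which is exactly where the trichotomy appears. With $k_1=1$ one finds $\abs{v}^2=\abs{\tau}^2$ at $k_2=0$, $\abs{v}^2=\abs{\tau}^2+1-2\Re\tau$ at $k_2=1$, and $\abs{v}^2=\abs{\tau}^2+1+2\Re\tau$ at $k_2=-1$, each $\ge 1$ by $\abs{\tau}\ge1$ and $\abs{\Re\tau}\le\tfrac12$. The vector $(\pm1,0)$ ties the minimum iff $\abs{\tau}=1$ (equilateral lattices), giving case (ii); the vector $\pm(1,1)$ ties it iff moreover $\abs{\tau}=1$ and $\Re\tau=\tfrac12$, i.e. $\theta=\tfrac{\pi}{3}$ (hexagonal), giving case (iii); and $\pm(1,-1)$ is excluded since on equilateral lattices $\Re\tau\ge0$ forces $\abs{\tau}^2+1+2\Re\tau\ge 2$, while for $\abs{\tau}>1$ the bound is strict anyway. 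Collecting the equality conditions yields (i)--(iii). I expect the only delicate part to be this boundary bookkeeping: correctly using the extra normalization $\Re\tau\ge0$ on the equilateral boundary $\abs{\tau}=1$ to discard $(1,-1)$ and to separate the square range $\tfrac{\pi}{3}<\theta\le\tfrac{\pi}{2}$ from the hexagonal value $\theta=\tfrac{\pi}{3}$.
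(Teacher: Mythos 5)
Your proof is correct and complete: rewriting $\abs{v}^2=(\Im\tau)^2k_1^2+(k_2-\Re\tau\,k_1)^2$, using the fundamental-domain bounds $\Im\tau\ge\tfrac{\sqrt3}{2}$, $\abs{\Re\tau}\le\tfrac12$, $\abs{\tau}\ge1$ (plus $\Re\tau\ge0$ when $\abs{\tau}=1$) to reduce to the six candidates $\pm(1,0),\pm(1,1),\pm(1,-1)$ besides $(0,\pm1)$, and then tracking equality cases, is exactly the ``straightforward analysis'' the paper invokes without writing out (it only points to Figure \ref{fig:crit_nu}). Your boundary bookkeeping -- in particular using $\Re\tau\ge0$ on $\abs{\tau}=1$ to exclude $\pm(1,-1)$ and isolating $\theta=\tfrac{\pi}{3}$ as the only tie for $\pm(1,1)$ -- is the part the paper leaves implicit, and you handle it correctly.
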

Possible bifurcations at $\lambda$ corresponding to wave numbers larger than $\abs{v} = 1$ turn out to be unstable, see proof of Lemma \ref{la:L0} below.
Hence we shall consider bifurcation occurring at
\[
\lambda_0 = - 1 - \frac{\beta}{2}\pm \sqrt{4\kappa^2+\frac{\beta^2}{4}}
\]
satisfying \eqref{eq:resonance}, which guarantees that only the first non-trivial wave number contributes to the kernel of the linearization.
\begin{figure}[htbp]
 \centering
 \hfill
 \begin{subfigure}[b]{0.23\textwidth}
 \centering
 \includegraphics[width=\textwidth]{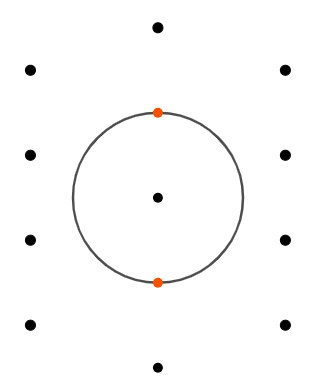}
 \caption{}
 \end{subfigure}
 \hfill
 \begin{subfigure}[b]{0.23\textwidth}
 \centering
 \includegraphics[width=\textwidth]{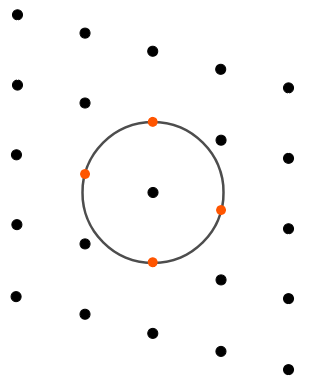}
 \caption{}
 \end{subfigure}
 \hfill
 \begin{subfigure}[b]{0.23\textwidth}
 \centering
 \includegraphics[width=\textwidth]{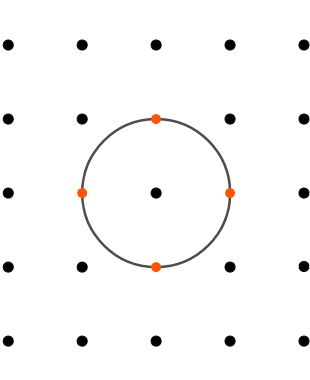}
 \caption{}
 \end{subfigure}
 \hfill
 \begin{subfigure}[b]{0.23\textwidth}
 \centering
 \includegraphics[width=\textwidth]{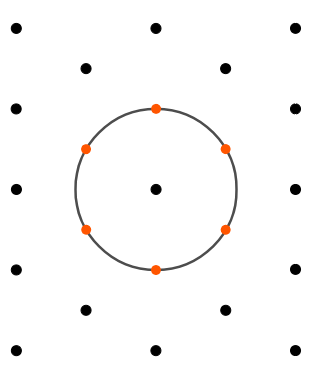}
 \caption{}
 \end{subfigure}
 \caption{The critical circle with radius 1 
 around 0 and the critical wave vectors (red points) on the dual lattice of a (a) non-equilateral lattice, (b) rhombic lattice, (c) square lattice and (d) hexagonal lattice.}
 \label{fig:crit_nu}
\end{figure}

\begin{proposition}\label{prop:fredholm}
Under the assumptions of Theorem \ref{thm:main}, the linearization
\[
L_0:= D_{\m}F(0,\lambda_0): \quad H^2_\Lambda \subset L^2_\Lambda \to L^2_\Lambda
\]
is a self-adjoint Fredholm operator with
$\dim \mathrm{ker}\, L_0 = N$, where $N=2$ on non-equilateral lattices, $N=4$ on rhombic and square lattices and $N=6$ on hexagonal lattices. The fixed subspace of 
the corresponding symmetry group $\Sigma_i$, $i=1,2,3$, (see section \ref{sec:symmetry}) in $X_0=\mathrm{ker}\, L_0$ is one-dimensional,
\[
 \mathrm{Fix}_{X_0}(\Sigma_i)= \mathrm{span}\,\set{\bs \varphi_1^{(i)}}.
\]
\end{proposition}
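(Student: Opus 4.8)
The plan is to diagonalize $L_0$ in Fourier space, read off self-adjointness, the Fredholm property and the kernel dimension from the resulting family of $3\times 3$ matrices, and then compute each fixed subspace by tracking how the point group acts on the critical Fourier modes.

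First I would expand $\bs\phi=\sum_{v\in\Lambda^*}\bs\phi_v e^{iv\cdot x}$ and deduce from the computation in the proof of Lemma \ref{la:lambda} that $L_0\bs\phi=\sum_{v\in\Lambda^*}M(v)\bs\phi_v\,e^{iv\cdot x}$ with
\[
M(v)=\bra*{\abs{v}^2+\lambda_0}I+\beta\,\mathrm{diag}(0,0,1)+2\kappa i\begin{pmatrix}0&0&v_2\\0&0&-v_1\\-v_2&v_1&0\end{pmatrix}.
\]
Each $M(v)$ is Hermitian, since the DMI block is $i$ times a real antisymmetric matrix and the remaining terms are real diagonal; hence $L_0$ is self-adjoint on $H^2_\Lambda$. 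The eigenvalues of $M(v)$ grow like $\abs{v}^2$ as $\abs{v}\to\infty$, so $L_0$ has compact resolvent, discrete spectrum and finite-dimensional eigenspaces, and is therefore a self-adjoint Fredholm operator of index zero.

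Next I would count the kernel. By Lemma \ref{la:lambda}, $0$ is an eigenvalue of $M(v)$ precisely when $\lambda_0$ coincides with one of the two branches evaluated at $\abs{v}$, and the non-resonance condition \eqref{eq:resonance} eliminates every $v$ except those on the critical circle $\abs{v}=1$ listed in Lemma \ref{la:nu}. For each such $v$ the relations \eqref{eq:F_solution} show that $\ker M(v)$ is exactly one complex dimension, spanned by the polarization vector
\[
P(v)=\bra*{-\frac{2\kappa i v_2}{1+\lambda_0},\ \frac{2\kappa i v_1}{1+\lambda_0},\ 1},
\]
which is well defined because $1+\lambda_0=-\frac{\beta}{2}\pm\sqrt{4\kappa^2+\frac{\beta^2}{4}}\neq 0$ for $\kappa>0$. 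Reality of $\bs\phi$ imposes $c_{-v}=\overline{c_v}$ together with $P(-v)=\overline{P(v)}$, so each antipodal pair $\{v,-v\}$ contributes two real dimensions; counting the pairs in Lemma \ref{la:nu} gives $N=2$ on non-equilateral, $N=4$ on rhombic and square, and $N=6$ on hexagonal lattices.

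Finally I would compute the fixed subspaces. A point group element $R$ acts on Fourier data by $\widetilde{(R\bullet\bs\phi)}(w)=\bs R\,\widetilde{\bs\phi}(R^{-1}w)$, so $\Sigma_i$-invariance becomes a linear system relating the coefficients $c_v$ at the critical wave vectors that $R$ permutes. For $\Sigma_1$ the kernel modes depend only on $x_2$, so $T_1$ acts trivially, and the reflection $\bs R=\mathrm{diag}(-1,-1,1)$ satisfies $\bs R\,\overline{P(v)}=P(v)$, forcing the single coefficient to be real. For $\Sigma_2=D_2$ on equilateral lattices I would use the two critical pairs $\pm v^{(1)},\pm v^{(2)}$: a direct computation gives $\bs R_\theta^+P(v^{(1)})=-P(v^{(2)})$, so the diagonal reflection forces $c_2=-c_1$, while the rotation $\bs R$ by $\pi$ forces $c_1,c_2\in\R$. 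For $\Sigma_3=\Z_6$ on the hexagonal lattice I would label the six critical vectors cyclically as $w_0,\dots,w_5$ and verify $\bs R_1P(w_{j-1})=P(w_j)$, so the generator forces $c_j=c_{j-1}$, hence all coefficients equal, and reality forces them real. In each case a single real scalar survives, so $\mathrm{Fix}_{X_0}(\Sigma_i)=\mathrm{span}\,\{\bs\varphi_1^{(i)}\}$ is one-dimensional.

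The main obstacle is the bookkeeping in this last step: one must check that the $SO(3)$ matrices $\bs R$ act on the polarization vectors $P(v)$ in exactly the way dictated by the permutation of wave vectors, and that the resulting homogeneous system cuts the kernel down to dimension one rather than zero or two. The delicate point is the sign in $\bs R_\theta^+P(v^{(1)})=-P(v^{(2)})$, which is what makes $\Sigma_2$ axial and enforces the antiphase relation $c_2=-c_1$; pinning down both the amplitude relation and the reality requires combining the reflection and the rotation. The cyclic identity $\bs R_1P(w_{j-1})=P(w_j)$ plays the analogous role for $\Sigma_3$.
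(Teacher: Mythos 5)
Your proposal is correct and takes essentially the same approach as the paper: diagonalize $L_0$ in Fourier space using Lemma \ref{la:lambda}, restrict to the critical circle $\abs{v}=1$ via Lemma \ref{la:nu} and the non-resonance condition, and cut the kernel down to one dimension by tracking the action of the point group on the critical Fourier coefficients. The only difference is explicitness: you verify the self-adjoint Fredholm property (Hermitian symbol, eigenvalue growth) and the key invariance relations (the antiphase sign under the diagonal reflection for $\Sigma_2$, the cyclic relation $\bs{R}_1\bs\phi_{v}=\bs\phi_{R_1 v}$ for $\Sigma_3$), which the paper's proof states without computation.
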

\begin{proof}
The normalized Fourier coefficients \eqref{eq:F_solution} obtained in the proof of Lemma \ref{la:lambda} are
\begin{equation}\label{eq:linearSol}
\bs \phi_{v} = \frac{1}{\sqrt{1+A^2}}\begin{pmatrix}
A\begin{pmatrix}
-\frac{i v_2}{\abs{v}} \\ \frac{i v_1}{\abs{v}} 
\end{pmatrix}
\\ 1
\end{pmatrix}
\quad\text{where}\quad
A=\frac{2\kappa}{-\frac{\beta}{2} \pm \sqrt{4\kappa^2+\frac{\beta^2}{4}}}.
\end{equation}
The corresponding real-space solutions of \eqref{eq:linear} to the wave vector $v$ are
\[
\bs \phi_{1,v}(x)= \bs \phi_{v} e^{i v \cdot x}+ \bs \phi_{-v}e^{-i v \cdot x}
= \frac{1}{\sqrt{1+A^2}}\begin{pmatrix}
A \begin{pmatrix}
\frac{v_2}{\abs{v}} \sin(v \cdot x) \\
-\frac{v_1}{\abs{v}} \sin(v \cdot x) 
\end{pmatrix}\\
\cos(v \cdot x)
\end{pmatrix}
\]
and
\[
\bs \phi_{2,v}(x)= i\bs \phi_{v}e^{i v \cdot x} + i\bs \phi_{-v}e^{-i v \cdot x}
= \frac{1}{\sqrt{1+A^2}}\begin{pmatrix}
A \begin{pmatrix}
\frac{v_2}{\abs{v}} \cos(v \cdot x) \\
-\frac{v_1}{\abs{v}} \cos(v \cdot x)
\end{pmatrix} \\
-\sin(v \cdot x)
\end{pmatrix}
\]
satisfying
\begin{equation} \label{eq:phi_sym}
\bs \phi_{1,-v}=\bs \phi_{1,v} \quad \text{and} \quad
\bs \phi_{2,-v}= -\bs \phi_{2,v}.
\end{equation}

We need to find all solutions of $L_0 \bs \phi =0$ for a given wave number $\abs{v}$. According to Lemma \ref{la:nu}, we consider following cases separately.

\underline{Case 1}: Non-equilateral lattice, $\abs{\tau}>1$

The wave vector \eqref{eq:k_nu} corresponding to $k=(0,1)$ is
\[
 v^{(1)} =  \begin{pmatrix} 0 \\ 1 \end{pmatrix} \in \Lambda^*.
\]
Therefore, taking into account \eqref{eq:phi_sym},
\[
\mathrm{ker}\,L_0=\mathrm{span } \set{\bs \phi_{1,v^{(1)}}, \bs \phi_{2,v^{(1)}}}.
\]
$\Sigma_1$ given in \eqref{eq:Sigma1} is the only axial isotropy subgroup. More precisely
\[ \mathrm{Fix}_{X_0}(\Sigma_1)=\mathrm{span } \set{\bs \varphi_1^{(1)}}\]
with the $L_\Lambda^2$ normalized
\begin{equation}\label{eq:phi1}
\bs \varphi_1^{(1)}=\sqrt{2}\bs \phi_{1,v^{(1)}}= \sqrt{\frac{2}{1+A^2}}\begin{pmatrix}
A\sin(x_2) \\
0 \\
\cos(x_2)
\end{pmatrix}.
\end{equation}

\underline{Case 2}: Rhombic or square lattice, $\abs{\tau}=1$, $\frac{\pi}{3} < \theta \le \frac{\pi}{2}$

The wave vectors \eqref{eq:k_nu} corresponding to $k = (1,0)$ and $(0, 1)$
\[
v^{(2)} = \begin{pmatrix} \sin \theta \\ -\cos \theta \end{pmatrix} \quad\text{and}\quad
v^{(3)} = \begin{pmatrix}0 \\ 1\end{pmatrix}.
\]
Therefore, taking into account \eqref{eq:phi_sym},
\[
\mathrm{ker}\,L_0=\mathrm{span } \set{\bs \phi_{i,v^{(j)}}, \; i=1,2, \, j=2,3}.
\]
As in Case 1 we have 
\[
\mathrm{Fix}_{X_0}(\Sigma_1)=\mathrm{span } \set{\bs \varphi_1^{(1)}} \quad \text{where} \quad \bs \varphi_1^{(1)}=\sqrt{2}\bs \phi_{1,v^{(2)}}.
\]
The fixed subspace of $\Sigma_2$ (see \eqref{eq:Sigma2}) in the kernel $X_0$ is
\[ \mathrm{Fix}_{X_0}(\Sigma_2)=\mathrm{span } \set{\bs \varphi_1^{(2)}}\] 
with the $L^2_\Lambda$-normalized field
\begin{equation}\label{eq:phi2}
\bs \varphi_1^{(2)}=\bs \phi_{1,v^{(2)}} - \bs \phi_{1,v^{(3)}}=
\frac{1}{\sqrt{1+A^2}}\begin{pmatrix}
A \begin{pmatrix}
-\cos \theta \sin\bra*{\sin \theta x_1-\cos \theta x_2}- \sin \bra*{x_2}\\
-\sin \theta \sin\bra*{\sin \theta x_1-\cos \theta x_2}
\end{pmatrix} \\
\cos\bra*{\sin \theta x_1-\cos \theta x_2}-\cos (x_2)
\end{pmatrix}.
\end{equation} 

\underline{Case 3}: Hexagonal lattice, $\abs{\tau}=1$, $\theta = \frac{\pi}{3}$

The wave vectors \eqref{eq:k_nu} corresponding to $k = (1,0),(0,1), (1,1)$ are
\[
v^{(4)} = \begin{pmatrix} \frac{\sqrt{3}}{2} \\ -\frac{1}{2} \end{pmatrix}, \quad
v^{(5)} = \begin{pmatrix}0 \\ 1 \end{pmatrix}, \quad
v^{(6)} = \begin{pmatrix}\frac{\sqrt{3}}{2} \\ \frac{1}{2}\end{pmatrix}.
\]
Therefore, taking into account \eqref{eq:phi_sym},
\[
\mathrm{ker}\,L_0=\mathrm{span } \set{\bs \phi_{i,v^{(j)}}, \; i=1,2, \, j=4,5,6}.
\]
Similarly, we have 
\[
\mathrm{Fix}_{X_0}(\Sigma_1)=\mathrm{span } \set{\bs \varphi_1^{(1)}} \quad \text{with} \quad \bs \varphi_1^{(1)}=\sqrt{2}\bs \phi_{1,v^{(4)}}
\]
and
\[
\mathrm{Fix}_{X_0}(\Sigma_2)=\mathrm{span } \set{\bs \varphi_1^{(2)}} \quad  
\text{with} \quad  \bs \varphi_1^{(2)}=\bs \phi_{1,v^{(4)}}-\bs \phi_{1,v^{(5)}}.
\]
The fixed subspace of $\Sigma_3$ (see \eqref{eq:Sigma3}) in the kernel $X_0$ is
\[\mathrm{Fix}_{X_0}(\Sigma_3)=\mathrm{span } \set{\bs \varphi_1^{(3)}}\]  
with the $L^2_\Lambda$-normalized field

\begin{equation}\label{eq:phi3}
\begin{aligned}
\bs \varphi_1^{(3)}&=\sqrt{\frac{2}{3}}(\bs \phi_{1,v^{(4)}}+\bs \phi_{1,v^{(5)}}+\bs \phi_{1,v^{(6)}})\\
&=\sqrt{\frac{2}{3(A^2+1)}}\begin{pmatrix}
A \begin{pmatrix}
-\frac{1}{2} \sin\bra*{\frac{\sqrt{3}}{2}x_1-\frac{1}{2}x_2}+\sin\bra*{x_2}+\frac{1}{2} \sin\bra*{\frac{\sqrt{3}}{2}x_1+\frac{1}{2}x_2}\\
-\frac{\sqrt{3}}{2} \sin\bra*{\frac{\sqrt{3}}{2}x_1-\frac{1}{2}x_2}-\frac{\sqrt{3}}{2} \sin\bra*{\frac{\sqrt{3}}{2}x_1+\frac{1}{2}x_2} 
\end{pmatrix}\\
\cos\bra*{\frac{\sqrt{3}}{2}x_1-\frac{1}{2}x_2}+\cos\bra*{x_2}+\cos\bra*{\frac{\sqrt{3}}{2}x_1+\frac{1}{2}x_2}
\end{pmatrix}.
\end{aligned} 
\end{equation}
\end{proof}

\begin{remark}
The first order bifurcation solution on non-equilateral lattices is an exact solution of the nonlinear equation at $\beta=0$, i.e.
\[
F(s\bs \phi_1, \lambda_0 -\alpha s^2)=0 \quad\text{for any }\quad s \in \R.
\]
\end{remark}

\begin{proof}[Proof of Theorem \ref{thm:main}]
Clearly $F(0,\lambda)=0$ for all $\lambda \in \R$. According to Proposition \ref{prop:fredholm}, the fixed subspace of the symmetry group $\Sigma=\Sigma_i$, $i=1,2,3$, is one-dimensional in all three cases 
\[
\mathrm{Fix}_{X_0}(\Sigma)=\mathrm{span } \set{\bs \varphi_1}, 
\]
where $\bs \varphi_1=\bs \varphi_1^{(i)}$ and
\[
D_\lambda D_{\m} F(0,\lambda_0) \scp{\bs \phi} = \bs \phi \notin \mathrm{ran}\, L_0 \quad\text{for any nonzero }\bs \phi \in \mathrm{Fix}_{X_0}(\Sigma).
\]
Invoking the equivariant branching lemma 
(see \cite{Chossat_Lauterbach, golubitsky2012singularities})
we conclude the existence of bifurcation solutions in the form of
\[
\lambda_s = \lambda_0+\varphi_\lambda(s), \quad \m_s =s\bs \varphi_1+\varphi_{\m}(s)
\] 
where $s \in (-\delta,\delta)$ for some $\delta>0$, $\varphi_\lambda:(-\delta,\delta) \to \R$ and 
$\varphi_{\m}:(-\delta,\delta) \to H_\Lambda^2$ are analytic in $s$ satisfying $\varphi_\lambda(0)=0$, $\varphi_{\m}(0)=0$ and $\scp{\varphi_{\m}(s),\bs\varphi_1}=0$.

Inserting the analytic expansion of bifurcation solutions into \eqref{eq:main} and \eqref{eq:averageEnergy}
we obtain \eqref{eq:BifurcationSolution}-\eqref{eq:energy}. Explicit calculations are carried out in Appendix \ref{sec:higherorder}.
\end{proof}

\paragraph{Topology of bifurcation solutions}
On hexagonal lattices, the bifurcation solution corresponding to the isotropy subgroup $\Sigma_3$ is nowhere vanishing and features in every primitive cell a skyrmion, i.e. a vortex-like structure with the magnetizations pointing upwards at the core and downwards at the perimeter, see Figure \ref{fig:lattice_top}(a).

On equilateral lattices, the horizontal component of $\bs \varphi_1^{(2)}$ has a finite number of isolated zeros in a primitive cell and forms a vortex or an antivortex around each zero. 
The antivortices are half-skyrmions (sometimes referred to as merons, see e.g. \cite{yu2018transformation}) and have magnetizations pointing upwards or downwards at the core;
while the center of vortices are singularity points ($\m=0$) due to the continuity and the $\Sigma_2$-invariance of bifurcation solutions, see Figure \ref{fig:lattice_top}(b).

\begin{figure}[hbtp]
 \centering
 \hfill
 \begin{subfigure}[b]{0.45\textwidth}
 \centering
 \includegraphics[width=0.8\linewidth]{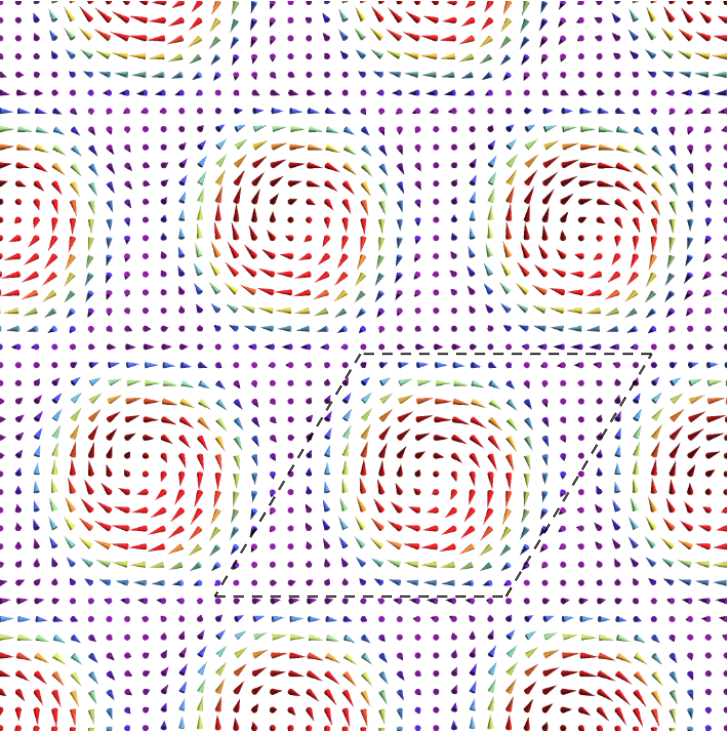}
 \caption{}
\end{subfigure}%
\hfill
\begin{subfigure}[b]{0.45\textwidth}
 \centering
 \includegraphics[width=0.8\linewidth]{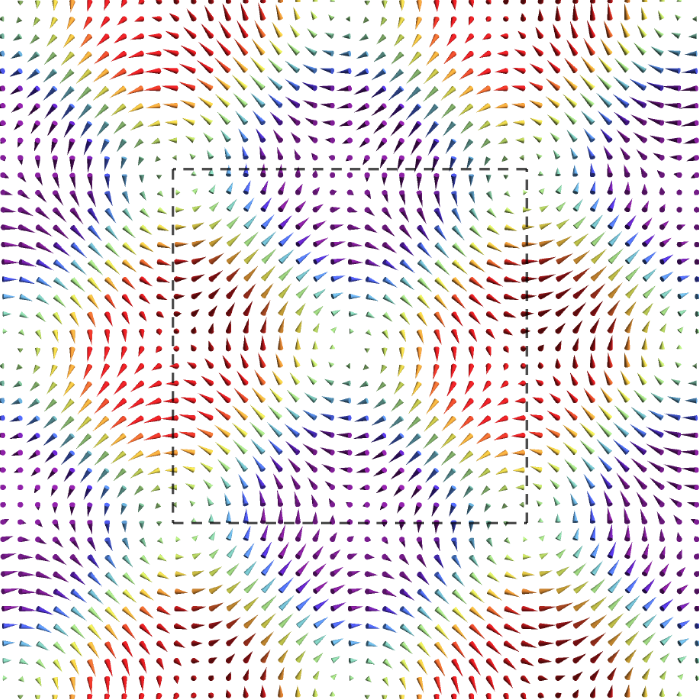}
 \caption{}
\end{subfigure}
\caption{Top view of (a) $\bs \varphi_1^{(3)}$ on hexagonal lattice and (b) $\bs \varphi_1^{(2)}$ on square lattice. The cones indicates the direction and length of the in-plane magnetizations; the color indicates the out-of-plane magnetizations (red: upwards, green: in-plane, blue: downwards). The dashed lines in (a) and (b) define a primitive cell of the hexagonal and square lattice, respectively.}\label{fig:lattice_top}
\end{figure}

\section{Linear stability of the bifurcation solutions}\label{sec:stability}

In this section we discuss the stability of bifurcation solutions following perturbation methods as e.g. in \cite{kielhofer2011bifurcation}.
Suppose $(\m_s, \lambda_s)$ is a bifurcation solution as in Theorem \ref{thm:main} with 
\[
\lambda_0= - 1 - \frac{\beta}{2}+ \sqrt{4\kappa^2+\frac{\beta^2}{4}}.
\]

We focus on the larger root as positivity turns out to be necessary for the stability of bifurcation solutions.
The linearization of \eqref{eq:main} at $(\m_s, \lambda_s)$ 
\[L_s=D_{\m} F(\m_s,\lambda_s):H^2_{\Lambda} \subset L^2_{\Lambda} \to L^2_{\Lambda}
\] 
is given by 
\begin{align*}
L_s \bs \phi &= -\lap \bs \phi + 2\kappa \grad \times \bs \phi +\lambda_s \bs \phi+\alpha\bra*{\abs{\m_s}^2\bs \phi +2 (\m_s\cdot \bs \phi)\m_s} + \beta \phi_3 \ein_3.
\end{align*}
We first investigate the spectrum of $L_0$. 
\begin{lemma}\label{la:L0} 
The spectrum of $L_0$ is discrete and non-negative if \eqref{eq:lambda0+} and \eqref{eq:stability_kappa} hold. Otherwise $L_0$ has negative eigenvalues. 
\end{lemma}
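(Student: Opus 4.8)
The plan is to diagonalize $L_0$ by the same Fourier expansion used in Lemma~\ref{la:lambda}. Since $L_0$ has constant coefficients and acts on $\Lambda$-periodic fields, it is unitarily equivalent to the orthogonal direct sum over $v\in\Lambda^*$ of the Hermitian $3\times3$ blocks acting on the Fourier amplitudes $(a_v,b_v,c_v)$,
\[
M(v)=\begin{pmatrix}
\abs{v}^2+\lambda_0 & 0 & 2\kappa i v_2 \\
0 & \abs{v}^2+\lambda_0 & -2\kappa i v_1 \\
-2\kappa i v_2 & 2\kappa i v_1 & \abs{v}^2+\lambda_0+\beta
\end{pmatrix}.
\]
Because $-\lap$ has compact resolvent on $H^2_\Lambda \hookrightarrow L^2_\Lambda$ and the zeroth- and first-order terms are relatively compact, $L_0$ is self-adjoint with discrete spectrum equal to $\bigcup_{v\in\Lambda^*}\mathrm{spec}\,M(v)$; it therefore suffices to track the eigenvalues of these blocks as $v$ ranges over the dual lattice.

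Next I would diagonalize each block. For $v\neq0$ the longitudinal vector $(v_1,v_2,0)$ is an eigenvector with eigenvalue $\abs{v}^2+\lambda_0$, and on the orthogonal plane spanned by $(-v_2,v_1,0)$ and $\ein_3$ the block reduces to a $2\times2$ Hermitian matrix with eigenvalues
\[
\Phi_\pm(\abs{v})=\abs{v}^2+\lambda_0+\frac{\beta}{2}\pm\sqrt{4\kappa^2\abs{v}^2+\frac{\beta^2}{4}}.
\]
As $\sqrt{4\kappa^2\abs{v}^2+\beta^2/4}\ge\beta/2$, the branch $\Phi_-$ is the smallest of the three eigenvalues, so $L_0\ge0$ is equivalent to $\Phi_-(\abs{v})\ge0$ for every $v\in\Lambda^*\setminus\{0\}$ together with non-negativity of the $v=0$ block $\mathrm{diag}(\lambda_0,\lambda_0,\lambda_0+\beta)$, whose smallest eigenvalue $\lambda_0$ must be non-negative, i.e.\ \eqref{eq:lambda0+}. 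Substituting $\lambda_0=-1-\beta/2+\sqrt{4\kappa^2+\beta^2/4}$ gives $\Phi_-(1)=0$, so the shortest vectors $\abs{v}=1$ reproduce exactly the kernel of Proposition~\ref{prop:fredholm}; the remaining task is to control $\Phi_-$ at the longer lattice vectors.

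The crux, and the step I expect to require the most care, is to show that among the nonzero lattice vectors the only additional condition comes from the second-shortest length $\gamma$. I would regard $g(t):=\Phi_-(t)$ as a smooth function of $t=\abs{v}\ge0$ and compute
\[
g'(t)=2t\,\bra*{1-\frac{2\kappa^2}{\sqrt{4\kappa^2t^2+\beta^2/4}}}.
\]
The bracketed factor is strictly increasing in $t$ and tends to $1$, hence changes sign at most once; consequently $g$ is unimodal on $[0,\infty)$, decreasing to a single minimum at some $t_*\ge0$ and increasing thereafter. Because $g(1)=0$, the inequality $g(\gamma)\ge0$ with $\gamma>1$ forces $\gamma$ to lie beyond $t_*$, so that $g$ is non-decreasing on $[\gamma,\infty)$ and hence $g(t)\ge0$ for every lattice length $t\ge\gamma$. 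Thus non-negativity over all nonzero modes reduces to the single inequality $g(\gamma)\ge0$, and I would finally identify this with \eqref{eq:stability_kappa}: writing $P=\sqrt{4\kappa^2+\beta^2/4}$ and $Q=\sqrt{4\kappa^2\gamma^2+\beta^2/4}$ and using $Q^2-P^2=4\kappa^2(\gamma^2-1)$, the inequality $g(\gamma)=\gamma^2-1+P-Q\ge0$ becomes, after dividing by $\gamma^2-1>0$, exactly $P+Q\ge4\kappa^2$.

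For the converse I would exhibit an explicit negative eigenvalue whenever a hypothesis fails: if \eqref{eq:lambda0+} fails the $v=0$ block already contributes $\lambda_0<0$, while if \eqref{eq:stability_kappa} fails then $g(\gamma)<0$ and the transverse mode at $\abs{v}=\gamma$ yields a negative eigenvalue of $L_0$. It is clarifying to record that $g(\abs{v})\ge0$ is equivalent to $\lambda_0\ge -\abs{v}^2-\beta/2+\sqrt{4\kappa^2\abs{v}^2+\beta^2/4}$, i.e.\ to $\lambda_0$ dominating every upper critical value of Lemma~\ref{la:lambda}; under \eqref{eq:stability_kappa} the choice $\abs{v}=1$ makes $\lambda_0$ the largest such value, whence any bifurcation centred at a longer wave number would occur at a smaller $\lambda$ at which the mode $\abs{v}=1$ is already unstable. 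Apart from the unimodality argument of the previous paragraph, every step is an explicit finite-dimensional eigenvalue computation, so I expect that reduction to the second critical wave number to be the only genuinely delicate point.
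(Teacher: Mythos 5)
Your proposal is correct and follows essentially the same route as the paper: Fourier diagonalization of $L_0$ into Hermitian $3\times 3$ blocks, identification of the lowest eigenvalue branch, and reduction of non-negativity to the second critical wave number $\gamma$, where rationalizing the difference of square roots yields exactly \eqref{eq:stability_kappa}. The only cosmetic difference is that you handle the reduction to $\gamma$ by a unimodality argument for $t\mapsto\Phi_-(t)$, whereas the paper factorizes
$\mu_{0,\omega-}=\bigl(\abs{\omega}^2-1\bigr)\Bigl(1-\tfrac{4\kappa^2}{\sqrt{4\kappa^2+\beta^2/4}+\sqrt{4\kappa^2\abs{\omega}^2+\beta^2/4}}\Bigr)$,
which makes the monotonicity in $\abs{\omega}$ immediate.
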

\begin{proof}
The equation $(L_0-\mu)\phi=0$ admits constant nonzero solutions at $\mu=\lambda_0$. So $L_0$ has negative eigenvalues with constant eigenfunctions
precisely if $\lambda_0 < 0$. 
As in Lemma \ref{la:lambda}, the eigenvalues of $L_0$ with non-constant eigenfunctions at $\lambda_0$ are
\begin{align*}
\mu_{0,\omega \pm} = \abs{\omega}^2 - 1 + \sqrt{4\kappa^2+\frac{\beta^2}{4}} \pm \sqrt{4\kappa^2\abs{\omega}^2+\frac{\beta^2}{4}}, \quad \omega \in \Lambda^* \setminus \set{0}.
\end{align*}
We have $\abs{\omega} \ge 1$ for all $\omega \in \Lambda^* \setminus \set{0}$, which, together with \eqref{eq:stability_kappa}, ensures that  $\mu_{0,\omega+} \ge 0$ and
\[
\mu_{0,\omega-} 
=\bra*{\abs{\omega}^2 - 1}\bra*{1-\frac{4\kappa^2}{\sqrt{4\kappa^2+\frac{\beta^2}{4}}+\sqrt{4\kappa^2\abs{\omega}^2+\frac{\beta^2}{4}}}}\ge 0
\]
for all $\omega \in \Lambda^* \setminus \set{0}$ and any $\beta \ge 0$.
\end{proof}

From now on we focus on the case $\lambda_0 >0$. 
It follows from Lemma \ref{la:L0} and the standard perturbation theory of eigenvalue \cite{kato} that the spectrum of $L_s$ consists of eigenvalues of the same multiplicities in an neighbourhood of the eigenvalues of $L_0$. Thus the stability of $(\m_s,\lambda_s)$ depends on the perturbation of the critical eigenvalue 0. It follows from Proposition \ref{prop:fredholm} that there exist the following topological decompositions
\[
L^2_\Lambda = \mathrm{ker}\, L_0 \oplus \mathrm{ran}\, L_0
\quad \text{and} \quad 
H^2_\Lambda = \mathrm{ker}\, L_0 \oplus X_1, 
\]
where $X_1= \{ \bs \phi \in H^2_{\Lambda}: \bs \phi \perp \ker L_0 \text{ in } L^2\}$.

We first consider the perturbation of the zero eigenvalue corresponding to the eigenvector $\bs \varphi_1$ spanning $\mathrm{Fix}_{X_0}(\Sigma)$.
\begin{lemma}\label{la:Ls} 
After reducing $\delta>0$ if necessary, there exists a smooth map
\[
(\mu,\bs \psi): (-\delta,\delta) \to \R \times X_1
\quad \text{with} \quad (\mu(0),\bs\psi(0))=(0,0)
\]
such that
\begin{equation}\label{eq:perturbedEigenvalue1}
L_s(\bs \varphi_1 + \bs \psi(s)) = \mu(s) (\bs \varphi_1 + \bs \psi(s)).
\end{equation}
Furthermore, there exists a $C>0$ so that $\mu(s)=C\alpha s^2$ for $s \in (-\delta,\delta)$.
\end{lemma}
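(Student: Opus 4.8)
The plan is to isolate the single eigenvalue of $L_s$ that branches off the zero eigenvalue of $L_0$ carried by $\bs \varphi_1$, to use $\Sigma$-equivariance to turn this degenerate eigenvalue into a simple one, and then to solve the eigenvalue relation \eqref{eq:perturbedEigenvalue1} by the implicit function theorem. First I would record that the bifurcation solution satisfies $\m_s \in \mathrm{Fix}(\Sigma)$, since the equivariant branching lemma produces $\Sigma$-symmetric solutions. Differentiating the equivariance identity $F(\sigma \bullet \m, \lambda) = \sigma \bullet F(\m,\lambda)$ in $\m$ at the $\Sigma$-fixed point $\m_s$ shows that $L_s = D_\m F(\m_s,\lambda_s)$ is $\Sigma$-equivariant, hence leaves the fixed subspaces $V := \mathrm{Fix}_{H^2_\Lambda}(\Sigma)$ and $W := \mathrm{Fix}_{L^2_\Lambda}(\Sigma)$ invariant and restricts to a family $L_s|_V : V \to W$. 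By Proposition \ref{prop:fredholm} the kernel of $L_0|_V$ is the single line $\mathrm{span}\,\set{\bs \varphi_1}$, so on $V$ the eigenvalue $0$ is simple. This restriction is the essential device: on all of $H^2_\Lambda$ the zero eigenvalue of $L_0$ has multiplicity $N$ (in general $>1$), so the eigenpair problem for a single mode is under-determined and a direct application of the implicit function theorem fails; passing to $V$ restores simplicity. Since the branch will be constructed inside $V$, we automatically obtain $\bs \psi(s) \in X_1 \cap V \subset X_1$.

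Next I would define the analytic map
\[
\Phi(s,\mu,\bs\psi) = L_s(\bs \varphi_1 + \bs \psi) - \mu (\bs \varphi_1 + \bs \psi), \qquad \Phi : (-\delta_0,\delta_0) \times \R \times (X_1 \cap V) \to W,
\]
with $\Phi(0,0,0) = L_0 \bs \varphi_1 = 0$. Its partial differential in $(\mu,\bs\psi)$ at the origin is $(\dot\mu,\dot{\bs\psi}) \mapsto L_0 \dot{\bs\psi} - \dot\mu\, \bs \varphi_1$. Using the self-adjoint Fredholm structure and $\Sigma$-equivariance of $L_0$, one has $W = \mathrm{span}\,\set{\bs \varphi_1} \oplus (\mathrm{ran}\, L_0 \cap W)$ with $L_0$ mapping $X_1 \cap V$ isomorphically onto $\mathrm{ran}\, L_0 \cap W$; combined with the $\bs \varphi_1$-direction supplied by $\dot\mu$ this makes the partial differential an isomorphism. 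The analytic implicit function theorem then provides, after shrinking $\delta$, a unique analytic branch $(\mu(s),\bs\psi(s))$ with $(\mu(0),\bs\psi(0)) = (0,0)$ solving \eqref{eq:perturbedEigenvalue1}; equivalently one could invoke Kato's perturbation theory for a simple eigenvalue.

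Finally, for the asymptotics I would insert $\m_s = s\bs \varphi_1 + O(s^3)$ and $\lambda_s = \lambda_0 + \nu_2 s^2 + O(s^4)$ from \eqref{eq:BifurcationSolution} into
\[
(L_s - L_0)\bs \phi = (\lambda_s - \lambda_0)\bs \phi + \alpha\bra*{\abs{\m_s}^2 \bs \phi + 2(\m_s \cdot \bs \phi)\m_s},
\]
which is $O(s^2)$. Since the eigenvalue starts at $0$, first-order (Rayleigh--Schr\"odinger) perturbation gives $\mu(s) = \scp{(L_s - L_0)\bs \varphi_1, \bs \varphi_1} + O(s^4)$, using $\norm{\bs \varphi_1} = 1$. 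Evaluating the three contributions at order $s^2$, namely $\nu_2 \scp{\abs{\bs \varphi_1}^2}$, $\alpha \scp{\abs{\bs \varphi_1}^4}$ and $2\alpha\scp{\abs{\bs \varphi_1}^4}$ (the last because $\m_s \cdot \bs \varphi_1 = s\abs{\bs \varphi_1}^2 + O(s^3)$), and substituting $\nu_2 = -\alpha \scp{\abs{\bs \varphi_1}^4}$ from \eqref{eq:lambda2} with $\scp{\abs{\bs \varphi_1}^2} = 1$, the $\nu_2$-term cancels one copy and leaves $\mu(s) = 2\alpha\scp{\abs{\bs \varphi_1}^4}\, s^2 + O(s^4)$. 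Hence $C = 2\scp{\abs{\bs \varphi_1}^4} > 0$, which is exactly the positivity needed in the subsequent stability analysis. The main obstacle throughout is the degeneracy of the zero eigenvalue: the whole construction hinges on the equivariant reduction to $V$, and the delicate point in the asymptotics is the sign-definite cancellation $\nu_2 + 3\alpha\scp{\abs{\bs \varphi_1}^4} = 2\alpha\scp{\abs{\bs \varphi_1}^4}$ that renders $C$ strictly positive rather than of indefinite sign.
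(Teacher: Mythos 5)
Your proposal is correct, and its skeleton --- an implicit function theorem applied to the eigenvalue relation \eqref{eq:perturbedEigenvalue1}, followed by a Taylor expansion of $\mu(s)$ --- is the same as the paper's; it also lands on the same constant, since the paper computes $\ddot\mu(0)=4\alpha\scp{\abs{\bs\varphi_1}^4}$, i.e.\ $\mu(s)=2\alpha\scp{\abs{\bs\varphi_1}^4}s^2$ to leading order, matching your $C=2\scp{\abs{\bs\varphi_1}^4}$. The genuine difference is how the degeneracy of $\ker L_0$ is neutralized. The paper keeps the corrector $\bs\psi$ in the full complement $X_1$ and instead assigns its map $G$ the codomain $\mathrm{Fix}_{X_0}(\Sigma)\oplus\mathrm{ran}\,L_0$, on the strength of the assertion that $L_s$ maps $\mathrm{Fix}_{X_0}(\Sigma)\oplus X_1$ into that space; by equivariance this is clear for $\Sigma$-fixed inputs, but for a general (non-symmetric) $\bs\psi\in X_1$ the cubic term $\alpha\bra*{\abs{\m_s}^2\bs\psi+2(\m_s\cdot\bs\psi)\m_s}$ can have a nonzero component along $X_0\ominus\mathrm{Fix}_{X_0}(\Sigma)$, so the paper's well-definedness claim is delicate exactly where your argument is clean. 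Your restriction of the unknown to $X_1\cap\mathrm{Fix}_{H^2_\Lambda}(\Sigma)$ and of the target to $W=\mathrm{Fix}_{L^2_\Lambda}(\Sigma)$, together with the decomposition $W=\mathrm{span}\set{\bs\varphi_1}\oplus\bra*{\mathrm{ran}\,L_0\cap W}$ and the isomorphism $L_0\colon X_1\cap\mathrm{Fix}_{H^2_\Lambda}(\Sigma)\to\mathrm{ran}\,L_0\cap W$, makes the operator manifestly well-defined and the zero eigenvalue simple, at the mild price of checking $\Sigma$-equivariance of $L_s$ and of the spectral projections --- which you do; this buys a more robust version of the same argument. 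The two computations of the coefficient are equivalent: the paper differentiates \eqref{eq:perturbedEigenvalue1} twice in $s$ and tests with $\bs\varphi_1$, while you test the identity once and expand $L_s-L_0$; for your stated $O(s^4)$ error you implicitly need $\bs\psi(s)=O(s^2)$, which follows since $L_0\dot{\bs\psi}(0)=\dot\mu(0)\bs\varphi_1$ forces $\dot\mu(0)=0$ and then $\dot{\bs\psi}(0)\in X_1\cap\ker L_0=\set{0}$; even without this refinement the cruder $O(s^3)$ error already yields the positivity of $\mu(s)$ for small $s\neq 0$ that the stability analysis requires.
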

\begin{proof}
We introduce the smooth operator
\[
G: \bra*{\mathrm{Fix}_{X_0}(\Sigma) \oplus X_1} \times \R \times X_1 \times \R \to \mathrm{Fix}_{X_0}(\Sigma) \oplus \mathrm{ran}\, L_0
\]
given by
\[
G(\m_s, \lambda_s, \bs \psi, \mu)= L_s(\bs \varphi_1 + \bs\psi) - \mu(\bs \varphi_1 + \bs\psi).
\]
Since $L_s\bs \phi \in \mathrm{Fix}_{X_0}(\Sigma) \oplus \mathrm{ran}\, L_0$ for $\bs \phi \in \mathrm{Fix}_{X_0}(\Sigma)\oplus X_1$, this operator is well-defined.
As $G(0,\lambda_0,0,0)= 0$ and the differential
\[
\partial_{(\bs \psi, \mu)} G(0,\lambda_0,0,0): X_1 \times \R \to \mathrm{Fix}_{X_0}(\Sigma) \oplus \mathrm{ran}\, L_0
\]
given by
\[
\partial_{(\bs \psi, \mu)}G(0,\lambda_0,0,0) 
= \begin{pmatrix}L_0 \\ -\bs \varphi_1\end{pmatrix}
\]
is invertible, the implicit function theorem provides a smooth map
\[
(\mu, \bs\psi): (-\delta,\delta) \to \R \times X_1 \quad \text{with} \quad (\mu(0), \bs\psi(0))=(0,0)
\]
so that
 \[
G(\m_s,\lambda_s,\bs \psi(s), \mu(s))=0 \quad \text{for} \quad s \in (-\delta,\delta).
\]
It remains to examine the properties of $\mu(s)$.
Differentiating \eqref{eq:perturbedEigenvalue1} with respect to $s$ in $s=0$ yields
\begin{equation}\label{eq:mu1}
\begin{aligned}
\dot{\mu}(0)\bs \varphi_1=&D_{\m}^2 F(\m_0,\lambda_0) \scp{\bs \varphi_1, \bs \varphi_1} + 
D_\lambda D_{\m} F(\m_0,\lambda_0)\scp{\bs \varphi_1}\dot{\lambda}_s |_{s=0}\\
& \quad +
D_{\m} F(\m_0,\lambda_0)\scp{\dot{\bs \psi}(0)}
= L_0\dot{\bs \psi}(0)
\end{aligned}
\end{equation}
since $\dot{\lambda}_s = 2s \nu_2 + O(s^2)$ and 
\begin{align*}
D_{\m}^2 F(\m_0,\lambda_0) \scp{\bs \varphi, \bs \psi} =& 2\alpha \bra*{(\bs \varphi \cdot \bs \psi) \m_0+(\m_0\cdot \bs \varphi)\bs \psi+(\m_0\cdot \bs \psi)\bs \varphi}
=0
\end{align*}
with $\m_0=0$. Testing \eqref{eq:mu1} with $\bs \varphi_1$ yields 
\begin{align*}
\dot{\mu}(0) = \frac{\scp{L_0\dot{\bs \psi}(0), \bs \varphi_1}}{\scp{\bs \varphi_1,\bs \varphi_1}} =0.
\end{align*} 
Calculating the second derivative of $\mu(s)$ at $s=0$, we obtain
\begin{align*}
\ddot{\mu}(0)\bs \varphi_1 &= D_{\bs m}^3 F(\m_0,\lambda_0) \scp{\bs \varphi_1, \bs \varphi_1, \bs \varphi_1}+D_{\m} F(\m_0,\lambda_0)\scp{\ddot{\bs \psi}(0)} + D_\lambda D_{\m} F(\m_0,\lambda_0)\scp{\bs \varphi_1}\ddot{\lambda}(0).
\end{align*}
Taking into account
\begin{align*}
& D_\lambda D_{\m} F(\m_0,\lambda_0)\scp{\bs \varphi} = \bs \varphi, \quad
\ddot{\lambda}(0)=2\nu_2 \\ 
& D_{\m}^3 F(\m_0,\lambda_0)\scp{\bs \varphi, \bs \phi, \bs \psi}= 2\alpha\bra*{(\bs \varphi \cdot \bs \phi) \bs \psi + (\bs \varphi \cdot \bs \psi) \bs \phi + (\bs \phi \cdot \bs \psi) \bs \varphi }
\end{align*}
and using Theorem \ref{thm:main} yields
\begin{align*}
\ddot{\mu}(0) &= \frac{6\alpha\scp{\abs{\bs \varphi_1}^2\bs \varphi_1,\bs \varphi_1}+2\nu_2\scp{\bs \varphi_1,\bs\varphi_1}}{\scp{\bs \varphi_1,\bs \varphi_1}}= 4 \alpha \scp{\abs{\bs \varphi_1}^4}.
\end{align*}
Hence $\mu(s) = C\alpha s^2$ with a positive constant $C$. Provided $\alpha>0$, we have $\mu(s)>0$ for nonzero $s \in (-\delta,\delta)$.
\end{proof}
%
The helical solution on non-equilateral lattices is stable.
\begin{proposition}\label{prop:Stability_nonequi}
For $\beta \ge 0$ and every nonzero $s \in (-\delta,\delta)$, the bifurcation solution on non-equilateral lattices is linearly stable in the sense that
\[ L_s \ge 0 \quad\text{with}\quad \ker L_s = \mathrm{span}\set{\partial_2 \m_s}.\]
\end{proposition}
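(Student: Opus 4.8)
The plan is to locate the spectrum of the self-adjoint operator $L_s$ near the critical eigenvalue $0$ by combining the non-negativity of $L_0$ from Lemma \ref{la:L0} with a symmetry-adapted perturbation argument, thereby reducing everything to the two eigenvalue branches emanating from the degenerate kernel of $L_0$.

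First I would record the structure of $\ker L_0$ on a non-equilateral lattice. By Proposition \ref{prop:fredholm} it is the two-dimensional space $\mathrm{span}\{\bs\phi_{1,v^{(1)}},\bs\phi_{2,v^{(1)}}\}$ with $v^{(1)}=(0,1)$, and $\mathrm{Fix}_{X_0}(\Sigma_1)$ is spanned by the symmetric mode $\bs\varphi_1=\sqrt{2}\,\bs\phi_{1,v^{(1)}}$ of \eqref{eq:phi1}. Differentiating \eqref{eq:phi1} shows that the second, antisymmetric kernel vector is, up to normalization, the translational mode: $\partial_2\bs\varphi_1=\sqrt{2}\,\bs\phi_{2,v^{(1)}}$. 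Since $\m_s$ is $\Sigma_1$-invariant it is independent of $x_1$, so $\partial_1\m_s\equiv0$ and the unique translational zero mode of $L_s$ is $\partial_2\m_s=s\,\partial_2\bs\varphi_1+O(s^3)$, which is nonzero for small $s\neq0$ and is the candidate generator of $\ker L_s$.

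Next I would exploit the residual reflection symmetry. Because $\m_s\in\mathrm{Fix}(\Sigma_1)$, the operator $L_s$ is $\Sigma_1$-equivariant and in particular commutes with the reflection $R$ generating $\mathbb{Z}_2=\{I,R\}$, so it respects the splitting $L^2_\Lambda=L^+\oplus L^-$ into $R$-symmetric and $R$-antisymmetric fields. Here $\bs\varphi_1\in L^+$ while $\bs\phi_{2,v^{(1)}}=\tfrac{1}{\sqrt{2}}\partial_2\bs\varphi_1\in L^-$, so the double zero eigenvalue of $L_0$ consists of one simple zero eigenvalue in each symmetry class. On $L^+$, Lemma \ref{la:Ls} identifies the perturbed eigenvalue as $\mu(s)=C\alpha s^2>0$ for $s\neq0$, with eigenvector a symmetric perturbation of $\bs\varphi_1$. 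On $L^-$, differentiating the euclidean equivariance $F(\m_s(\cdot-t),\lambda_s)=0$ of Lemma \ref{lemma:curl} in $t$ gives $L_s\partial_2\m_s=0$, so the perturbed eigenvalue in the antisymmetric class stays pinned at $0$ with eigenvector $\partial_2\m_s$.

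Finally I would close the argument by analytic perturbation theory \cite{kato}. The spectrum of $L_0$ is discrete with a spectral gap: apart from the double eigenvalue $0$, all eigenvalues $\mu_{0,\omega\pm}$ computed in Lemma \ref{la:L0} are strictly positive and bounded away from $0$ under the standing assumptions \eqref{eq:lambda0+} and \eqref{eq:stability_kappa}. Hence, after reducing $\delta$ if necessary, for $s\in(-\delta,\delta)$ every eigenvalue of $L_s$ not originating from $\ker L_0$ remains strictly positive, while the two branches originating from $\ker L_0$ are $\mu(s)=C\alpha s^2>0$ and the pinned zero. This yields $L_s\ge0$ with $\ker L_s=\mathrm{span}\{\partial_2\m_s\}$ for every nonzero $s\in(-\delta,\delta)$. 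The only delicate point is the uniform spectral control guaranteeing that, apart from the translational mode, no eigenvalue returns to $0$; the symmetry decomposition is what makes this clean, since it isolates the symmetric branch handled by Lemma \ref{la:Ls} from the antisymmetric branch handled by the translational identity, avoiding any direct treatment of a degenerate two-dimensional eigenvalue cluster.
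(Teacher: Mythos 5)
Your proposal is correct and follows essentially the same route as the paper's proof: identify $\ker L_0=\mathrm{span}\set{\bs\varphi_1^{(1)},\partial_2\bs\varphi_1^{(1)}}$, use Lemma \ref{la:Ls} for the strictly positive branch $\mu(s)=C\alpha s^2$, use translational invariance to pin the second branch at zero with eigenvector $\partial_2\m_s$, and control the rest of the spectrum via Lemma \ref{la:L0} and standard perturbation theory. Your extra $\mathbb{Z}_2$-reflection splitting $L^2_\Lambda=L^+\oplus L^-$ is a clean (but not strictly necessary) device for separating the two branches of the double zero eigenvalue; the paper handles this implicitly through the linear independence of the two constructed eigenvectors inside the multiplicity-two cluster.
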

\begin{proof}
Recall that on non-equilateral lattices
\[
\mathrm{ker}\,L_0=\mathrm{span } \set{\bs \phi_{1,v^{(1)}}} \oplus \mathrm{span } \set{ \bs \phi_{2,v^{(1)}}},
\] 
where both $\bs \varphi_1^{(1)} = \bs \phi_{1,v^{(1)}}$ and $\bs \phi_{2,v^{(1)}}$ depend only on the spatial variable $x_2$ and 
$\del_2 \bs \varphi_1^{(1)} = \bs \phi_{2,v^{(1)}}$ By translational invariance $\partial_2 \m_s=s\partial_2 \bs \varphi_1 + O(s^3)$ is, for small $s$, a non-trivial element of $\ker L_s$, and the claim follows with
Lemma \ref{la:Ls}.
\end{proof}

The quadratic vortex-antivortex lattice is stable under large enough anisotropy.
\begin{proposition}\label{la:Stability_equi}
For every nonzero $s \in (-\delta,\delta)$, the vortex-antivortex bifurcation solution on square lattices is linearly stable in the sense that
\[ L_s \ge 0 \quad\text{with}\quad \ker L_s = \mathrm{span}\set{\partial_1 \m_s, \partial_2 \m_s}.\]
 provided $\alpha>0$ and
\[
\beta>\frac{4}{\sqrt{3}}\kappa \approx 2.3 \kappa.
\]
For $\beta<\frac{4}{\sqrt{3}}\kappa$ there exists $\bs\phi \in L^2_\Lambda$ such that $\scp{L_s\bs \phi, \bs \phi} < 0$. 
\end{proposition}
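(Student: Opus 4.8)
The plan is to reduce the stability question on the square lattice to a finite-dimensional spectral analysis on the kernel $X_0 = \ker L_0$, exactly as in Lemma~\ref{la:Ls}, but now tracking \emph{all} $N=4$ zero eigenvalues of $L_0$ rather than just the one along $\bs\varphi_1$. First I would invoke the topological decomposition $L^2_\Lambda = \ker L_0 \oplus \mathrm{ran}\,L_0$ and $H^2_\Lambda = \ker L_0 \oplus X_1$ together with Kato perturbation theory: since $L_0$ is self-adjoint Fredholm with a spectral gap (Lemma~\ref{la:L0}, using \eqref{eq:lambda0+} and \eqref{eq:stability_kappa}), all eigenvalues of $L_0$ away from $0$ stay strictly positive for small $s$, so the sign of $L_s$ is governed entirely by the four eigenvalues bifurcating from $0$. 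The standard Lyapunov--Schmidt reduction then produces a smooth $4\times 4$ symmetric matrix $\mathcal{M}(s)$, representing the reduced operator $P_0 L_s (I + \text{corrections})$ on $X_0$, whose eigenvalues are the perturbed critical eigenvalues; linear stability of $(\m_s,\lambda_s)$ is equivalent to $\mathcal{M}(s)\ge 0$ having a two-dimensional kernel spanned by the translation modes.

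Next I would identify a convenient basis of $X_0$. On the square lattice the kernel is $\mathrm{span}\{\bs\phi_{i,v^{(j)}}:\ i=1,2,\ j=2,3\}$, and I would work in the basis adapted to the two translation directions. The key structural input is that $\partial_1 \m_s$ and $\partial_2 \m_s$ lie in $\ker L_s$ by translational equivariance of $F$, so two eigenvalues of $\mathcal{M}(s)$ are identically zero and contribute the required kernel; the remaining issue is the sign of the other two perturbed eigenvalues. I expect that by symmetry ($\Sigma_2 = D_2$-equivariance and the action of the holohedry) the matrix $\mathcal{M}(s)$ block-diagonalizes, with one block already handled by Lemma~\ref{la:Ls} (the $\bs\varphi_1 = \bs\varphi_1^{(2)}$ direction, giving a strictly positive eigenvalue $C\alpha s^2$), and the remaining nontrivial eigenvalue computed to leading order in $s^2$ via the same second-order perturbation formula. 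Computing $\ddot\mu(0)$ for that mode involves $D^3_{\m}F(0,\lambda_0)$ paired against products of the kernel vectors $\bs\phi_{i,v^{(j)}}$, which reduces to evaluating averages $\scp{(\bs\phi_{i,v^{(j)}}\cdot\bs\phi_{k,v^{(l)}})\,\cdots}$ over the cell using the explicit trigonometric form of the $\bs\phi$'s and Parseval.

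The crux is to show that the sign of this remaining leading-order coefficient switches exactly at $\beta = \tfrac{4}{\sqrt3}\kappa$. Here the parameter $A$ in \eqref{eq:linearSol} enters: through the normalization $\frac{1}{\sqrt{1+A^2}}$ and the relation $A = 2\kappa/(-\tfrac\beta2 + \sqrt{4\kappa^2+\beta^2/4})$, the various mode-coupling integrals become explicit rational functions of $A$ (equivalently of $\beta/\kappa$), and the threshold $\beta = \tfrac{4}{\sqrt3}\kappa$ should emerge as the root of the resulting leading coefficient. I would carry this out by writing the reduced quadratic form $\scp{L_s\bs\phi,\bs\phi}$ for a general $\bs\phi\in X_0$, extracting its $s^2$-term, and diagonalizing the associated $2\times 2$ form transverse to the translations. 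For the instability direction when $\beta < \tfrac{4}{\sqrt3}\kappa$, I would exhibit the explicit $\bs\phi\in\ker L_0$ along which this coefficient is negative, so that $\scp{L_s\bs\phi,\bs\phi} = s^2(\text{negative}) + O(s^4) < 0$ for small nonzero $s$.

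The main obstacle I anticipate is the bookkeeping of the second-order perturbation on a degenerate kernel: unlike the simple eigenvalue treated in Lemma~\ref{la:Ls}, the four-fold degeneracy forces one to account for the off-diagonal couplings and the resolvent correction $L_0^{-1}$ restricted to $\mathrm{ran}\,L_0$ when computing $\ddot\mu(0)$. Concretely, the mode $\bs\varphi_1^{(2)}$ mixes the two wave vectors $v^{(2)}, v^{(3)}$, and the nonlinearity $\alpha|\m_s|^2\m_s$ with $\m_s = s\bs\varphi_1^{(2)}+O(s^3)$ generates second-harmonic contributions that project nontrivially back onto $X_0$ via the resolvent term $\ddot{\bs\psi}(0)$; controlling this term and showing it does not spoil the clean threshold is the delicate point. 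I expect the $D_2$-symmetry and the explicit Fourier structure to keep these corrections tractable, ultimately reducing everything to a single sign computation that produces the stated value $\tfrac{4}{\sqrt3}\kappa$.
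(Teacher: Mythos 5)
Your proposal is correct and follows essentially the same route as the paper's proof: isolate the four critical eigenvalues, assign the two identically vanishing ones to the translation modes $\partial_1\m_s,\partial_2\m_s$, reuse Lemma~\ref{la:Ls} for the $\bs\varphi_1^{(2)}$-direction, and determine the sign of the remaining mode $\tilde{\bs\varphi}_1=\bs\phi_{1,v^{(2)}}+\bs\phi_{1,v^{(3)}}$ --- which the paper handles by observing that $\tilde{\bs\varphi}_1$ spans the fixed space of the $\mathbb{Z}_4$ rotation group $\tilde\Sigma$ (precisely the block-diagonalization by holohedry you anticipate), so the scalar implicit-function argument of Lemma~\ref{la:Ls} applies verbatim and yields $\tilde\mu(s)=\tilde C\alpha s^2$ with $\tilde C=\frac{A^2-3}{A^2+1}$, positive exactly when $\beta>\frac{4}{\sqrt{3}}\kappa$. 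The degenerate-perturbation bookkeeping you flag as the main obstacle is in fact harmless: since $\m_s=s\bs\varphi_1+O(s^3)$ and $\lambda_s=\lambda_0+s^2\nu_2+O(s^4)$, the perturbation of $L_s$ has no $O(s)$ term and the resolvent correction $L_0\ddot{\bs\psi}(0)$ is annihilated when tested against kernel vectors, so the $s^2$-coefficient is exactly the projected cubic term and the threshold comes out clean.
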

\begin{proof}
Recall that on square lattices 
\begin{align*}
\ker \,L_0 &=\mathrm{span } \set{\bs \phi_{1,v^{(2)}}, \bs \phi_{1,v^{(3)}}, \bs \phi_{2,v^{(2)}}, \bs \phi_{2,v^{(3)}}}\\
&=
\mathrm{span } \set{\bs \phi_{1,v^{(2)}} -\bs \phi_{1,v^{(3)}}} \oplus \mathrm{span } \set{\bs \phi_{1,v^{(2)}} +\bs \phi_{1,v^{(3)}}} \oplus \mathrm{span } \set{\bs \phi _{2,v^{(2)}},\bs \phi _{2,v^{(3)}}}.
\end{align*}
Note that $\bs\varphi_1^{(2)} =\bs \phi_{1,v^{(2)}} -\bs \phi_{1,v^{(3)}}$ is $\Sigma_2$-invariant, while $\tilde{\bs\varphi_1} =\bs \phi_{1,v^{(2)}} +\bs \phi_{1,v^{(3)}}$ is invariant under another symmetry group $\tilde{\Sigma}= \set{R_k, \, k=0,1,2,3}$ where the associated $SO(3)$ elements are
\[
\bs{R}_k = \begin{pmatrix}
\cos \frac{k\pi}{4} & - \sin \frac{k\pi}{4} & 0 \\
\sin \frac{k\pi}{4} & \cos \frac{k\pi}{4} & 0 \\
0 & 0 & 1
\end{pmatrix}.
\]
Repeating the argument from Lemma \ref{la:Ls}, there exists a smooth map
\[
(\tilde\mu,\tilde{\bs \psi}): (-\delta,\delta) \to \R \times X_1
\quad \text{with} \quad (\tilde\mu(0),\tilde{\bs\psi}(0))=(0,0)
\]
such that
\[
L_s(\tilde{\bs \varphi}_1 + \tilde{\bs \psi}(s)) = \tilde\mu(s) (\tilde{\bs \varphi}_1 + \tilde{\bs \psi}(s)).
\]
Moreover, we obtain $\tilde\mu(s)=\tilde C \alpha s^2$ with
\[
\tilde C=4\scp{(\tilde{\bs \varphi_1} \cdot \bs \varphi_1^{(2)})^2}+2\scp{\abs{\tilde{\bs \varphi_1}}^2 \abs{ \bs \varphi_1^{(2)}}^2}+2\frac{\nu_2}{\alpha} \scp{\abs{\tilde{\bs \varphi_1}}^2}
= \frac{A^2-3}{A^2+1}
\]
where $A$ is the amplitude given in \eqref{eq:linearSol} depending on $\kappa$ and $\beta$. $\tilde C$ is positive provided $A^2>3$, which is fulfilled if $\beta>\frac{4}{\sqrt{3}}\kappa$.

Finally $\partial_i \m_s=s\partial_i \bs \varphi_1 + O(s^3)$, $i=1,2$, are linearly independent for small $s$ due to the linear independence of $\partial_1 \bs \varphi_1^{(2)}=\bs \phi _{2,v^{(2)}}$ and $\partial_2 \bs \varphi_1^{(2)}=-\bs \phi _{2,v^{(3)}}$, and annihilate $L_s$ by translational invariance of $F$. We conclude that $L_s \ge 0$ and 
\[\ker L_s = \mathrm{span}\set{\partial_1 \m_s, \partial_2 \m_s}.\]

If $\beta<\frac{4}{\sqrt{3}}\kappa$, then $\scp{L_s \bs \phi, \bs \phi} < 0$ for any $\bs\phi \in \mathrm{Fix}_{X_0}(\tilde\Sigma)$.
\end{proof}

The same argument proves that the helical bifurcation solution on square lattice is stable if $\beta<\frac{4}{\sqrt{3}}\kappa$ and unstable for $\beta > \frac{4}{\sqrt{3}}\kappa$, see \cite{Li2020}.

The hexagonal skyrmion lattice is unstable independently of any additional easy-plane anisotropy: For example, for any nonzero $s \in (-\delta,\delta)$
\[
\scp{L_s \bs \phi, \bs \phi}=-\frac{\alpha(2A^2+3)}{3(A^2+1)}s^2 + O(s^4)<0
\]
for $\bs \phi= \bs \phi_{1,v^{(4)}}-\bs \phi_{1,v^{(5)}}$. Similarly it can be shown that the vortex-antivortex bifurcation solution on hexagonal lattice is unstable and the helical bifurcation solution is linearly stable under any easy-plane anisotropy, for details see \cite{Li2020}.
\section{Numerical simulations}\label{sec:num}

\subsection{Numerical scheme}

To examine critical points we consider the $L^2$-gradient flow equation for the energy functional $E_\Lambda$ 
\begin{equation}\label{eq:gradientflow}
\partial_t \m + \mathrm{grad}_{L^2} E_\Lambda(\m)=0.
\end{equation}
Decomposing the energy gradient into a linear (second-order elliptic) 
\[
\mathcal{L} \m := -\lap \m +2\kappa \grad \times \m + \lambda \m + \beta m_3 \ein_3
\]
and a nonlinear part 
\[
\mathcal{N}(\m) := -\alpha \abs{\m}^2 \m,
\]
the gradient flow equation \eqref{eq:gradientflow} reads
\begin{equation}\label{eq:gradientFlow}
\partial_t \m + \mathcal{L} \m = \mathcal{N}(\m).
\end{equation}
We aim to find equilibria of the energy functional $E_\Lambda$ by solving \eqref{eq:gradientFlow} numerically on a primitive cell $\Omega_{\Lambda}$ induced by lattice spanned by $\set{2\pi\ein_1 , 2\pi \tau }$. Eq. \eqref{eq:gradientFlow} is discretized by a modified Crank-Nicolson approximation for the time variable and a Fourier collocation method for the space variable. We denote $\m_N$ the trigonometric interpolation function of $\m$ on the discretized grid by $N^2$ collocation points
\[
x_{ij}:=2\pi \bra*{\frac{i}{N}, \frac{-\tau \cos \theta i + j}{N \tau \sin \theta}}, \quad (i,j) \in \N_N^2,
\]
for $\N_N=\set{0,\ldots,N-1}$, $N \in \N$ and odd. For continuous fields $\bs u,\bs v$ on $\Omega_{\Lambda}$, we define the discrete $L^2$ scalar product
\[
\scp{\bs u,\bs w}_N:=\bra*{\frac{2\pi}{N}}^2\frac{1}{\abs{\Omega_{\Lambda}}}\sum_{i=0}^{N-1} \sum_{j=0}^{N-1} \bs u(x_{ij})\cdot \bs w(x_{ij}),
\]
the associated norm $\norm{\cdot}_N$, defined by $\norm{\bs u}^2_N := \scp{\bs u,\bs u}_N$ and the discrete energy
\[
E_N(\m):=\frac{1}{2}\scp{\m, \mathcal{L}\m}_N + \scp{1, W(\m)}_N \quad \text{with} \quad W(\m)= \frac{\alpha}{4} |\m|^4.
\]
Our numerical scheme at time iteration $n+1$ reads: find $\m^{n+1}$ such that
\[
\frac{\m^{n+1}_N-\m^n_N}{\Delta t} + \mathcal{L} \bra*{\frac{\m^{n+1}_N+\m^n_N}{2}} = I_N\mathcal{N}(\m^n_N,\m^{n+1}_N),
\]
where $I_N$ denotes the trigonometric interpolation operator and 
\[
\mathcal{N}(\bs u,\bs w)=-\alpha\frac{\bs u+\bs w}{4}\bra*{\abs{\bs u}^2 + \abs{\bs w}^2}.
\]
At each time iteration, in order to find the solution $\m^{n+1}_N$, we use a fixed point iteration
\[
\m^{n+1}_{N,k+1} = \bra*{\mathrm{Id}+\frac{\Delta t}{2}\mathcal{L}}^{-1}\pra*{\bra*{\mathrm{Id}-\frac{\Delta t}{2}\mathcal{L}}\m^n_N + \Delta t \mathcal{N}(\m_{N,k}^{n+1},\m^n_N)}
\]
for some time step $\Delta t$. Well-posedness and a-priori error bounds of this numerical scheme follow analogously as in \cite{condette2011spectral}. Given
initial data $\m_N^0$, the corresponding sequence $(\m_N^n)_{n \in \N_0}$ satisfies the following energy law
\[
\frac{1}{\Delta t} \norm{\m_N^{n+1}-\m_N^n}_N^2 + E_N(\m_N^{n+1}) = E_N(\m_N^n), \quad n \in \N_0.
\]
We have implemented this numerical scheme in MATLAB. At each time-step the iteration process stops if a certain norm of the difference of two successive iterations becomes smaller than a chosen stopping tolerance. In our case we choose the $L^\infty$-norm and set the stopping tolerance to $10^{-8}$. The discrete energy is evaluated in each time step, and the terminal time is controlled through a smallness condition for the discrete energy gradient, i.e.
\[
\frac{E(\m_N^n)-E(\m_N^{n+1})}{\Delta t}< 10^{-7}.
\]
After the termination of the scheme, an equilibrium configuration is reached approximately.

\subsection{Numerical experiments}\label{sec:numexp}

We have implemented the method on a lattice of $275 \times 275$ grid points and with a time increment $\Delta t=0.1$ for different parameters and a randomly distributed 
initial field with modulus between 0 and 0.1 as initial condition. 

\paragraph{Parameter study and assessment of the stability condition \eqref{eq:lambda0+}.} First we implemented the simulations on a square lattice for different parameters $\kappa$ and $\beta$ near the bifurcation point by setting $\lambda = \lambda_0 + \delta\nu_2$, where $\delta=0.01$, $\lambda_0$ and $\nu_2$ are calculated according to \eqref{eq:lambda0+} and \eqref{eq:lambda2}, respectively. 

For $\kappa \in (0,0.5)$, the bifurcation point $\lambda_0$ is negative for any $\beta \ge 0$. In this case, the stability condition \eqref{eq:lambda0+} is not fulfilled we obtained an almost homogeneous field.

When the value of $\kappa$ was increased over $0.5$, vortex-antivortex lattice configurations were observed for $\beta \ge 0$ small enough so that $\lambda_0>0$. For $\beta$ large enough so that $\lambda_0<0$, the vortex-antivortex lattice configuration decayed to an almost homogenous field, as shown in Figure \ref{fig:kappa0608}.

\begin{figure}[htbp]
 \centering
 \hfill
 \begin{subfigure}[b]{0.22\textwidth}
 \centering
 \includegraphics[width=\textwidth]{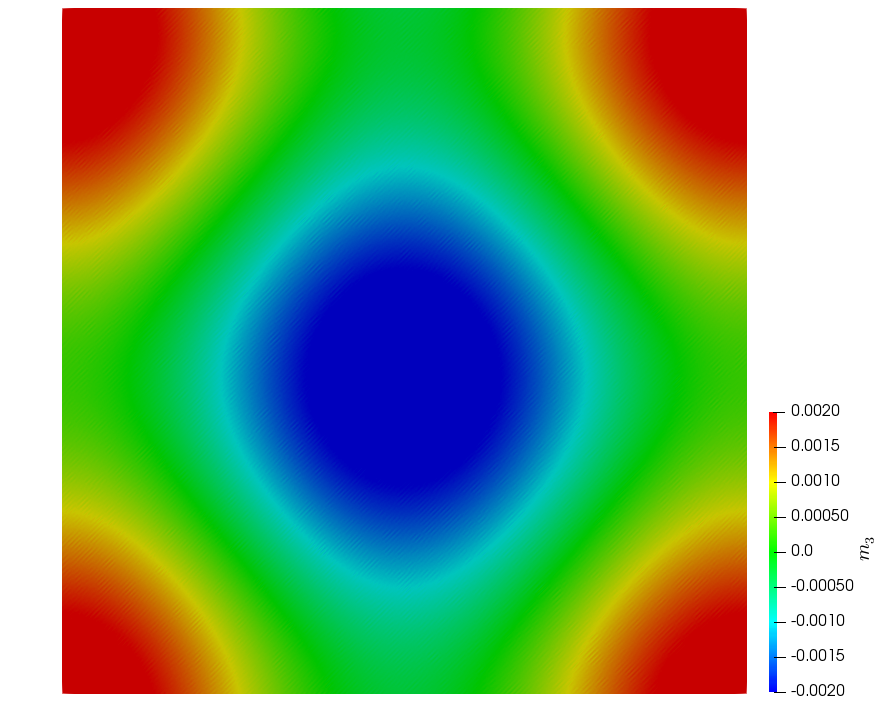}
 \caption{}
 \end{subfigure}
 \hfill
 \begin{subfigure}[b]{0.22\textwidth}
 \centering
 \includegraphics[width=\textwidth]{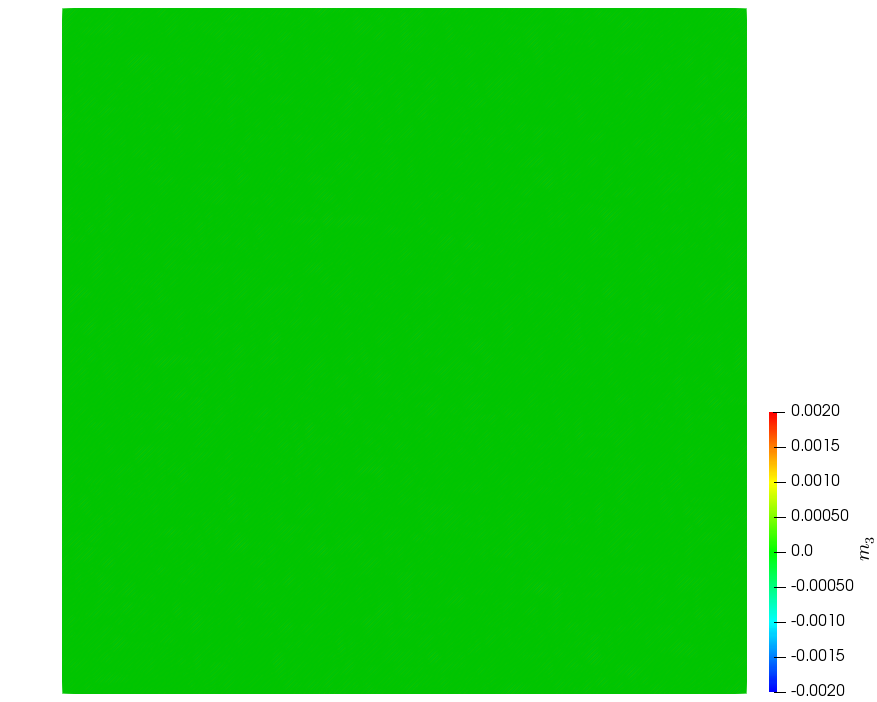}
 \caption{}
 \end{subfigure}
 \hfill
 \begin{subfigure}[b]{0.22\textwidth}
 \centering
 \includegraphics[width=\textwidth]{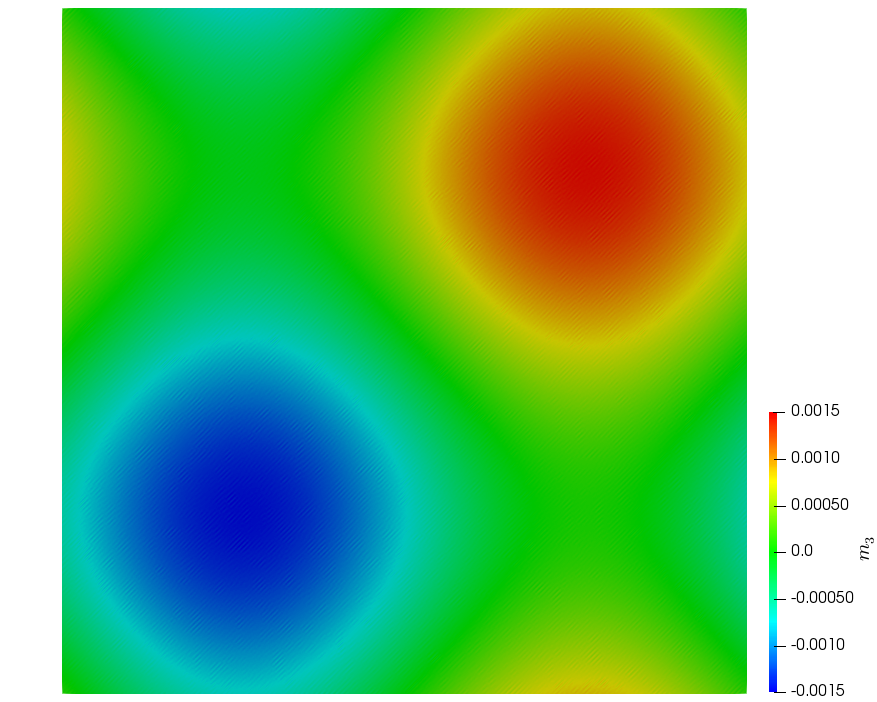}
 \caption{}
 \end{subfigure}
 \hfill
 \begin{subfigure}[b]{0.22\textwidth}
 \centering
 \includegraphics[width=\textwidth]{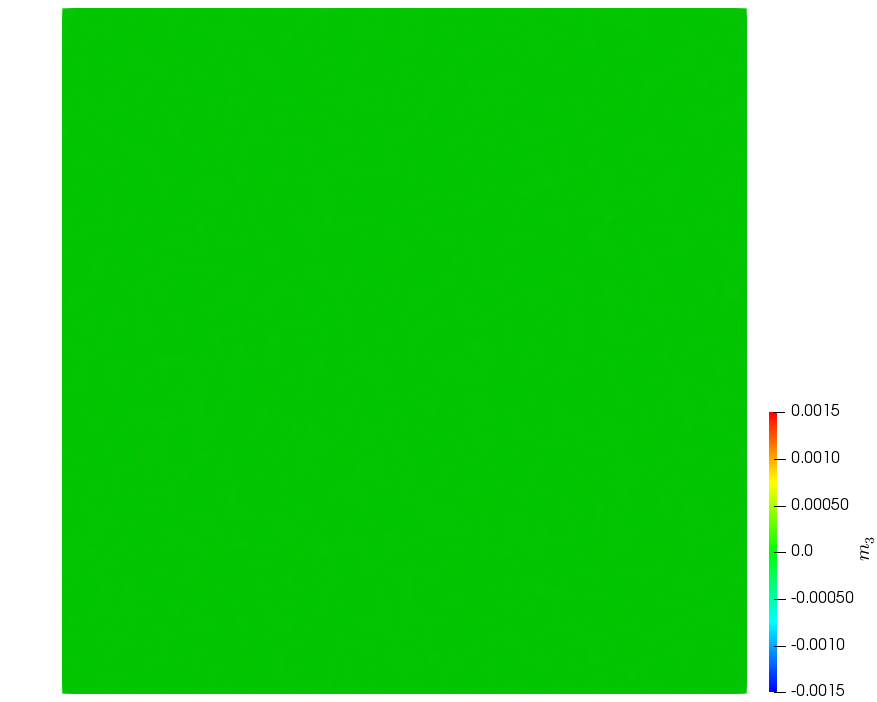}
 \caption{}
 \end{subfigure}
 \caption{Simulation results on square lattice for $\alpha=1$ and different values of $\kappa$ and $\beta$, respectively (a) $\kappa=-0.6$, $\beta=0$, $\lambda=0.1875$, (b) $\kappa=-0.6$, $\beta=1$, $\lambda=-0.2125$, (c) $\kappa=0.8$, $\beta=1$, $\lambda=0.1638$, (d) $\kappa=0.8$, $\beta=2$, $\lambda=-0.1257$.}
 \label{fig:kappa0608}
\end{figure}

Other patterns emerged for $\kappa> 1.2$ and small $\beta \ge 0$. For example, at $\kappa=1.4$ and small $\beta \ge 0$ the solution converged to a stripe pattern, i.e. helices with a pitch smaller than $2\pi$. 
Vortex-antivortex lattice configurations were observed for $\beta$ in the admissible region, see Figure \ref{fig:para_range}, i.e. $\beta$ larger than the stability threshold (about $2.3\kappa$) and $\lambda_0>0$. Increasing $\beta$ further so that $\lambda_0<0$, we obtained the almost homogeneous field again, as shown in Figure \ref{fig:kappa14}. 

\begin{figure}[htbp]
 \centering
 \hfill
 \begin{subfigure}[b]{0.22\textwidth}
 \centering
 \includegraphics[width=\textwidth]{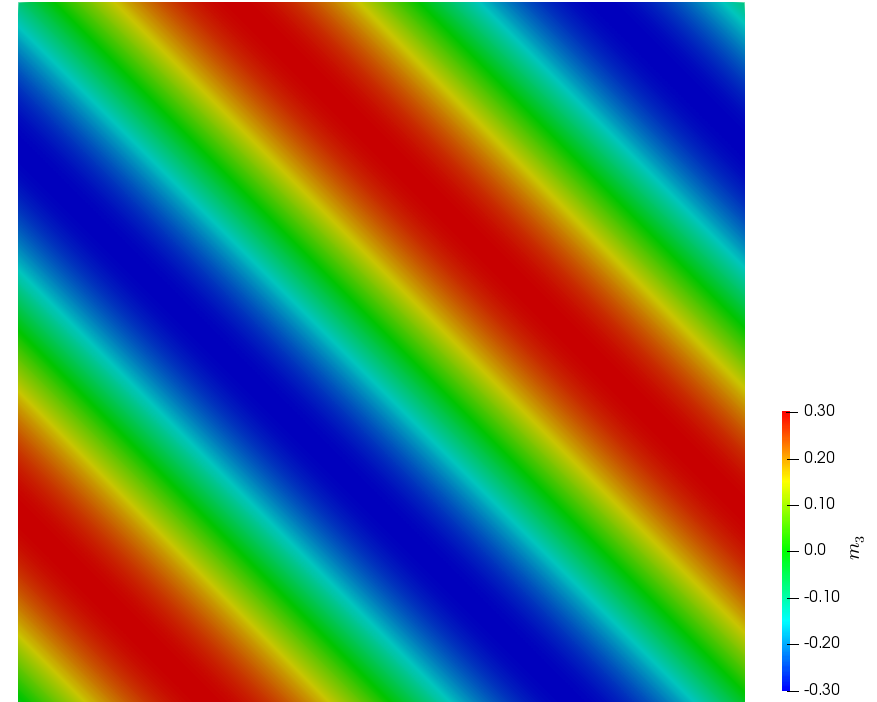}
 \caption{$\beta=3$}
 \end{subfigure}
 \hfill
 \begin{subfigure}[b]{0.22\textwidth}
 \centering
 \includegraphics[width=\textwidth]{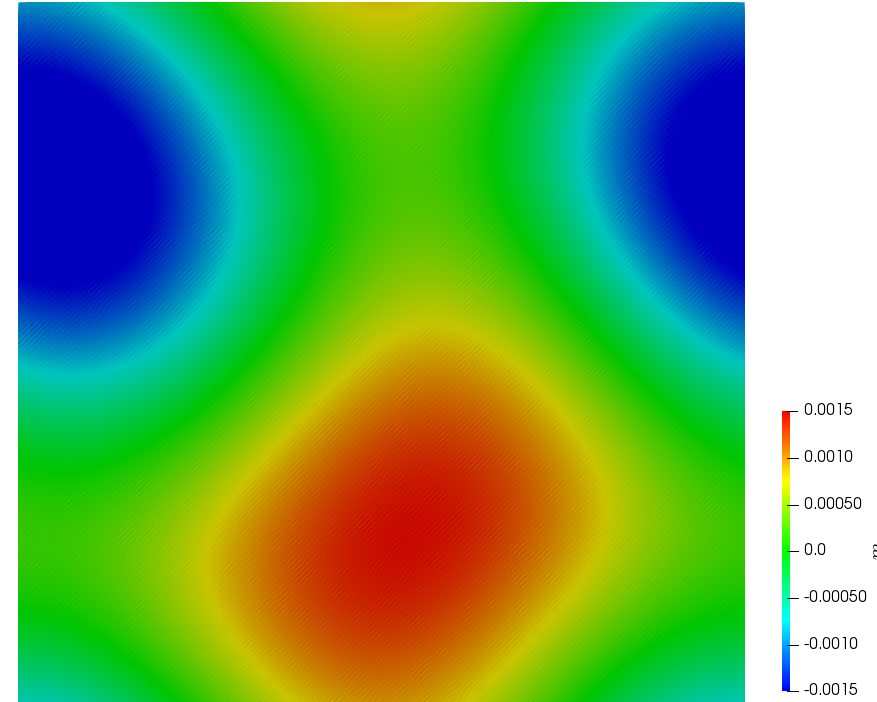}
 \caption{$\beta=4$}
 \end{subfigure}
 \hfill
 \begin{subfigure}[b]{0.22\textwidth}
 \centering
 \includegraphics[width=\textwidth]{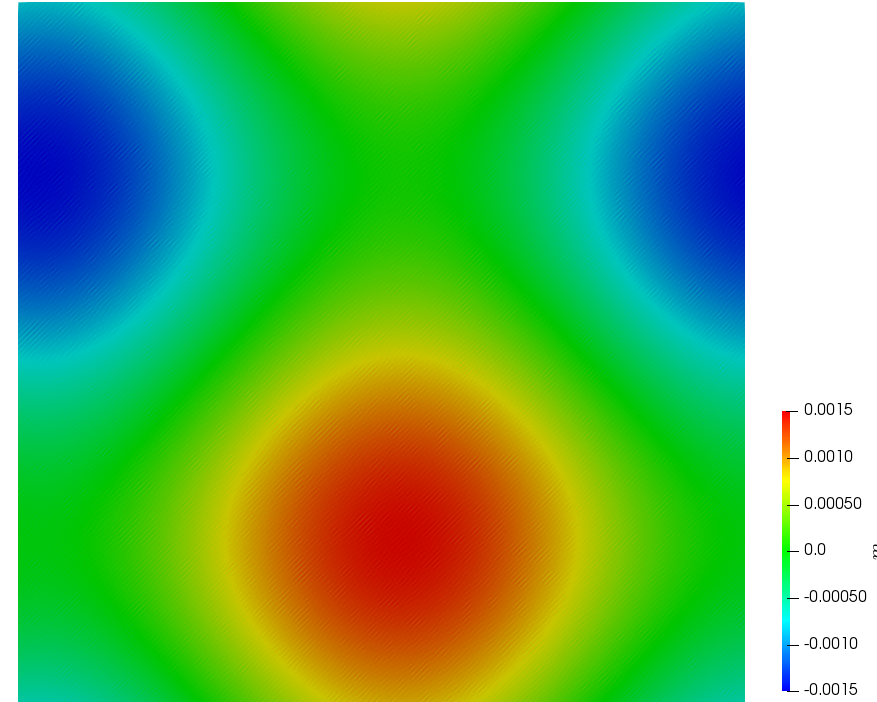}
 \caption{$\beta=5$}
 \end{subfigure}
 \hfill
 \begin{subfigure}[b]{0.22\textwidth}
 \centering
 \includegraphics[width=\textwidth]{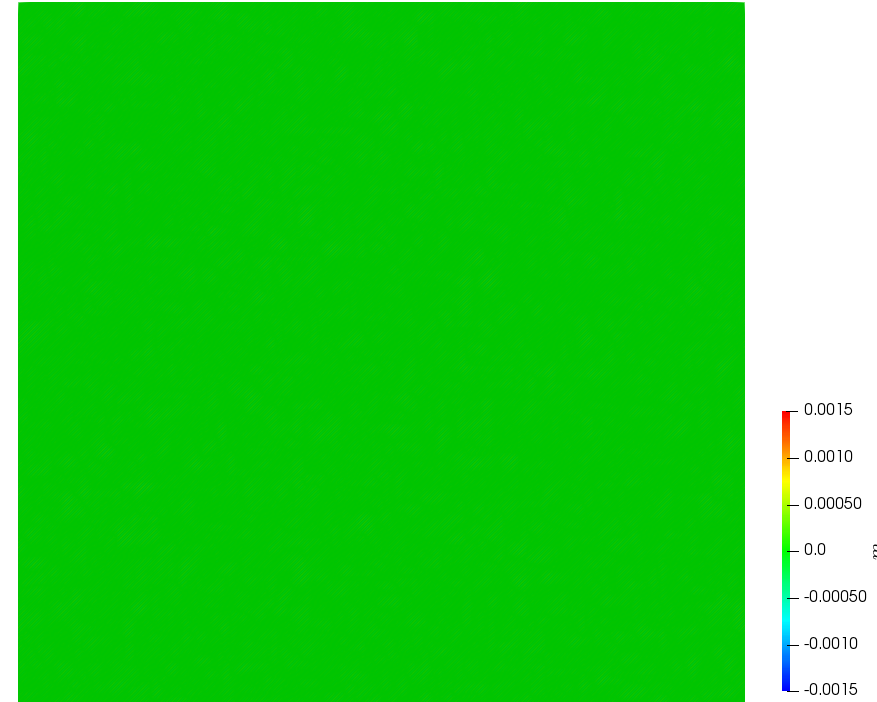}
 \caption{$\beta=7$}
 \end{subfigure}
 \caption{Simulation results on square lattice for $\kappa=1.4$, $\alpha=1$ and different values of $\beta$, respectively (a) $\beta=3$, $\lambda=0.6640$, (b) $\beta=4$, $\lambda=0.4284$, (c) $\beta=5$, $\lambda=0.2412$, (d) $\beta=7$, $\lambda=-0.0303$.}
 \label{fig:kappa14}
\end{figure}

\paragraph{Stability of vortex-antivortex solution under the perturbation of lattices} 
We proved the stability of quadratic vortex-antivortex solutions in certain parameter region under $\Lambda$-periodic perturbations (see Section \ref{sec:stability}). 
Complementarily we examine the persistence of quadratic vortex-antivortex solutions under the perturbations of lattice shape by implementing the simulations on different lattices with the same parameters. For small perturbations of $\tau=e^{i\pi/2}$ we obtained the vortex-antivortex lattice configuration, while for $\abs{\tau}$ large enough only the helix state was observable, as shown in Figure \ref{fig:tau}.

\begin{figure}[htbp]
 \centering
 \hfill
 \begin{subfigure}[b]{0.3\textwidth}
 \centering
 \includegraphics[width=\textwidth]{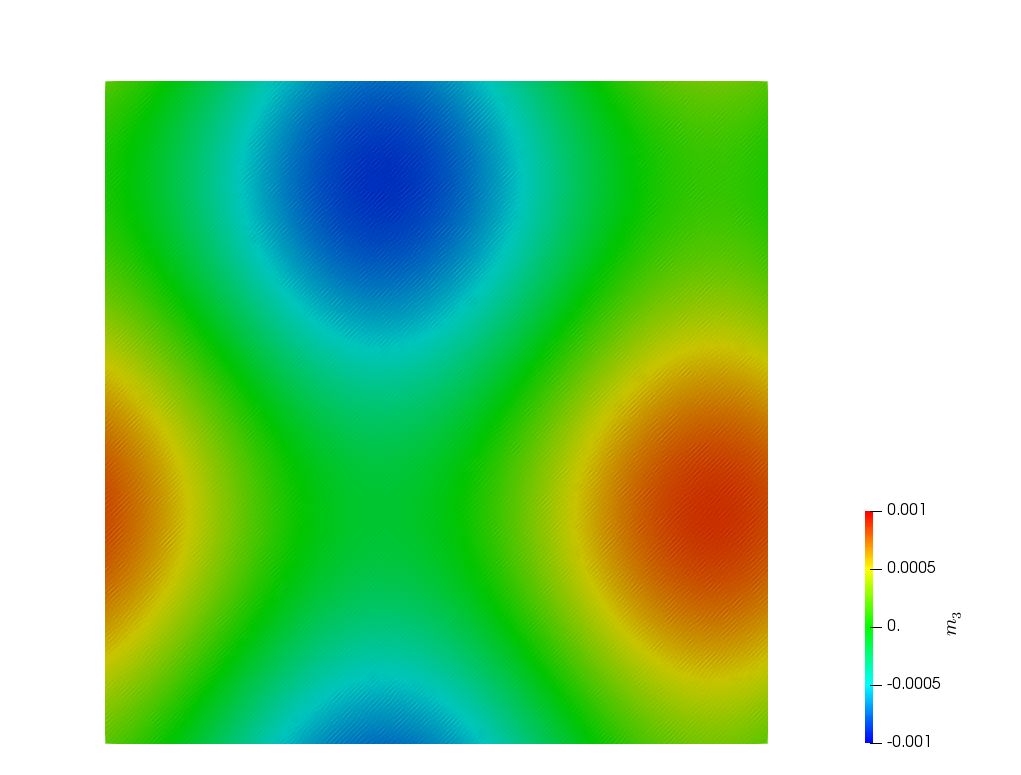}
 \caption{ }
 \end{subfigure}
 \hfill
 \begin{subfigure}[b]{0.3\textwidth}
 \centering
 \includegraphics[width=\textwidth]{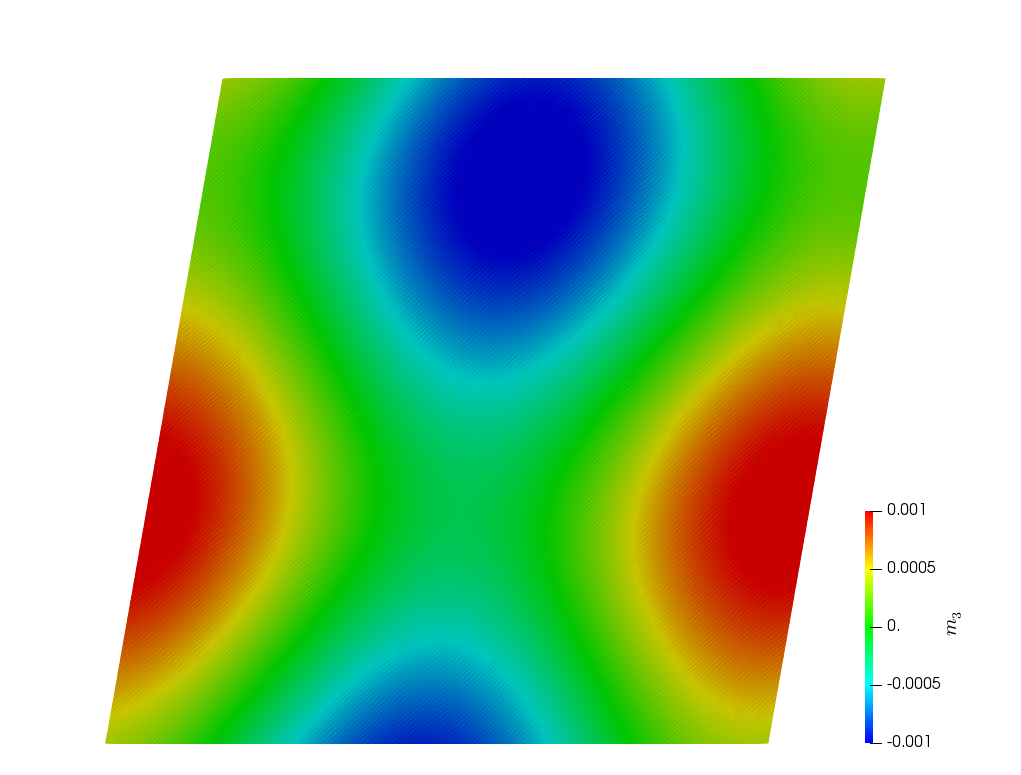}
 \caption{ }
 \end{subfigure}
 \hfill
 \begin{subfigure}[b]{0.3\textwidth}
 \centering
 \includegraphics[width=\textwidth]{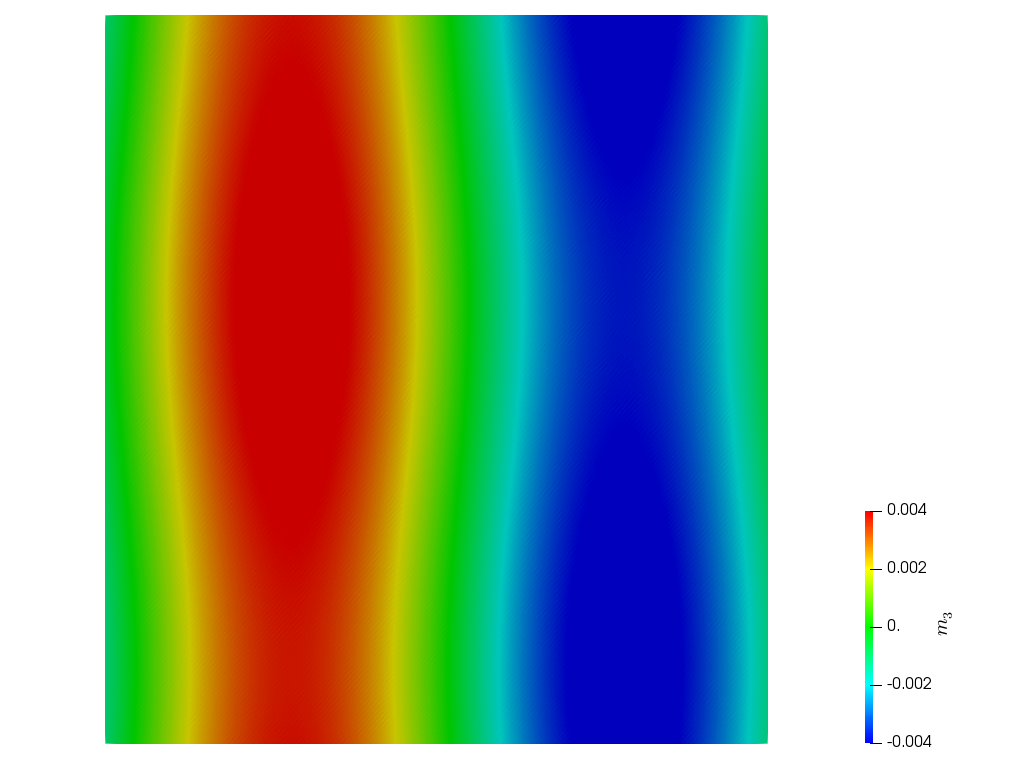}
 \caption{ }
 \end{subfigure}
 \caption{Simulation results on lattices $\abs{\tau} e^{i \theta}$ for $\kappa=1.4$, $\alpha=1$, $\beta=5$, $\lambda=0.2412$ and different values of $\tau$, respectively (a) $\abs{\tau}=1.0$, $\theta=90^\circ$, (b) $\abs{\tau}=1.02$, $\theta=80^\circ$, (c) $\abs{\tau}=1.10$, $\theta=90^\circ$.}
 \label{fig:tau}
\end{figure}


\subsection*{Acknowledgements} 
This work is support by Deutsche Forschungsgemeinschaft (DFG grant no. ME 2273/3-1). 
\appendix

\section{Helical state}\label{sec:helix}

In this section we consider vector fields $\m$ defined on a domain in $\R^3$. Given $\bs{R} \in O(3)$ we write
\[
\m_{\bs{R}}(\x) := \bs{R}\m(\bs{R}^{T}\x) \quad \text{for} \quad \x \in \R^3.
\]

\begin{lemma} \label{lemma:curl}
For smooth $\m$ it holds that
\[
\nabla \times \m_{\bs{R}} = \det \bs{R} \; (\nabla \times \m)_{\bs{R}}
\]
and hence
\[
\m_{\bs{R}} \cdot (\nabla \times \m_{\bs{R}}) = \det \bs{R}  \; \left(  \m \cdot (\nabla \times \m) \right)  \circ \bs{R}^{T}. 
\]
\end{lemma}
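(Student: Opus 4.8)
The plan is to prove the pointwise identity componentwise in index notation, writing the curl as $(\grad \times \bs v)_i = \epsilon_{ijk}\, \partial_j v_k$ with the Levi-Civita symbol $\epsilon_{ijk}$ and the Einstein summation convention, and then to reduce everything to a single algebraic identity for $\bs R \in O(3)$.

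First I would set $\bs y = \bs R^T \x$, so that $(\m_{\bs R})_k(\x) = R_{kl}\, m_l(\bs y)$ and, by the chain rule, $\partial y_a/\partial x_j = R_{ja}$. Differentiating gives
\[
(\grad \times \m_{\bs R})_i = \epsilon_{ijk}\, \partial_{x_j}\!\big(R_{kl}\, m_l(\bs y)\big) = \epsilon_{ijk}\, R_{ja} R_{kl}\, (\partial_a m_l)(\bs y).
\]
On the other hand, by definition $(\grad\times\m)_{\bs R}(\x)_i = R_{ip}\,\epsilon_{pal}\,(\partial_a m_l)(\bs y)$. Comparing the two expressions, the lemma reduces to the purely algebraic identity
\[
\epsilon_{ijk}\, R_{ja}\, R_{kl} = \det \bs R \; R_{ip}\,\epsilon_{pal}.
\]

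I would establish this from the standard determinant identity $\epsilon_{ijk}\, R_{ip}R_{jq}R_{kr} = \det\bs R\,\epsilon_{pqr}$, valid for any $3\times 3$ matrix, by contracting it with one more copy of $\bs R$ and using orthogonality. Concretely, multiplying $\epsilon_{mjk}R_{mp}R_{jq}R_{kr} = \det\bs R\,\epsilon_{pqr}$ by $R_{ip}$, summing over $p$, and invoking $R_{mp}R_{ip} = (\bs R\bs R^T)_{im} = \delta_{im}$ collapses the left-hand side to $\epsilon_{ijk}R_{jq}R_{kr}$ and yields the desired identity with $q=a$, $r=l$. This proves $\grad\times\m_{\bs R} = \det\bs R\,(\grad\times\m)_{\bs R}$. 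The second displayed formula then follows immediately: taking the Euclidean inner product with $\m_{\bs R}(\x)=\bs R\,\m(\bs y)$ and using that $\bs R\in O(3)$ preserves inner products, so $(\bs R\bs a)\cdot(\bs R\bs b)=\bs a\cdot\bs b$, gives $\m_{\bs R}\cdot(\grad\times\m_{\bs R}) = \det\bs R\,\big(\m\cdot(\grad\times\m)\big)\circ\bs R^T$.

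There is no genuine obstacle here; the computation is routine once the bookkeeping is set up. The only points requiring care are the correct handling of the transpose in the chain rule (which produces $R_{ja}$ rather than $R_{aj}$) and the precise invocation of the Levi-Civita determinant identity, where the factor $\det\bs R$ emerges. This factor is exactly the sign that distinguishes proper rotations from reflections and is the source of the chirality selection in the DM term discussed in the introduction.
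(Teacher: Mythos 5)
Your proof is correct and follows essentially the same route as the paper's: chain rule plus the transformation law of the cross product under $O(3)$, where the factor $\det \bs{R}$ appears, and the isometry property of $\bs{R}$ for the second identity. The only difference is presentational — the paper invokes the $O(3)$ equivariance of the vector product directly, while you derive it in index notation from the Levi-Civita determinant identity $\epsilon_{ijk}R_{ip}R_{jq}R_{kr}=\det\bs{R}\,\epsilon_{pqr}$ and orthogonality, which makes your argument slightly more self-contained but not a different proof.
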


\begin{proof} Summing over repeated indices we write the curl as
\[
\nabla \times \m_{\bs{R}} = \sum_{j=1}^3 \ein_j \times \del_j \m_{\bs{R}} 
\]
By the chain rule $\del_j \m_{\bs{R}}= R_{jk} (\del_k \m)_{\bs{R}}$ and with $\bs{R} \ein_k = \ein_j R_{jk}$ it follows
form the $O(3)$ skew-symmetry of the vector product that 
\[
\nabla \times \m_{\bs{R}}= \det \bs{R} \; (\nabla \times \m)_{\bs{R}}.
\]
The second claim follows from the isometry property of $\bs{R}$.
\end{proof}

\begin{lemma} \label{lemma:Beltrami}
Suppose $\kappa \not=0$ and $\nabla \times \m + \kappa \m=0$ for some measurable $\m$ field with $|\m|=M$ on a connected domain in $\R^3$.
Then $\m$ is a helix of pitch $2\pi/\kappa$, i.e., there exists $\bs{R} \in SO(3)$ such that 
\[
\m(x)= M ( \bs h_{\bs{R}})(\kappa \x) \quad \text{where} \quad 
\bs h(\x)=(0, \cos x_1, \sin x_1).
\]
\end{lemma}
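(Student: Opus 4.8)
The plan is to first upgrade regularity, then extract from pointwise algebra that the Jacobian $\grad\m$ has rank one, deduce that $\m$ depends on a single linear coordinate, and finally integrate the resulting ordinary differential equation to read off the rotation $\bs R$.

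I would begin by recording the standard consequences of $\grad\times\m=-\kappa\m$ together with $|\m|=M$. Taking the divergence gives $\grad\cdot\m=0$ (since $\kappa\neq0$), so the curl-curl identity yields $-\lap\m=\grad\times(\grad\times\m)=-\kappa\,\grad\times\m=\kappa^2\m$, and elliptic regularity promotes the a priori only measurable field to a real-analytic one. The Bochner identity $\tfrac12\lap|\m|^2=\m\cdot\lap\m+|\grad\m|^2$ combined with $|\m|^2\equiv M^2$ forces $|\grad\m|^2=\kappa^2M^2$, while the vector identity $\tfrac12\grad|\m|^2=(\m\cdot\grad)\m+\m\times(\grad\times\m)$ gives $(\m\cdot\grad)\m=0$, because $\m\times(\grad\times\m)=-\kappa\,\m\times\m=0$.

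Next I would analyse the Jacobian $J=\grad\m$, $J_{ij}=\del_j m_i$. The curl condition fixes its antisymmetric part $A$ as $\tfrac\kappa2$ times the cross-product matrix $[\m]_\times$ (up to sign), so $|A|_F^2=\tfrac12\kappa^2M^2$, while $(\m\cdot\grad)\m=0$ and $\grad|\m|^2=0$ give $J\m=J^{T}\m=0$, hence the symmetric part $S$ is trace-free and annihilates $\m$. Since $|S|_F^2=|J|_F^2-|A|_F^2=\tfrac12\kappa^2M^2$, the matrix $S$ has eigenvalues $0,\pm\tfrac{\kappa M}{2}$ with $\m$ spanning the kernel. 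Restricting $J=S+A$ to the plane $\m^\perp$, where $A$ acts as $\tfrac{\kappa M}{2}$ times a rotation by $\pi/2$ and $S$ as $\tfrac{\kappa M}{2}$ times a reflection, a direct determinant computation gives $\det(J|_{\m^\perp})=0$; together with $J\m=0$ and $|J|_F\neq0$ this shows that $\grad\m$ has rank exactly one everywhere, with range contained in $\m^\perp$.

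Consequently $\m:\Omega\to\R^3$ has constant rank one, so locally $\m=\bs\gamma\circ\phi$ for a scalar submersion $\phi$ and an immersed curve $\bs\gamma$ with $|\bs\gamma|=M$; reparametrising I may take $|\bs\gamma'|=1$. Then $\grad m_i=\gamma_i'(\phi)\,\grad\phi$, and computing $\grad\times\m=\grad\phi\times\bs\gamma'=-\kappa\bs\gamma$ and $\grad\cdot\m=\bs\gamma'\cdot\grad\phi=0$ shows $\{\grad\phi,\bs\gamma',\bs\gamma\}$ is an orthogonal triad with $|\grad\phi|=\kappa M$ (taking $\kappa>0$ without loss of generality). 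The crux, which I expect to be the main obstacle, is to show the propagation direction $\bs e:=\grad\phi/(\kappa M)$ is a constant vector. This is forced by $\grad\times\grad\phi=0$: because $\bs e=\bs\gamma\times\bs\gamma'/M$ is a function $\bs E(\phi)$ of $\phi$ alone, the chain rule gives $0=\grad\times(\kappa M\bs E)=\kappa M\,(\grad\phi\times\bs E')=\kappa^2M^2\,\bs E\times\bs E'$, so $\bs E'\parallel\bs E$; since $|\bs E|\equiv1$ forces $\bs E'\perp\bs E$, we get $\bs E'=0$, hence $\bs e\equiv\bs r_1$ constant and $\phi(\x)=\kappa M(\bs r_1\cdot\x)+\text{const}$.

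Finally, writing $\xi=\bs r_1\cdot\x$ the field reads $\m(\x)=\tilde{\bs\gamma}(\xi)$, where $\tilde{\bs\gamma}$ lies in the fixed plane $\bs r_1^\perp$ (as $\bs\gamma\perp\bs e=\bs r_1$) on the circle of radius $M$. The Beltrami condition becomes the planar equation $\bs r_1\times\tilde{\bs\gamma}'=-\kappa\tilde{\bs\gamma}$, whose constant-modulus solutions rotate uniformly at rate $\kappa$; choosing a correctly oriented orthonormal basis $\{\bs r_2,\bs r_3\}$ of $\bs r_1^\perp$ gives $\m(\x)=M\bigl[\cos(\kappa\,\bs r_1\cdot\x)\,\bs r_2+\sin(\kappa\,\bs r_1\cdot\x)\,\bs r_3\bigr]$. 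Setting $\bs R=[\bs r_1\,|\,\bs r_2\,|\,\bs r_3]$, arranged in $SO(3)$ (the handedness is pinned by the sign of $\kappa$, and flipping the sign of $\bs r_1$ secures $\det\bs R=+1$), this is exactly $\m=M(\bs h_{\bs R})(\kappa\x)$ with $\bs h(\x)=(0,\cos x_1,\sin x_1)$; connectedness of $\Omega$ together with analyticity upgrades the local identity to a global one.
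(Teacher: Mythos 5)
Your argument is correct, and while it shares the paper's overall skeleton --- the identical regularity upgrade ($\nabla\cdot\m=0$, $-\lap\m=\kappa^2\m$, elliptic regularity), then a rank-one Jacobian, then reduction to a one-variable ODE --- both substantive middle steps are carried out by genuinely different means. For rank one, the paper never computes $|\nabla\m|^2$: from $(\m\cdot\nabla)\m=(\nabla\times\m)\times\m=0$ and the identity $\nabla \cdot \left( (\m \cdot \nabla) \m \right) = \tr (\nabla \m)^2 + (\m \cdot \nabla)( \nabla \cdot \m)$ it gets $\tr(\nabla\m)=\tr(\nabla\m)^2=0$, and pointwise $2\times 2$ algebra then yields $(\nabla\m)^2=0$, hence zero determinant; you instead use the Bochner identity to fix the Frobenius norm $|\nabla\m|^2=\kappa^2M^2$, identify the antisymmetric part as $-\tfrac{\kappa}{2}[\m]_\times$, and evaluate the determinant of $\nabla\m$ restricted to $\m^{\perp}$ by hand --- more computational, but it also delivers the full eigenstructure of the symmetric part. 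For the one-dimensional reduction, the paper writes $\nabla\m=-\bs X\otimes\bs Y$ with unit vector fields and proves $\nabla\bs X=0$ using symmetry of the Hessian, $\tr(\nabla\bs X)=\nabla\cdot\bs X=0$, and the spectral theorem (a symmetric trace-free $3\times3$ matrix with two-dimensional kernel vanishes); you instead invoke the constant rank theorem to factor $\m=\bs\gamma\circ\phi$ and kill the variation of the axis with the observation that $0=\nabla\times\nabla\phi=\kappa^2M^2\,\bs E\times\bs E'$, which together with $|\bs E|\equiv 1$ forces $\bs E'=0$. Your route buys a cleaner integrability argument (curl of a gradient) at the price of importing the constant rank theorem, whereas the paper stays within elementary tensor calculus and linear algebra; the concluding ODE integration, the orientation/phase bookkeeping (glossed in both proofs, and harmless), and the local-to-global step via connectedness are essentially the same.
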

 
\begin{proof} Upon rescaling one may assume $M=1$ and $\kappa=1$. Taking the divergence it follows that $\nabla \cdot \m= 0$ 
and hence $\Delta \m + \m =0$ in the sense of distributions by taking the curl. So $\m$ is smooth by virtue of standard elliptic regularity theory. In particular, it is enough to prove the claim locally. Denoting the componentwise gradient
by $(\nabla \m)_{jk}:=( \del_j m_k)$ we claim that 
\begin{equation} \label{eq:rank1_prelim}
\mathrm{rank }(\nabla \m) =1 \quad \text{and} \quad (\nabla \m)^2=0. 
\end{equation}
In fact, we observe that for 
arbitrary smooth fields
\begin{equation} \label{eq:div_lap}
\nabla \cdot \left( (\m \cdot \nabla) \m \right) = \tr (\nabla \m)^2 + (\m \cdot \nabla)( \nabla \cdot \m)
\end{equation}
and by using the assumptions 
\[
(\m \cdot \nabla) \m = (\nabla \m)^T \m= \left( (\nabla \m)^T - (\nabla \m) \right) \m = (\nabla \times \m) \times \m=0.
\]
Collecting all these facts we obtain
\begin{equation}\label{eq:kernel_m}
\m \in \ker (\nabla \m) \cap \ker (\nabla \m)^T
\end{equation}
and 
\begin{equation}\label{eq:traces}
\tr (\nabla \m)= \tr (\nabla \m)^2=0.
\end{equation}
Fixing a point $x$ we may assume after rotation $\m(x)=\ein_3$, so that by \eqref{eq:kernel_m} the matrix
$\nabla \m(x)$ is given by a $2 \times 2$ matrix $A$ such that $\tr A=0$ and $\tr A^2=0$ by \eqref{eq:traces}. 
It is easy to see that $A^2=0$ which implies $\det A=0$.

According to \eqref{eq:rank1_prelim} there exist local smooth unit vector fields $\bs X$ and $\bs Y$ and a function $\lambda$ such that 
\[
\nabla \m = \lambda \bs X \otimes \bs Y \quad \text{and} \quad \m = \bs X \times \bs Y.
\]
Now $\nabla \times \m = \lambda \, \bs X \times \bs Y = \lambda\, \m$, hence $\lambda=-1$ and $\nabla \m = - \bs X \otimes \bs Y$.

\medskip

Assuming $\bs X$ is constant so that after rotation $\bs X=\ein_1$, it follows that $\m=\m(x_1)$ and $m_1=0$.
Now the equation implies for the remaining components $m_2'=-m_3$ and $m_3'=m_2$, so that after a
rotation around the $\ein_1$ axis, $\m=\bs h$. 

To show that $\bs X= const.$ one may invoke the spectral theorem. In fact, symmetry of $\nabla \bs X$ follows from the
symmetry of $\nabla^2 \m = (\nabla \otimes \nabla) \m$ since
\[
- \nabla^2\m = \nabla \bs X \otimes \bs Y + \bs X \otimes \nabla \bs Y \quad \text{so that} \quad - \nabla^2\m \cdot \bs Y = \nabla \bs X.
\]
Next we use the identity
\[
 \m = - \Delta \m = (\nabla \cdot \bs X) \bs Y + (\bs X \cdot \nabla) \bs Y, 
 \]
which, after multiplication by $\bs Y$, implies that $\tr( \nabla \bs X) = \nabla \cdot \bs X=0$. Now since $\bs X \in \ker \nabla \m$
\[
 (\nabla \bs X) \m=\nabla (\bs X \cdot \m)= 0
\]
so that $\bs X,\m \in \ker (\nabla \bs X)$, and in turn $\nabla \bs X=0$. \end{proof}

\section{Higher order terms in the bifurcation solution}\label{sec:higherorder}

We have the following analytic expansion for the bifurcation solution
\[
\bs \m_s = \sum_{k=1}^\infty s^k\bs \varphi_k , \quad
\lambda_s = \lambda_0 + \sum_{k=1}^\infty s^k \nu_k 
\]
where the $\bs \varphi_k \in H_\Lambda^2$ satisfy
\begin{equation}\label{eq:orthogonal}
\scp{\bs \varphi_k, \bs \varphi_1} = \fint_{\Omega_\Lambda} \bs \varphi_k(x) \cdot \bs \varphi_1 (x) \dif x =0 \quad\text{for } k \ge 2.
\end{equation}
Inserted into \eqref{eq:main}, we obtain a hierarchy of equations in orders of $s$ that can be solved successively.
Solutions, which are fixed under the symmetry group, can be found by the Fourier method used for solving \eqref{eq:linear}.

The first order equation is the linearized equation
\[
-\lap \bs \varphi_1 + 2 \kappa \grad \times \bs \varphi_1 + \lambda_0 \bs \varphi_1 + \beta (\bs \varphi_1)_3=0
\]
which is certainly fulfilled. By normalizing in terms of $\scp{\abs{\bs \varphi_1}^2}^{1/2}$, the second order equation becomes
\begin{equation}\label{eq:second}
-\lap \bs \varphi_2 + 2 \kappa \grad \times \bs \varphi_2 + \lambda_0 \bs \varphi_2 + \beta (\bs \varphi_2)_3 + \nu_1\frac{\bs \varphi_1}{\scp{\abs{\bs \varphi_1}^2}^{1/2}} =0.
\end{equation}
Multiplying this equation with $\bs \varphi_1$ and integrating over $\Omega_\Lambda$ yields
\[
\scp{L_0\bs \varphi_2, \bs \varphi_1}+ \nu_1 \scp{\abs{\bs \varphi_1}^2}^{1/2}=0
\]
hence $\nu_1=0$ and in turn $L_0 \bs \varphi_2 = 0$ by \eqref{eq:second} which implies $\bs \varphi_2 \equiv 0$ by \eqref{eq:orthogonal}.

The third order equation is
\begin{equation}\label{eq:third}
-\lap \bs \varphi_3 + 2 \kappa \grad \times \bs \varphi_3 + \lambda_0 \bs \varphi_3 + \beta (\bs \varphi_3)_3 +\alpha\frac{\abs{\bs \varphi_1}^2 \bs \varphi_1}{\scp{\abs{\bs \varphi_1}^2}^{3/2}}+ \nu_2 \frac{\bs \varphi_1}{\scp{\abs{\bs \varphi_1}^2}^{1/2}} =0.
\end{equation}
Multiplying this equation with $\bs \varphi_1$ and integrating over $\Omega_\Lambda$ yields
\[
\nu_2 = -\alpha \frac{\scp{\abs{\bs \varphi_1}^4}}{\scp{\abs{\bs \varphi_1}^2}^2}
\]
where $\bs \varphi_3$ is a complicated function in $H_\Lambda^2$ satisfying $\scp{\bs \varphi_1, \bs \varphi_3}=0$.

The forth order equation
\[
-\lap \bs \varphi_4 + 2 \kappa \grad \times \bs \varphi_4 + \lambda_0 \bs \varphi_4 + \beta (\bs \varphi_4)_3 + \nu_3 \frac{\bs \varphi_1}{\scp{\abs{\bs \varphi_1}^2}^{1/2}} =0
\]
is identical to \eqref{eq:second}, so that $\nu_3=0$ and $\bs \varphi_4 \equiv 0$ by the same argument.

Substituting the bifurcation solution into the average energy \eqref{eq:averageEnergy} yields
\begin{align*}
E_{\Lambda}(\m_s,\lambda_s)&= \frac{s^2}{2}\scp{L_0\bs \varphi_1,\bs \varphi_1} + s^4\scp{L_0\bs \varphi_1,\bs \varphi_3}
+\frac{s^4}{\abs{\Omega_\Lambda}}\int_{\Omega_\Lambda} 
\frac{\nu_2}{2} \frac{\abs{\bs \varphi_1}^2}{\scp{\abs{\bs \varphi_1}^2}} + \frac{\alpha\abs{\bs \varphi_1}^4}{4\scp{\abs{\bs \varphi_1}^2}^2} \dif x + O(s^6)\\
 &= \frac{s^4}{4} \bra*{-\alpha\frac{\scp{\abs{\bs \varphi_1}^4}}{\scp{\abs{\bs \varphi_1}^2}^2}}+O(s^6).
\end{align*}
\nocite{*}
\bibliographystyle{abbrv}
\bibliography{references}

\end{document}